\newif\ifdraft\drafttrue
\crefname{section}{\S}{\S\S}
\crefname{subsection}{\S}{\S\S}
\crefname{subsubsection}{\S}{\S\S}
\newenvironment{comment}%
  {\begin{mdframed}[linecolor=white,backgroundcolor=lightgray]}%
  {\end{mdframed}}
\theoremstyle{plain}
\newtheorem{theorem}{Theorem}
\newtheorem{proposition}[theorem]{Proposition}
\newtheorem{example}[theorem]{Example}
\theoremstyle{definition}
\newtheorem{remark}[theorem]{Remark}
\newcommand{\pFont}[1]{\mathtt{#1}}
\renewcommand{\tt}[1]{\pFont{#1}}
\newcommand{\Roban}{\mathbf{RoBan}}
\newcommand{\cMeas}{\mathbf{Meas}}
\newcommand{\Krn}{\mathbf{Krn}}
\newcommand{\mMeas}{\mathscr{M}}
\newcommand{\Meas}{\mathcal{M}}
\newcommand{\N}{\mathbb{N}}
\newcommand{\R}{\mathbb{R}}
\newcommand{\PosDef}{\mathrm{PosDef}}
\newcommand{\Lps}[1][p]{\mathrm{L}_{#1}}
\newcommand{\one}{\mathbbm{1}}
\newcommand{\salg}[1][F]{\mathcal{#1}}
\newcommand{\Type}[1][T]{\pFont{#1}}
\newcommand{\tBool}{\pFont{bool}}
\newcommand{\tPosDef}[1][n]{\pFont{PosDef}(#1)}
\newcommand{\tReal}{\pFont{real}}
\newcommand{\tInt}{\pFont{int}}
\newcommand{\tMeas}{\pFont{M}}
\newcommand{\tUnit}{\pFont{unit}}
\newcommand{\tGamma}{\pFont{\Gamma}}
\newcommand{\context}[1][\Gamma]{\pFont{#1}\vdash}
\newcommand{\op}{^{\mathrm{op}}}
\newcommand{\inv}{^{-1}}
\newcommand{\supp}{\mathrm{supp}}
\newcommand{\kd}{^{\sigma}}
\newcommand\BB{{\cal B}}
\newcommand\set[2]{\{#1\mid#2\}}
\newcommand{\absv}[1]{\vert #1\vert}
\newcommand{\norm}[1]{\Vert #1\Vert}
\newcommand{\cint}{\hspace{-3pt}\int\hspace{-3pt}}  
\newcommand{\tp}{\otimes} 
\newcommand{\ptp}{~\widehat{\scalebox{1.1}{$\otimes$}}_{\pi}\hspace{2pt}}
\newcommand{\tetriangle}{%
  \mathrel{\vbox{\offinterlineskip\ialign{%
    \hfil##\hfil\cr
    \scalebox{0.9}[0.3]{$\triangle$}\cr
    \scalebox{1.1}{$\otimes$}\cr
}}}}  
\newcommand{\pptp}{\tetriangle_{\absv{\pi}}} 
\newcommand{\bigtetriangle}{%
  \mathrel{\vbox{\offinterlineskip\ialign{%
    \hfil##\hfil\cr
    \scalebox{1.1}[0.3]{$\triangle$}\cr
    \scalebox{1}{$\bigotimes$}\cr
}}}}  
\DeclareMathOperator*{\bigpptp}{\bigtetriangle_{\absv{\pi}}} 
\newcommand{\sem}[1]{\llbracket #1\rrbracket}
\newcommand{\dg}{^\dagger}
\definecolor{cornellred}{RGB}{196,18,48}
\definecolor{dartmouthgreen}{RGB}{0,112,60}
\begin{document}

%
\title{Semantics of higher-order probabilistic programs with conditioning}

\author{\IEEEauthorblockN{Fredrik Dahlqvist}
\IEEEauthorblockA{University College London\\
f.dahlqvist@ucl.ac.uk}
\and
\IEEEauthorblockN{Dexter Kozen}
\IEEEauthorblockA{Cornell University\\
dexter.kozen@cornell.edu}
}

\maketitle

\pagestyle{plain}

\begin{abstract}
We present a denotational semantics for higher-order probabilistic programs in terms of linear operators between Banach spaces. Our semantics is rooted in the classical theory of Banach spaces and their tensor products, but bears similarities with the well-known semantics of higher-order programs \`{a} la Scott through the use \emph{ordered} Banach spaces which allow definitions in terms of fixed points.  Being based on a monoidal rather than cartesian closed structure, our semantics effectively treats randomness as a \emph{resource}. 
\end{abstract}


%
\IEEEpeerreviewmaketitle

\section{Introduction}

Probabilistic programming has enjoyed a recent resurgence of interest driven by new applications in machine learning and statistical analysis of large datasets. The emergence of probabilistic programming languages such as Church and Anglican, which allow statisticians to construct and sample distributions and perform Bayesian inference, has created a need for sound semantic foundations and tools for specification and reasoning. Several recent works have approached this task from various perspectives \cite{heunen2017convenient,scibior2017denotational,staton2017commutative,ehrhard2017measurable}.

One of the earliest works on the semantics of probabilistic programs was \cite{K81c}, in which operational and denotational semantics were given for an idealized first-order imperative language with random number generation. Programs and data were interpreted over ordered Banach spaces. Programs were modelled as positive and continuous linear operators on an ordered Banach space of measures. In \cite{K85a}, an equivalent predicate-transformer semantics was introduced based on ordered Banach spaces of measurable functions and shown to be dual to the measure-transformer semantics of \cite{K81c}.

In this paper revisit this approach. We identify a symmetric monoidal closed category $\Roban$ of ordered Banach spaces and regular maps that can serve as a foundation for higher-order probabilistic programming with sampling, conditioning, and Bayesian inference. Bayesian inference can be viewed as reversing the computation of a probabilistic program to infer information about a prior distribution from observations. We model Bayesian inference as computing the \emph{adjoint} of a linear operator and show how it corresponds to computing a \emph{disintegration}.

The extension to higher types is achieved through a tensor product construction in the category $\Roban$ that gives symmetric monoidal closure. Although not cartesian, the construction does admit an adjunction with homsets enriched with an ordered Banach space structure acting as internalized exponentials. To accommodate conditioning and Bayesian inference, we introduce `Bayesian types', in which values are decorated with a prior distribution. 
Based on this foundation, we give a type system and denotational semantics for an idealized higher-order probabilistic language with sampling, conditioning, and Bayesian inference. 

We believe our approach should appeal to computer scientists, as it is true to traditional Scott-style denotational semantics (see \cref{sec:Scott}). It should also appeal to mathematicians, statisticians and machine learning theorists, as it uses very familiar mathematical objects from those fields. For example, a traditional perspective is that a Markov process is just a positive linear operator of norm 1 between certain Banach lattices \cite[Ch. 19]{aliprantis}. These are precisely the morphisms of our semantics. Similarly, classical ergodic theory, which is key to proving the correctness of important algorithms like Gibbs sampling, is an important part of the theory of these operators \cite{eisner2015operator}. Our semantics therefore connects seamlessly with a wealth of results from functional analysis, ergodic theory, statistics, etc. We believe that this will greatly simplify the task of validating stochastic machine learning algorithms.

We should also mention that our semantics fits well with the view of entropy (randomness) as a computation resource, like time or space. True random number generators can only produce randomness at a limited rate; physically, randomness \emph{is} a resource \cite{hayes2001computing}. Our type system, being resource-sensitive, has some nice crypographical properties: it is forbidden by construction to use a sample more than once; that is, each operation consuming a random sample requires a fresh sample (component in a tensor product). 

\textit{Related works:} Two very powerful semantics for higher-order probabilistic programming have been recently developed in the literature. In \cite{heunen2017convenient,scibior2017denotational}, a semantics is given in terms of so-called quasi-Borel spaces. These form a Cartesian closed category and admit a notion of probability distribution and of a Giry-like monad of probability distributions. In \cite{ehrhard2017measurable} the authors develop a semantics in terms of measurable cones. These form a cpo-enriched Cartesian closed category which provides a semantics to a probabilistic extension of PCF that includes conditioning. The key differences with the present semantics are the following. 
First, these proposed mathematical universes come directly from the world of theoretical computer science, whilst as mentioned above, our semantics is rooted in the traditional mathematics of the objects being constructed by the programs. Second, quasi-Borel spaces and measurable cones form Cartesian closed categories, whereas we work in a monoidal closed category, with obvious implications in terms of resources (e.g. we cannot copy a value). Finally, our semantics of conditioning has been reduced to a mathematically very simple, but also very general construction (taking the adjoint of a linear operator, see \cref{sec:semobserve}), whilst in \cite{heunen2017convenient} un-normalized posteriors and normalization constants are computed pointwise, and \cite{ehrhard2017measurable} effectively hard-codes the rejection-sampling algorithm into the semantics.

The reader will find the proofs of most results in the Appendix, together with some background material on measure theory and tensor products of Banach spaces.
\section{Background}

We start by describing the mathematical landscape of our semantics. We assume that the reader is familiar with the basic definitions of measure theory and of what a (real) Banach space is (see \cite[Ch. 4, 6, 8-11]{aliprantis} for a gentle introduction in the spirit of this paper). 

\subsection{\textbf{Banach spaces, Disintegration and Bayesian inversion}}\label{subsec:Meas}
 
\subsubsection{\uline{Some important Banach spaces}}
Two classes of Banach spaces will appear repeatedly in this paper. 

First, for any measurable space $(X,\salg)$ we introduce the space $\Meas(X,\salg)$, or simply $\Meas X$, as the set of signed measures of bounded variation over $X$. $\Meas X$ is a Banach space: the linear structure is inherited pointwise from $\R$, and the norm is given by the total variation; see \cite[Th. 10.56]{aliprantis} for a proof that the space is complete. 

Second, for a measured space $(X,\salg,\mu)$ and $1\leq p<\infty$, the \emph{Lebesgue space} $\Lps(X,\mu)$ is the set of equivalence classes of $\mu$-almost everywhere equal $p$-integrable real-valued functions, that is to say functions $f: X\to\R$ such that
\[
\int \absv{f}^p~d\mu<\infty.
\]
The linear structure is inherited pointwise from $\R$ and the norm is given by $\norm{f}_p=\int \absv{f}^p~d\mu$. When $p=\infty$, the space $\Lps[\infty](X,\mu)$ is defined as the set of equivalence classes of $\mu$-almost everywhere equal bounded real-valued functions with the norm given by the essential supremum:
\[
\norm{f}_\infty=\inf \set{C\geq 0}{\absv{f(x)}\leq C\ \mu\text{-a.e.}}
\]
A proof that Lebesgue spaces are complete can be found in \cite[Th. 13.5]{aliprantis}. 

 \subsubsection{\uline{Disintegration}}\label{sec:disintegration}  Measurable spaces and maps form the category $\cMeas$. We define the functor $\mMeas:\cMeas\to\cMeas$ by setting $\mMeas X$ to be the set of signed measures of bounded variation on $X$ equipped with the smallest $\sigma$-algebra making all evaluation maps $ev_B: \mMeas X\to\R$, $\mu\mapsto ev_B(\mu)=\mu(B)$ measurable and by setting $\mMeas f:\mMeas X\to \mMeas Y, \mu\mapsto f_*(\mu)$, the pushforward measure of $\mu$, for any $f: X\to Y$ \footnote{This is just a generalisation of the Giry monad on $\cMeas$ \cite{giry1982categorical}. Note that $\Meas X$ and $\mMeas X$ share the same underlying set, but the former is a Banach space and the latter a measurable space.}. We define a \emph{measure kernel} to be a measurable map $f: X\to \mMeas Y$ such that for all $x\in X$ $\norm{f(x)}<K$ for some fixed $K\in\R^+$. A \emph{probability kernel} is a measure kernel such that $\norm{f(x)}=1$ and $f(x)\geq 0$ for all $x\in X$. A measure $\mu\in\mMeas X$ can also be pushed-forward through a measure kernel $f:X\to \mMeas Y$ to give a measure in $\mMeas Y$ via the definition
\begin{equation}\label{eq:kernelpushforward}
f_*(\mu)(B)=\int_X f(x)(B)~d\mu
\end{equation}
which converges since $f(x)(B)\leq f(x)(X)\leq \absv{f(x)}(X)=\norm{f(x)}<K$.

With these definitions in place we can introduce the important notion of \emph{disintegration} which underlies the semantics of Bayesian conditioning (see \cref{sec:semobserve}). We provide a slightly simplified version of the definition which will be enough for our purpose (see \cite[Def. 1]{chang1997conditioning} for a very general definition). Intuitively, given a measurable map $f: X\to Y$ and a finite measure $\mu$ on $X$, we say that \emph{$\mu$ has a disintegration w.r.t. $f$} if the fibres $f\inv(y)$ of $f$ can be equipped with measures $f\dg_\mu(y)$ which average out to $\mu$ over the pushforward measure $f_*(\mu)$. Formally, the disintegration of $\mu$ w.r.t. to $f$ is a measure kernel $f\dg_\mu: Y\to \mMeas X$ such that
\begin{itemize}
\item $f_*(f\dg_\mu(y))\propto\delta_y\text{ for }f_*(\mu)\text{-almost all }y\in Y$
\item $(f\dg_\mu)_*(f_*(\mu))=\mu $
\end{itemize}
In fact \cite[Th. 3]{chang1997conditioning} shows that $f\dg_\mu$ can be chosen to be a probability kernel. As can be seen from the first condition, a disintegration  -- if it exists at all -- is only defined up to a null set for the pushforward measure. For sufficiently well-behaved spaces, for example standard Borel spaces \cite[17.35]{kechris} or more generally metric spaces with Radon measures \cite[Th. 1]{chang1997conditioning}, disintegrations can be shown to always exist.

\subsubsection{\uline{Bayesian inversion}}\label{sec:BayesianInverse}The notion of disintegration is key to the understanding of Bayesian conditioning. The traditional setup is as follows: we are given a probability kernel $f: X\to \mMeas Y$ where $X$ is regarded as a parameter space and $f$ is regarded as a parametrized statistical model on $Y$, a space of observable values. We also start with a probability distribution $\mu$ on $X$ (the prior) which is regarded as the current state of belief of where the `true' parameters of the model lie. The problem is, given an observation $y\in Y$, to update the state of belief to a new distribution (the posterior) reflecting the observation. We must therefore find a kernel going in the opposite direction $f\dg_\mu: Y\to\mMeas X$. As shown in \cite{2017:FoSSaCS,MFPS2018} this reverse kernel can be built using a disintegration as follows. First we define a joint distribution $\gamma\in\mMeas(X\times Y)$ defined by 
\[
\gamma(A\times B)=\int_A f(x)(B)~d\mu,
\]
The Bayesian inverse $f\dg$, if it exists, is given by the probability kernel
\[
f\dg_\mu=(\pi_X)_* \circ  (\pi_Y)\dg_\gamma
\]
where $(\pi_Y)\dg_\gamma$ is the disintegration of the measure $\gamma$ along the projection $\pi_Y: X\times Y\to Y$ (it can be assumed to be a probability kernel). This construction clearly generalizes to all measure kernels.

\subsection{\textbf{Ordered Banach spaces}}\label{subsec:OrdBan}

\subsubsection{\uline{Regular Ordered Banach spaces}}\label{sec:robanobj}
An \emph{ordered vector space} $V$ is a vector space together with a partial order $\le$ which is compatible with the linear structure in the sense that for all $u,v,w\in V, \lambda\in \R^+$ 
\begin{align*}
u\hspace{-1pt}\le \hspace{-1pt}v\Rightarrow u+w\hspace{-2pt}\le \hspace{-2pt} v+w \qquad\text{and}\qquad u\hspace{-1pt}\le\hspace{-1pt} v\Rightarrow \lambda u\hspace{-1pt}\le \hspace{-1pt}\lambda v
\end{align*}
A vector $v$ in an ordered vector space $V$ is called \emph{positive} if $v\geq 0$ and the collection of all positive vectors is called the \emph{positive cone} of $V$ and denoted $V^+$. The positive cone is said to be \emph{generating} if $V=V^+ - V^+$, that is to say if every vector can be expressed as the difference of two positive vectors.

An \emph{ordered normed vector space} $V$ is an ordered vector space in which the positive cone $V^+$ is closed for the topology generated by the norm. A subset of the positive cone of particular importance will be the \emph{positive unit ball} $B^+(V)=\{v\geq 0 : \norm{v}\leq 1\}$. An \emph{ordered Banach space} is an ordered normed vector space which is complete. We can now describe the central class of object of this work: an ordered normed space is said to be \emph{regular} if it satisfies \cite[Ch. 9]{wong1973partially}:

\begin{enumerate}[label=R\arabic*]
\item \label{R1} if $-y\leq x\leq y$ then $\norm{x}\leq\norm{y}$
\item \label{R2} $\norm{x}=\inf\{\norm{y}:-y\leq x\leq y\}$
\end{enumerate} 
In particular, a regular ordered Banach space is an ordered Banach space which is regular. A few comments are in order. First note that if $-y\leq y$ then $0\leq 2y$, and thus $y$ is positive, so  \ref{R2} says that the norm of any vector can be approximated arbitrarily well by the norm of positive vectors. Note also that \ref{R2} implies that the positive cone is generating: for any $x\in V$, fix $\epsilon>0$, then by \ref{R2} there exists $y$ with $-y\leq x\leq y$ whose norm is $\epsilon$-close to that of $x$. Since
$
x=\frac{y+x}{2}-\frac{y-x}{2}
$, 
and since it follows from $-y\leq x\leq y$ that both $y+x$ and $y-x$ are positive, $x$ can indeed be expressed as the difference of two positive vectors. Regularity can be understood as the fact that the space is fully characterised by its positive unit ball \cite{min1983exponential}.

\subsubsection{\underline{Regular operators and $\Roban$}}\label{sec:regoperators}
As mentioned above, regular ordered Banach spaces are determined in a very strong sense by their positive cone which is generating and determines the norm (axiom R2). It is therefore natural to consider linear operators $f: U\to V$ between regular ordered Banach spaces which send positive vectors to positive vectors, i.e. such that $u\geq 0\Rightarrow f(u)\geq 0$. Such operators are called \emph{positive operators} and constitute a field of mathematical research in their own right \cite{zaanen2012introduction,aliprantis2006positive}. The collection $[U,V]_p$ of positive operators between two regular ordered Banach spaces clearly does not form a vector space, since it is not closed under scalar multiplication by negative reals. We therefore consider the span of this collection, that is to say the operators $f: U\to V$ which can be expressed as the difference between two positive operators, i.e. $f=h-g$ with $h,g\in [U,V]_p$. Such operators are called \emph{regular operators}, and we define the category $\Roban$ as \emph{the category whose objects are regular ordered Banach spaces and whose morphisms are regular operators}. Regular operators have the following important properties.

\begin{proposition}\label{prop:regbounded}
Regular operators on regular ordered Banach spaces are (norm) bounded.
\end{proposition}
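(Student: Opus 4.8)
The plan is to reduce the statement to the case of positive operators and then show that every positive operator is automatically bounded, using the regularity axioms \ref{R1} and \ref{R2} together with the completeness of the spaces. Since a regular operator $f$ is by definition a difference $f = h - g$ of positive operators $h, g \in [U,V]_p$, and since the bounded operators $U \to V$ form a vector space (so a difference of bounded operators is bounded), it suffices to prove that every positive operator $h \colon U \to V$ is bounded.

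First I would observe that it is enough to bound $h$ on the positive cone. Indeed, suppose $\norm{h(y)} \le M \norm{y}$ for all $y \ge 0$. Given an arbitrary $x \in U$ and $\epsilon > 0$, axiom \ref{R2} provides $y$ with $-y \le x \le y$ and $\norm{y} \le \norm{x} + \epsilon$; note that $-y \le y$ forces $y \ge 0$. Applying the positive operator $h$ to the inequalities $y - x \ge 0$ and $y + x \ge 0$ yields $-h(y) \le h(x) \le h(y)$, so axiom \ref{R1} in $V$ gives $\norm{h(x)} \le \norm{h(y)} \le M(\norm{x} + \epsilon)$. Letting $\epsilon \to 0$ shows $\norm{h} \le M$, so the whole problem collapses to a uniform estimate on $U^+$.

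The heart of the argument is therefore to bound $h$ on $U^+$, which I would do by contradiction using a series-summation (gliding hump) trick. Suppose $h$ were unbounded on the positive cone; then there would exist positive vectors $y_n \ge 0$ with $\norm{y_n} \le 1$ but $\norm{h(y_n)} \ge 4^n$. Because $U$ is complete and $\sum_n 2^{-n}\norm{y_n} \le \sum_n 2^{-n} < \infty$, the series $z = \sum_n 2^{-n} y_n$ converges in $U$. Since every partial sum dominates $2^{-n} y_n$ and the positive cone $U^+$ is closed (by definition of an ordered normed space), the limit satisfies $z \ge 2^{-n} y_n \ge 0$ for every $n$. Applying the positive operator $h$ gives $0 \le 2^{-n} h(y_n) \le h(z)$, whence $-h(z) \le 2^{-n} h(y_n) \le h(z)$, and \ref{R1} yields $2^{-n} \norm{h(y_n)} \le \norm{h(z)}$. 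Combined with $\norm{h(y_n)} \ge 4^n$ this forces $2^n \le \norm{h(z)}$ for all $n$, contradicting the finiteness of $\norm{h(z)}$.

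The main obstacle is this last step: manufacturing a single element $z$ that simultaneously dominates scaled copies of all the $y_n$, which is precisely where completeness of $U$ and closedness of the cone are indispensable. Once $z$ exists, the regularity axioms merely do the bookkeeping, converting order bounds into norm bounds via \ref{R1} and extending the positive-cone estimate to the whole space via \ref{R2}. The one subtlety I expect to verify carefully is that the tail sums $z - 2^{-n}y_n = \sum_{m \ne n} 2^{-m} y_m$ remain positive in the limit, which follows because they are limits of finite positive combinations and $U^+$ is topologically closed.
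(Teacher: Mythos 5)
Your proof is correct and follows essentially the same route as the paper's: reduce to positive operators, then derive a contradiction by summing a convergent series of positive unit vectors whose images have rapidly growing norms, using completeness, positivity, and axiom \ref{R1} (your $2^{-n}$/$4^n$ weights play the role of the paper's $1/n^2$/$n^3$). You are in fact more explicit than the paper on two points it glosses over --- the reduction from the positive cone to all of $U$ via \ref{R2}, and the use of closedness of $U^+$ to pass the order bound $z \ge 2^{-n}y_n$ to the limit --- which is a welcome tightening rather than a divergence.
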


\begin{theorem}[\cite{min1983exponential}]\label{Th:internalhom}
If $U,V$ are regular ordered Banach spaces and $[U,V]$ is equipped with the obvious linear structure, pointwise order and the \emph{regular norm}
\[
\norm{f}_r=\inf\left\{\norm{g}: -g\leq f\leq g \right\}
\]
where $\norm{g}=\sup\{\norm{g(u)} : \norm{u}\leq 1\}$ is the usual operator norm, then $[U,V]$ is a regular ordered Banach space. 
\end{theorem}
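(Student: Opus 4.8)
The plan is to check, in order: (i) that $[U,V]$ is a vector space whose positive cone $[U,V]^+$ (the positive operators) is generating; (ii) that $\norm{\cdot}_r$ is a norm; (iii) that the space is complete; and (iv) that \ref{R1} and \ref{R2} hold. Two auxiliary facts make almost everything routine, so I would establish them first: the regular norm dominates the operator norm, $\norm{f}\le\norm{f}_r$, and the two coincide on positive operators, $\norm{h}_r=\norm{h}$ for $h\ge0$.

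For domination, fix $f$ with $-h\le f\le h$ (forcing $h\ge0$) and $u$ with $\norm{u}\le1$. Using \ref{R2} in $U$, choose $w\ge0$ with $-w\le u\le w$ and $\norm{w}\le1+\epsilon$. Then $w\pm u\ge0$, and applying the operator inequality to these positive vectors and combining gives $-h(w)\le f(u)\le h(w)$; \ref{R1} in $V$ yields $\norm{f(u)}\le\norm{h(w)}\le\norm{h}(1+\epsilon)$, so $\norm{f}\le\norm{h}$ and hence $\norm{f}\le\norm{f}_r$. The coincidence on positive operators is then immediate: for $h\ge0$ we have $\norm{h}_r\le\norm{h}$ (as $h$ dominates itself) and $\norm{h}\le\norm{h}_r$ (domination), so they are equal. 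Finiteness of $\norm{f}_r$ follows by writing $f=h-g$ with $h,g\ge0$ and noting $-(h+g)\le f\le h+g$; positive-definiteness follows from domination; absolute homogeneity and the triangle inequality follow by manipulating dominating operators (using $-g\le f\le g\iff-g\le-f\le g$ and additivity of such bounds). Axiom \ref{R1} for $[U,V]$ holds because $-k\le f\le k$ implies $\{g:-g\le k\le g\}\subseteq\{g:-g\le f\le g\}$, whence $\norm{f}_r\le\norm{k}_r$; axiom \ref{R2} is immediate, since every $h$ with $-h\le f\le h$ is positive and therefore satisfies $\norm{h}_r=\norm{h}$, so the infimum defining \ref{R2} coincides with that defining $\norm{f}_r$.

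The substantial step is completeness. Given a $\norm{\cdot}_r$-Cauchy sequence $(f_n)$, domination makes it Cauchy in operator norm, so it converges to some bounded operator $f$ in the complete space of bounded operators (\cref{prop:regbounded} places the $f_n$ there). I would then pass to a subsequence $(f_{N_k})$ with $\norm{f_{N_{k+1}}-f_{N_k}}_r<2^{-k}$ and pick positive operators $h_k$ with $-h_k\le f_{N_{k+1}}-f_{N_k}\le h_k$ and $\norm{h_k}<2^{-k+1}$. Since $\sum_k\norm{h_k}<\infty$ and the positive operators are closed in operator norm (the cone $V^+$ being closed), the tails $H_j=\sum_{k\ge j}h_k$ exist as positive operators. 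Telescoping and passing the order inequalities through the operator-norm limit (again using closedness of $V^+$) gives $-H_j\le f-f_{N_j}\le H_j$; this shows $f-f_{N_1}$ is dominated by a positive operator, hence regular, so $f\in[U,V]$, and it gives $\norm{f-f_{N_j}}_r\le\norm{H_j}\le\sum_{k\ge j}\norm{h_k}\to0$, so $f_{N_j}\to f$ in regular norm and therefore $f_n\to f$. Finally, the positive cone is closed in $\norm{\cdot}_r$ because domination reduces this to closedness in operator norm, which follows from $V^+$ being closed.

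The hard part will be the completeness argument, and specifically the transfer of the order relations $-H_j\le f-f_{N_j}\le H_j$ across the limit: the order on operators is only a pointwise-on-the-positive-cone relation, so these inequalities must be obtained by evaluating at each $u\ge0$ and invoking closedness of $V^+$ in $V$, rather than by any abstract limiting principle in $[U,V]$ itself. Everything else is bookkeeping once domination and the positive-operator coincidence are in hand.
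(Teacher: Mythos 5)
There is nothing in the paper to compare against: the theorem is quoted from \cite{min1983exponential}, and no proof of it appears in the paper's appendix. Judged on its own, your proof is correct and is essentially the standard argument for this result. The domination lemma is right: applying $h\pm f\ge 0$ to the positive vectors $w\pm u$ and summing in pairs gives $-h(w)\le f(u)\le h(w)$, and \ref{R1} in $V$ together with boundedness of positive operators (\cref{prop:regbounded}) yields $\norm{f}\le\norm{h}$, hence $\norm{f}\le\norm{f}_r$; the coincidence $\norm{h}_r=\norm{h}$ for $h\ge 0$ then makes the norm axioms, \ref{R1}, and \ref{R2} routine exactly as you describe. The completeness proof --- rapidly Cauchy subsequence, summable positive dominators $h_k$, tails $H_j$ positive because $V^+$ is norm-closed, and the order bounds $-H_j\le f-f_{N_j}\le H_j$ transferred through the operator-norm limit by evaluating on $U^+$ --- is the correct route, and you rightly identify that this transfer must be carried out pointwise on the cone, since the operator order has no abstract limiting principle of its own. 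Only two small pieces of bookkeeping are left implicit: antisymmetry of the pointwise order on $[U,V]$ requires the cone of $U$ to be generating (which, as the paper notes, follows from \ref{R2}), and the selection of $h_k$ with $\norm{h_k}<2^{-k+1}$ should be justified by noting that the infimum defining $\norm{\cdot}_r$ can be approached within $2^{-k}$, though neither point affects the soundness of the argument.
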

This result justifies the following notation: we will denote the regular ordered Banach space of operators between the regular ordered Banach spaces $U,V$ by $[U,V]$.

\subsubsection{\uline{Banach lattices}} \label{sec:banlattices} We now describe a particularly important class of regular ordered Banach spaces: the class of Banach lattices. Although this class of objects lacks the categorical closure properties that we seek (see \cref{subsec:ROBanCat}), most of the objects we will be dealing with are Banach lattices.

An ordered vector space $(V,\leq)$ is a \emph{Riesz space} if its partial order is a lattice. This allows the definition of the \emph{positive and negative part of a vector} $v\in V$ as
$v^+=v\vee 0, v^-=(-v)\vee 0$ and its \emph{modulus} as $\absv{v}=v\vee(-v)$. Note that $v=v^+ - v^-$, with $v^+,v^-$ positive, and the positive cone of a Riesz space is thus generating. A Riesz space is \emph{order complete} or \emph{Dedekind-complete} (resp. $\sigma$-order complete or $\sigma$-Dedekind complete) if every non-empty (resp. non-empty countable) subset of $V$ which is order bounded has a supremum\footnote{Order-completeness was called \emph{conditional completeness} in \cite{K81c}}.
A \emph{normed Riesz space} is a Riesz space equipped with a \emph{lattice norm}, i.e.\@ norm satisfying axiom R1 above.
A normed Riesz space is called a \emph{Banach lattice} if it is (norm-) complete. As stated, Banach lattices form a special class of regular ordered Banach spaces:

\begin{proposition}
Banach lattices are regular.
\end{proposition}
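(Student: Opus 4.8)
The plan is to verify the two regularity axioms \ref{R1} and \ref{R2} directly from the Riesz (lattice) structure, with the bulk of the work concentrated in \ref{R2}. Axiom \ref{R1} requires nothing: by definition a Banach lattice carries a lattice norm, and the excerpt takes this to be exactly the statement of \ref{R1} (equivalently, $\absv{x}\le y$ implies $\norm{x}\le\norm{y}$, using that $-y\le x\le y$ is the same as $\absv{x}\le y$ in a Riesz space). So everything reduces to establishing $\norm{x}=\inf\{\norm{y}:-y\leq x\leq y\}$.

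One inequality is immediate. For every $y$ satisfying $-y\le x\le y$, axiom \ref{R1} gives $\norm{x}\le\norm{y}$; hence $\norm{x}$ is a lower bound for the set, and $\norm{x}\le\inf\{\norm{y}:-y\leq x\leq y\}$. For the reverse inequality I would exhibit a single feasible witness realizing the value $\norm{x}$. The natural candidate is the modulus $\absv{x}=x\vee(-x)$: since $\absv{x}\ge x$ and $\absv{x}\ge -x$, we have $-\absv{x}\le x\le\absv{x}$, so $\absv{x}$ lies in the constraint set, whence $\inf\{\norm{y}:-y\leq x\leq y\}\le\norm{\absv{x}}$.

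It therefore suffices to prove the identity $\norm{\absv{x}}=\norm{x}$, and this is the one step that deserves care. It does not follow from the bare inequality \ref{R1} alone; rather it is the characteristic property of a lattice norm, namely that $\absv{u}\le\absv{v}$ implies $\norm{u}\le\norm{v}$. Applying this in both directions, to $u=x,\,v=\absv{x}$ and to $u=\absv{x},\,v=x$ — using the idempotence $\absv{\absv{x}}=\absv{x}$ (valid because $\absv{x}\ge 0$, so $\absv{x}\vee(-\absv{x})=\absv{x}$), so that $x$ and $\absv{x}$ share the same modulus — yields $\norm{\absv{x}}\le\norm{x}$ and $\norm{x}\le\norm{\absv{x}}$, hence equality. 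Combining the two inequalities shows the infimum in \ref{R2} is attained at $y=\absv{x}$ and equals $\norm{x}$.

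I expect the only real subtlety to be exactly this identity $\norm{\absv{x}}=\norm{x}$: the monotonicity encoded in \ref{R1} controls $\norm{x}$ by $\norm{\absv{x}}$ from one side for free, but the matching bound $\norm{\absv{x}}\le\norm{x}$ genuinely relies on the full lattice-norm property (equivalently, invariance of the norm under taking moduli). Everything else is elementary order-theoretic manipulation, and completeness of the space plays no role in the argument.
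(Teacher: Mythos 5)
Your proof is correct, and there is nothing in the paper to compare it against: the proposition is stated without proof, as a standard fact imported from the Banach-lattice literature. Your argument is the canonical one — R1 gives $\norm{x}\leq\inf\{\norm{y}:-y\leq x\leq y\}$, and the modulus $\absv{x}$ is a feasible witness attaining the infimum once one knows $\norm{\absv{x}}=\norm{x}$ — and your observation that completeness plays no role is also right; the statement holds for any normed Riesz space in the standard sense.

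The subtlety you flag is genuine, and it is worth recording that you are right for a concrete reason: the paper's parenthetical definition of a lattice norm (``i.e.\ norm satisfying axiom R1 above'') is strictly weaker than the standard solidity condition $\absv{u}\leq\absv{v}\Rightarrow\norm{u}\leq\norm{v}$, and under that literal reading the proposition is actually false. Take $\R^2$ with the coordinatewise order and the (complete) norm
\[
\norm{(u,v)}=\tfrac{1}{2}\absv{u+v}+\tfrac{1}{4}\absv{u-v}.
\]
For $y\geq 0$ the order interval $[-y,y]$ is the convex hull of the four points $(\pm y_1,\pm y_2)$, and since $\norm{(y_1,-y_2)}\leq\norm{(y_1,y_2)}$ reduces to $\absv{y_1-y_2}\leq y_1+y_2$, convexity of the norm gives R1. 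Yet $\norm{(1,-1)}=\tfrac{1}{2}$ while $\absv{(1,-1)}=(1,1)$ has norm $1$, so the norm is not invariant under moduli, and R2 fails at $x=(1,-1)$: every $y$ with $-y\leq x\leq y$ satisfies $y\geq(1,1)$, hence $\norm{y}\geq 1>\tfrac{1}{2}$. So the step $\norm{\absv{x}}=\norm{x}$ is not a pedantic refinement but the entire content of the proposition, and the paper's definition of normed Riesz space must be read with the standard solidity condition in place of bare R1 — exactly as your proof does.
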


\begin{example}
Given a measurable space $(X,\salg)$, the space $\Meas(X,\salg)$ can be shown \cite[Th 10.56]{aliprantis} to be a Banach lattice. The Banach space structure was described above and the lattice structure is given by
\[
(\mu\vee \nu)(A)=\sup\{\mu(B)+\nu(A\setminus B)\mid B\text{ measurable }, B\subseteq A\}
\]
and dually for meets. The Hahn-Jordan decomposition theorem defines the positive and negative part of a measure in the Banach lattice $\Meas(X,\salg)$. 
\end{example}

\begin{example}\label{ex:LpSpaces1}
Given a measured space $(X,\salg,\mu)$ and $1\leq p\leq \infty$, the Lebesgue space $\Lps(X,\mu)$ is a Banach lattice with the pointwise order. In particular, for any $f\in \Lps(X,\mu)$, the positive and negative parts $f^+$ and $f^-$ of a function used in the definition of the Lebesgue integral defines the positive-negative decomposition of $f$ in the Banach lattice $\Lps(X,\mu)$.  We will say that $p,q\in \N\cup\{\infty\}$ are \emph{H\"{o}lder conjugate} if either of the following conditions hold: (i) $1<p,q<\infty$ and $\frac{1}{p}+\frac{1}{q}=1$, or (ii) $p=1$ and $q=\infty$, or (iii) $p=\infty$ and $q=1$.
\end{example}

The examples of Banach lattices described above are instances of an even better behaved class of objects called \emph{abstract Lebesgue spaces} or \emph{AL spaces}. They are defined by the following property of the norm: a Banach lattices $V$ is an AL space if for all $u,v\in V^+$
\begin{equation}\label{eq:AL}
\norm{u+v}=\norm{u}+\norm{v}\tag{AL}
\end{equation}
Not surprisingly, the Lebesgue spaces $\Lps[1](X,\mu)$ are examples of AL spaces, as are the Banach lattices $\Meas(X)$. 

\begin{theorem}[\cite{aliprantis2006positive}, Sec. 4.1]\label{Th:ALcomplete}
AL spaces are order-complete.
\end{theorem}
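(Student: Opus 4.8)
The plan is to prove that every non-empty order-bounded subset of an AL space $V$ has a supremum by a direct norm-completeness argument, with the additivity property \eqref{eq:AL} doing the essential work. First I would reduce to the case of a monotone net. Given a non-empty order-bounded $A\subseteq V$, I pass to the family $D$ of all finite joins $\bigvee F$ of non-empty finite $F\subseteq A$; these exist because $V$ is a Riesz space, they remain order bounded by any upper bound of $A$, and $D$ is upward directed. Since every upper bound of $A$ is an upper bound of $D$ and conversely, it suffices to produce $\sup D$. Writing $D$ as an increasing net $(d_\alpha)$, fixing a base index and subtracting its value, I may further assume I am looking at an increasing net $0\le y_\alpha\uparrow$ lying in the positive cone $V^+$ and bounded above by some $W\in V^+$.

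The heart of the argument is to show that such an increasing, order-bounded net is norm-Cauchy. Because the norm is a lattice norm (axiom \ref{R1}), it is monotone on $V^+$, so $(\norm{y_\alpha})$ is an increasing net of reals bounded above by $\norm{W}$ and hence converges to $s:=\sup_\alpha\norm{y_\alpha}$. Now for $\alpha\le\delta$ I write $y_\delta=y_\alpha+(y_\delta-y_\alpha)$ with both summands in $V^+$ and invoke \eqref{eq:AL} to obtain $\norm{y_\delta-y_\alpha}=\norm{y_\delta}-\norm{y_\alpha}$. Given $\epsilon>0$, I choose an index $\gamma$ with $\norm{y_\gamma}>s-\epsilon$ and, using directedness to dominate any $\alpha,\beta\ge\gamma$ by a common $\delta$, a triangle-inequality estimate yields $\norm{y_\alpha-y_\beta}<2\epsilon$. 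This is exactly where \eqref{eq:AL} is indispensable: it converts the order-theoretic control we have (telescoping differences along the increasing net) into norm control. By completeness of the Banach space $V$, the Cauchy net converges in norm to some $y\in V$.

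It then remains to identify the norm limit $y$ with the order supremum, and for this I would use that the positive cone $V^+$ is norm-closed (part of the definition of an ordered normed space). Fixing $\alpha$, every $y_\beta-y_\alpha$ with $\beta\ge\alpha$ lies in $V^+$ and converges in norm to $y-y_\alpha$, so $y-y_\alpha\in V^+$, i.e.\ $y\ge y_\alpha$; thus $y$ is an upper bound. If $z$ is any other upper bound, then each $z-y_\alpha\in V^+$ and $z-y_\alpha\to z-y$, so again $z-y\in V^+$ and $z\ge y$. Hence $y=\sup_\alpha y_\alpha$, and undoing the translation gives $\sup D=\sup A$.

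I expect the main obstacle to be not the final limit identification (routine once the cone is known to be closed), but the correct handling of the reduction to a monotone net together with the Cauchy estimate for general nets: the set $A$ need not be countable or directed, so one must argue with finite joins and cofinal (directed) indices rather than ordinary sequences, and the telescoping identity $\norm{y_\delta-y_\alpha}=\norm{y_\delta}-\norm{y_\alpha}$ must be applied only to comparable indices, with directedness supplying the common upper index $\delta$ in the two-sided Cauchy estimate.
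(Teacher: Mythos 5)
Your proof is correct, and it does more than the paper itself, which does not prove Theorem \ref{Th:ALcomplete} at all but quotes it from \cite{aliprantis2006positive}, Sec.~4.1. The textbook route behind that citation is indirect: one first shows that the norm of an AL-space is order continuous, and then invokes the general theorem that every Banach lattice with order continuous norm is Dedekind complete. Your argument is in effect a direct specialization of that proof to AL-spaces: the additivity \eqref{eq:AL} turns the key step---showing that an increasing order-bounded net is norm-Cauchy---into the one-line telescoping identity $\norm{y_\delta-y_\alpha}=\norm{y_\delta}-\norm{y_\alpha}$, whereas for a general order continuous norm this step needs a separate and less elementary argument. Your reductions are sound: passing to finite joins yields a directed set with exactly the same upper bounds as $A$; translating by a fixed base element reduces to a positive increasing net, with cofinality of the indices above the base element making the translation harmless; Cauchy \emph{nets} do converge in a Banach space (completeness for nets follows from sequential completeness by extracting a Cauchy sequence along suitable indices); and identifying the norm limit with the order supremum uses precisely the norm-closedness of the positive cone, which the paper builds into its definition of an ordered normed space (and which in any Banach lattice also follows from \ref{R1} and the uniform continuity of the lattice operations). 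What your direct approach buys is a short, self-contained proof tailored to the AL axiom; what the cited factorization buys is generality---Dedekind completeness for every Banach lattice with order continuous norm, e.g.\ $\Lps(X,\mu)$ for all $1\le p<\infty$, not only the AL case $p=1$.
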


\subsubsection{\uline{Bands}}\label{sec:bands}
The order structure of Riesz spaces gives rise to classes of subspaces which are far richer than the traditional linear subspaces. An \emph{ideal} of a Riesz space $V$ is a linear subspace $U\subseteq V$ with the property that if $\absv{u}\leq \absv{v}$ and $v\in U$ then $u\in U$. An ideal $U$ is called a \emph{band} when for every subset $D\subseteq U$ if $\bigvee D$ exists in $V$, then it also belongs to $U$. Every band in a Banach lattice is itself a Banach lattice. Of particular importance in what follows will be the \emph{principal band generated by an element $v\in V$}, which we denote $V_v$  and can be described explicitly by
\[
V_v=\{w\in V\mid (\absv{w}\wedge n\absv{v}) \uparrow \absv{w}\}
\]
\begin{example}\label{ex:Band}\
Given a measure $\mu\in \Meas X$, the band $(\Meas X)_\mu$ generated by $\mu$ is the set of signed measures of bounded variation which are absolutely continuous w.r.t. $\mu$ \cite[Th.~10.61]{aliprantis}. 
The ordered version of the Radon-Nikodym theorem states that $(\Meas X)_\mu \simeq \Lps[1](X,\mu)$ as Banach lattices \cite[Th.~13.19]{aliprantis}. 
\end{example}

\subsubsection{\uline{K\"{o}the duals}}\label{sec:Kothe} There are two modes of `convergence' in an ordered Banach space: \emph{order convergence} and \emph{norm convergence}. The latter is well-known, the former less so. Let $D$ be a directed set, and let $\{v_\alpha\}_{\alpha\in D}$ be a net in an ordered Banach space $V$.  We say that $\{v_\alpha\}$ \emph{converges in order to $v$} if there exists a \emph{decreasing} net $\{u_\alpha\}_{\alpha\in D}$ with $\bigwedge u_\alpha = 0$, notation $u_\alpha\downarrow 0$, such that
\[
-u_\alpha\leq v_\alpha-v\leq u_\alpha \text{ for all }\alpha\in D
\]
If the directed set $D$ is $\N$ we get the notion of \emph{order-convergent sequence}.  Order and norm convergence of sequences are disjoint concepts, i.e.\@ neither implies the other (see \cite[Ex.~15.2]{zaanen2012introduction} for two counter-examples). However if a sequence converges both in order and in norm then the limits are the same (see \cite[Th.~15.4]{zaanen2012introduction}). Moreover, for \emph{monotone} sequences norm convergence implies order convergence \cite[Th.~15.3]{zaanen2012introduction}.

It is well known that bounded operators are continuous, i.e. \@ preserve norm-converging sequences. The corresponding order-convergence concept is defined as follows: an operator $T: V\to W$ between Riesz spaces is said to be \emph{$\sigma$-order continuous} if whenever $v_n\downarrow 0$, $Tv_n\downarrow 0$\footnote{Equivalently if $v_n\uparrow v$, i.e. $v_n$ is an increasing sequence with supremum $v$, implies $Tv_n \uparrow Tv$. Note the similarity with Scott-continuity, the only difference being the condition that sequences must be order-bounded.}. We can thus consider two types of dual spaces on an ordered Banach space $V$: on the one hand we can consider the \emph{norm-dual}:
\[
V^*=\{f: V\to \R : f\text{ is norm-continuous}\}
\]
and the \emph{$\sigma$-order-dual}:
\[
V\kd=\{f: V\to \R : f\text{ is $\sigma$-order continuous and regular}\}
\]
The latter is also known as the \emph{K\"{o}the dual} of $V$ \cite{1951:dieudonneKothe,zaanen2012introduction}.

\begin{theorem}\label{Th:Kothe}
The K\"{o}the dual $V\kd$ of a regular ordered Banach space $V$ is an order-complete Banach lattice.
\end{theorem}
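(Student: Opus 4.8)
The plan is to exhibit $V\kd$ as a band inside the full order dual of $V$ and then inherit both the lattice structure and the order-completeness from that ambient space. Write $[V,\R]$ for the regular functionals on $V$. Since $\sigma$-order-continuous regular functionals are in particular regular, and regular functionals are bounded by \cref{prop:regbounded}, we have $V\kd\subseteq[V,\R]$. By \cref{Th:internalhom} (with codomain $\R$), $[V,\R]$ equipped with the regular norm is a regular ordered Banach space; in particular it is norm-complete, and by axiom \ref{R1} its norm is a lattice norm.

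First I would upgrade $[V,\R]$ from a regular ordered Banach space to a \emph{Dedekind complete Riesz space}. Axiom \ref{R2} forces the positive cone of $V$ to be generating (as observed after the axioms), so the Riesz--Kantorovich calculus applies: for regular $f$ the formula $f^+(v)=\sup\set{f(u)}{0\le u\le v}$ on $v\ge 0$ is finite (by \ref{R1} one has $\norm{u}\le\norm{v}$ whenever $0\le u\le v$, so the supremum is bounded by $\norm{f}\,\norm{v}$) and defines the least positive functional above $f$. These formulas make $[V,\R]$ a Riesz space, and the standard argument that order-bounded families of functionals admit pointwise suprema shows it is Dedekind complete. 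Together with the lattice norm and completeness, $[V,\R]$ is an order-complete Banach lattice.

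The heart of the proof is to show that $V\kd$ is a \emph{band} in $[V,\R]$. Linearity is immediate. The crucial lemma is that $V\kd$ is closed under the modulus, i.e.\ $f\in V\kd$ implies $\absv{f}\in V\kd$; equivalently, $f^+$ is $\sigma$-order continuous whenever $f$ is. This is where I expect the main difficulty: one must feed a sequence $v_n\downarrow 0$ into the supremum formula for $f^+$ and deduce $f^+(v_n)\downarrow 0$ from the $\sigma$-order continuity of $f$, an Ogasawara-style interchange-of-limits argument that is the real content of the statement. Granting it, the ideal property follows since $\absv{f(v_n)}\le\absv{f}(v_n)\le\absv{g}(v_n)\downarrow 0$ when $\absv{f}\le\absv{g}$ and $g\in V\kd$, and the band property (stability under suprema existing in $[V,\R]$) follows by a similar order-continuity argument applied to the supremum.

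Finally I would harvest the conclusion. A band in a Dedekind complete Riesz space is itself Dedekind complete, so $V\kd$ is order-complete. Moreover, in a Dedekind complete Riesz space every band is a projection band, so $V\kd$ is the range of a band projection $P$ with $0\le P\le I$; axiom \ref{R1} then gives $\absv{Px}\le\absv{x}$ and hence $\norm{Px}\le\norm{x}$, so $P$ is a contraction and its range $V\kd$ is norm-closed in $[V,\R]$, therefore complete. As a Riesz subspace, $V\kd$ inherits the lattice norm, so it is a Banach lattice, and being a band it is order-complete, which is exactly the claim.
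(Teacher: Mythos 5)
Your proof is correct modulo the band lemma you grant, and its skeleton coincides with the paper's: both of you first make the regular functionals $V^\sim=[V,\R]$ into an order-complete Riesz space that is also a Banach space (you via the Riesz--Kantorovich formulas plus \cref{Th:internalhom} with codomain $\R$; the paper by citing Zaanen Th.~20.2/Cor.~20.3 and Aliprantis--Burkinshaw Th.~4.74, with the same remark that only a generating positive cone is needed), and both of you then reduce everything to the statement that $V\kd$ is a band in $V^\sim$ --- which you rightly flag as the real analytic content but leave granted, while the paper discharges it by citing Zaanen Th.~22.2, so neither text actually proves it. The genuine divergence is in the norm-completeness of $V\kd$: the paper takes a Cauchy sequence, gets its norm limit $v\in V^\sim$, extracts an order-convergent subsequence (Zaanen Th.~15.6), upgrades it by a $\liminf$ construction to an increasing sequence $u_n\uparrow v$, and concludes $v\in V\kd$ by band-closure under existing suprema; you instead use that a band in a Dedekind complete Riesz space is a projection band, that the band projection $P$ satisfies $0\le P\le I$ and is therefore contractive for the lattice norm, so $V\kd=\ker(I-P)$ is norm-closed in the Banach space $V^\sim$. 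Your route is cleaner --- no subsequence extraction, no $\liminf$ rearrangement --- and it is sound, since the Dedekind completeness of $V^\sim$ that you have already established is exactly what Riesz's theorem on projection bands requires. Two small touch-ups: the inequality $\absv{Px}\le\absv{x}$ follows from $0\le P\le I$ via $\absv{Px}\le Px^++Px^-=P\absv{x}\le\absv{x}$, not from \ref{R1} alone (R1 is what then converts it into $\norm{Px}\le\norm{x}$); and to see that the regular norm is a lattice norm you need \ref{R2} as well as \ref{R1}, since R2 is what gives $\norm{\absv{y}}\le\norm{y}$. Finally, note that your appeal to Riesz--Kantorovich additivity on a domain that is not a Riesz space is precisely as delicate as the paper's corresponding claim about Zaanen's proof: the classical additivity argument for $f^+$ uses the Riesz decomposition property, not merely a generating cone, so your proposal and the paper's proof stand or fall together on that point.
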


\begin{example}
It is shown in e.g. \cite{zaanen2012introduction,ampba} that   
\begin{equation}
\Lps(X,\mu)\kd = \Lps[q](X,\mu)
\end{equation}
for \emph{any} H\"{o}lder conjugate pair  $1\leq p,q\leq \infty$. In particular the spaces $\Lps[1](X,\mu)$ and $\Lps[\infty](X,\mu)$ are K\"{o}the dual of each other. Note that they are \emph{not} ordinary duals.
\end{example}

\subsubsection{\uline{Categorical connections}.}\label{sec:catcon} We conclude this section by a summary of some results from \cite{MFPS2018} which provide a categorical connection between most of the topics covered so far. 

The category $\Krn$ is the category whose objects are pairs $(X,\mu)$ where $X$ is standard Borel spaces \cite{kechris} (in fact any class of measurable spaces for which disintegrations exist will do) and $\mu\in\mMeas X$. A morphism between $(X,\mu)$ and $(Y,\nu)$ is a measure kernel $f: X\to \mMeas Y$ such that $f_*(\mu)=\nu$ (where $f_*$ is defined in \eqref{eq:kernelpushforward}), in which case the morphism is denoted $f$ as well. As was shown in \cite{MFPS2018}, any two morphisms which disagree only on a null set can be identified, and the morphisms of $\Krn$ thus become \emph{equivalence classes} of almost everywhere equal measure kernels (see \cite{MFPS2018} for the technical details of this construction).

Now, we define some functors. First, as was shown in \cite{MFPS2018}, the  Bayesian inversion operation described in \cref{sec:BayesianInverse} defines a functor $(-)\dg: \Krn\to\Krn\op$ which leaves objects unchanged and sends a morphism $f:(X,\mu)\to(Y,\nu)$ to its Bayesian inverse $f\dg: (Y,\nu)\to (X,\mu)$ (we drop the subscript $\mu$ of $f\dg_\mu$ because it is made explicit from the typing). Note that $(f\dg)\dg=f$ \cite{MFPS2018}. We also define the functor $(-)\kd: \Roban\to\Roban\op$ which sends a regular ordered Banach space $X$ to its K\"{o}the dual, and a regular operator $T: U\to V$ to its adjoint $T\kd: V\kd\to U\kd$ defined in the usual way. Note that just as taking the K\"{o}the dual gives an order-complete space, the adjoint $T\kd$ of a regular operator is an order-continuous regular operator \cite[Ch. 26]{zaanen2012introduction}.

Connecting the categories, we define for each $1\leq p\leq \infty$ the functor $\Lps: \Krn\to \Roban\op$ which sends a $\Krn$-object $(X,\mu)$ to the Lebesgue space $\Lps(X,\mu)$ and a $\Krn$-arrow $f:(X,\mu)\to(Y,\nu)$ to the operator $\Lps f: \Lps(Y,\nu)\to\Lps(X,\mu)$, $\phi\mapsto\lambda x~.~\int_Y \phi ~df(x)$. We also define the functor $\Meas_{-}: \Krn\to\Roban$ which sends an object $(X,\mu)$ to the band $(\Meas X)_\mu$ and a morphism $f: (X,\mu)\to (Y,\nu)$ to the operator $\Meas f: (\Meas X)_\mu\to(\Meas Y)_\nu$, $\rho\mapsto \lambda B~.~ \int_{X} f(x)(B) ~d\rho$.

The functors $\Meas_{-}, \Lps[1]\circ(-)\dg$ and $(-)\kd\circ\Lps[\infty]$ of type $\Krn\to\Roban$ are related via natural transformations which play a major role in measure theory \cite{MFPS2018}:
\newcommand{\RN}{\mathrm{RN}}
\newcommand{\MR}{\mathrm{MR}}
\newcommand{\FR}{\mathrm{FR}}
\newcommand{\RR}{\mathrm{RR}}
\begin{itemize}
\item $\RN: \Meas_{-}\to \Lps[1]\circ(-)\dg$ acts at $(X,\mu)$ by sending a measure $\nu\ll\mu$ to its Radon-Nikodym derivative $\frac{d\mu}{d\nu}$.
\item $\MR: \Lps[1]\circ(-)\dg\to \Meas_{-}$ acts at $(X,\mu)$ by sending an $\Lps[1]$-map $f$ to its Measure Representation $f\mu$.
\item $\FR: \Meas_{-}\to (-)\kd\circ\Lps[\infty]$ acts at $(X,\mu)$ by sending a measure $\mu$ to its Functional Representation $\lambda\phi. \int \phi~d\mu$.
\item $\RR:  (-)\kd\circ\Lps[\infty]\to \Meas_{-}$ acts at $(X,\mu)$ by sending an $\Lps[\infty]$-functional $F$ to its Riesz Representation $\lambda B. F(1_B)$. 
\end{itemize}
The natural transformations $\RN$ and $\MR$ are inverse of each other, as are $\FR$ and $\RR$, proving natural isomorphisms between the three functors. We summarize these relationships in the following diagram:
\begin{equation}\label{diag:summary}
\xymatrix@R=8ex@C=10ex
{
 & \Roban\\
\Roban\op\ar[ur]^{(-)^\sigma} \ar@<-1ex>@{=>}[r]_-{\RR}  &  \ar@<-1ex>@{=>}[r]_-{\RN} \ar@<-2pt>@{=>}[l]_-{\FR}  & \Krn\op\ar[ul]_{\Lps[1]}\ar@<-2pt>@{=>}[l]_-{\MR}  \\
 &\Krn\ar[ur]_{(-)\dg}\ar[ul]^{\Lps[\infty]}\ar[uu]^>>>>>>>>{\Meas_{-}}
}
\end{equation}

\subsection{\textbf{Tensor products of ordered Banach spaces}}\label{subsec:tensor}

We start by describing the tensor product of vector spaces from the perspective of computer science. We will then discuss how the tensor product can be normed and ordered.

\subsubsection{\uline{Introduction to the tensor product}}\label{sec:tpVec}
As was already highlighted in \cite{K81c} in the case of probabilistic programming, and subsequently in the development of semantics for quantum programming languages (e.g. \cite{selinger2008fully}), it may be desirable to interpret programs as linear operators in a category of vector spaces. Indeed, this is precisely what this paper advocates for probabilistic programming languages. However, a difficulty quickly emerges if one wants to include higher-order features. Consider a map in two arguments $f: U\times V\to W$. The most basic facility provided by higher-order reasoning is the ability to \emph{curry} such a map and define the two curried maps
\[
\hat{f}: U\to W^V \qquad\text{and}\qquad \tilde{f}: V\to W^U
\]
by fixing one argument or the other. Since we want both curried map $\hat{f}$ and $\tilde{f}$ to be linear, it is easy to see that $f$ must be linear in each arguments separately, in particular 
\[
f(\lambda u,v)=\lambda f(u,v) \quad\text{and}\quad f(u,\lambda v)=\lambda f(u,v)=\lambda(u,v)
\]
Such a map is referred to as a \emph{bilinear} map, it is linear in each argument \emph{separately}. However being bilinear is incompatible with being linear: by definition of the product linear structure if $f$ were also linear we would have
\[
\lambda f(u,v)=f(\lambda(u,v))=f(\lambda u,\lambda v)=\lambda f(u,\lambda v)=\lambda^2 f(u,v)
\]
which is clearly a contradiction if $\lambda\neq 1$. Thus $f$ is not a valid morphism if we want our semantic universe to consist of linear maps between vector spaces. Fortunately, for any pair of vector spaces $U,V$ there exists a special object, the tensor product $U\tp V$, which \emph{linearizes} bilinear maps, i.e. such that any bilinear map $f: U\times V\to W$ corresponds to unique \emph{linear map} $\bar{f}:U\tp V\to W$ (and vice-versa). This can be phrased in terms of a universal property: there exists a universal bilinear map $\tp: U\times V\to U\tp V$, $(u,v)\mapsto u\tp v$ such that for any bilinear map $f: U\times V\to W$ there exists a unique linear map $\bar{f}:U\tp V\to W$  making the following diagram commute:
\begin{equation}\label{diag:tensorUP}
\xymatrix
{
U\times V\ar[r]_{\tp}\ar[d]_{f} & U\tp V\ar@{-->}[dl]^{\bar{f}} \\
W
}
\end{equation}
The tensor product can be built explicitly as follows: it is the free vector space over $U\times V$ quotiented by the following identities:
\begin{align}
&(u+u',v)=(u,v)+(u',v), \quad (u,v+v')=(u,v)+(u,v'), \nonumber \\
&(\lambda u,v)=(u,\lambda v)=\lambda(u,v) \label{eq:tensorconstruction}
\end{align}
It is not too hard to see that the last identity is precisely what is needed to fix the contradiction $\lambda f(u,v)=\lambda^2 f(u,v)$ described above and reconcile currying with linearity. By definition, an element $x\in U\tp V$ will be a (finite) linear combination of equivalence classes of the generators $(u,v)$ -- denoted $u\tp v$ -- under the identities \eqref{eq:tensorconstruction}, formally $x=\sum_i u_i\tp v_i$.

\subsubsection{\uline{Tensor product of Banach spaces}}\label{sec:tpBan} Suppose now that both $U$ and $V$ are Banach spaces, in particular that they carry a norm. How do we define a norm on $U\tp V$, and how do we ensure that the space is complete for this norm? For Hilbert spaces there is a straightforward construction, but since we shall be dealing with Banach spaces which are not Hilbert spaces, we will require the much more subtle theory of tensor products of Banach spaces originally developed by Grothendieck \cite{grothendieck1955tenseurs}. We refer to \cite{ryan2013introduction} for a good introduction. 

The initial difficulty with the construction of a norm is that by definition of the tensor product, each vector has many representations. For example, the vector $(2,2)\otimes (3,3)\in \R^2\tp\R^2$, i.e. the equivalence class of the pair $((2,2),(3,3))$ under the equations \eqref{eq:tensorconstruction}, can also be expressed as $(1,1)\otimes (6,6)$, and these representations are built from vectors with very different norms.  Assuming that we want the norm of the tensor be be defined from the norms of its components, which representation do we choose? There is no unique solution to this question, but Grothendieck proposed one extremal solution by defining for any $x\in U\tp V$
\begin{equation}\label{eq:projnorm}
\norm{x}_\pi=\inf\left\{\sum_{i=1}^n \norm{u_i}\norm{v_i} :  x=\sum_{i=1}^n u_i\otimes v_i\right\}
\end{equation}
This definition of $\norm{\cdot}_\pi$ defines a norm \cite[Prop 2.1]{ryan2013introduction} which is called the \emph{projective norm}. However, the space $U\tp V$ equipped with the projective norm is in general \emph{not complete}. One therefore \emph{defines} the \emph{projective tensor product} of two Banach spaces $U,V$ as the \emph{completion} of $U\tp V$ under the projective norm, i.e. the space of equivalence classes of Cauchy sequences in $U\tp V$ converging to the same point. This space will be denoted $U\ptp V$ and one can describe the projective norm of elements in this space as follows:
\[
\norm{x}_\pi\hspace{-2pt}=\hspace{-1pt}\inf\hspace{-2pt}\left\{\sum_{i=1}^\infty \norm{u_i}\norm{v_i} :  \sum_{i=1}^\infty \norm{u_i}\norm{v_i}\hspace{-1pt}<\hspace{-2pt}\infty, x\hspace{-2pt}=\hspace{-2pt}\sum_{i=1}^\infty\hspace{-1pt} u_i\hspace{-1pt}\otimes\hspace{-1pt} v_i\hspace{-1pt}\right\}
\]
In \cref{sec:tpVec} we saw how the tensor product can be used as a way to linearize bilinear maps. This property extends naturally to the normed case, and it can be shown that the projective tensor product $U\ptp V$ linearizes \emph{bounded bilinear maps} \cite[Th 2.9]{ryan2013introduction} in the sense that there exists a universal bounded bilinear maps $U\times V\to U\ptp V$ satisfying  the universal property of \eqref{diag:tensorUP} w.r.t bounded bilinear maps.

The projective tensor product of two Banach spaces is in general fairly inscrutable. However, one can explicitly describe projective tensor products involving important objects for our semantics. When one component is an $\Lps[1]$-space we have:

\begin{theorem}[Radon-Nikodym and \cite{ryan2013introduction} p. 43]\label{Th:pteL1L1}For finite measures $\mu,\nu$ on measurable spaces $X,Y$ respectively,
$(\Meas X)_\mu\ptp (\Meas Y)_\nu\simeq (\Meas(X\times Y))_{\mu\times\nu}$. 
\end{theorem}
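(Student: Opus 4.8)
The plan is to reduce the claim to the classical description of the projective tensor product of two $\Lps[1]$-spaces, using the ordered Radon--Nikodym theorem to pass between bands of measures and Lebesgue spaces. Concretely, \cref{ex:Band} supplies isometric Banach-lattice isomorphisms $(\Meas X)_\mu\simeq \Lps[1](X,\mu)$ and $(\Meas Y)_\nu\simeq \Lps[1](Y,\nu)$, each sending an absolutely continuous measure to its density against the reference measure; and likewise $(\Meas(X\times Y))_{\mu\times\nu}\simeq \Lps[1](X\times Y,\mu\times\nu)$.

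First I would use the fact that $\ptp$ is functorial in each argument and preserves isometric isomorphisms: a pair of bounded linear maps $T\colon U\to U'$, $S\colon V\to V'$ induces $T\otimes S$ on the algebraic tensor products, which by the universal property of \eqref{diag:tensorUP} extends to a bounded map $U\ptp V\to U'\ptp V'$ with $\norm{T\otimes S}\leq\norm{T}\norm{S}$; applying the same bound to $T\inv\otimes S\inv$ shows the extension is an isomorphism (isometric when $T,S$ are). Feeding in the two Radon--Nikodym isomorphisms gives
\[
(\Meas X)_\mu\ptp (\Meas Y)_\nu\;\simeq\;\Lps[1](X,\mu)\ptp \Lps[1](Y,\nu).
\]

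The substance of the argument is the classical identity $\Lps[1](X,\mu)\ptp \Lps[1](Y,\nu)\simeq \Lps[1](X\times Y,\mu\times\nu)$ from \cite{ryan2013introduction}. The canonical map linearizes the bounded bilinear map $(f,g)\mapsto f\otimes g$ with $(f\otimes g)(x,y)=f(x)g(y)$; Tonelli gives $\norm{f\otimes g}_1=\norm{f}_1\norm{g}_1$, so the induced map is a contraction, and both surjectivity and the reverse norm estimate follow by approximating an arbitrary element of $\Lps[1](X\times Y,\mu\times\nu)$ by finite sums of products of simple functions. Composing this with the third Radon--Nikodym isomorphism closes the chain and yields the stated isomorphism.

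The step I expect to demand the most care is not the isomorphism of Banach spaces, which the two cited results provide, but verifying that the composite is an \emph{order} isomorphism, so that $\simeq$ holds as Banach lattices and not merely as normed spaces. I would track a positive vector along the chain: the Radon--Nikodym maps are lattice isomorphisms by \cref{ex:Band}, and the $\Lps[1]$-tensor map carries an elementary tensor $f\otimes g$ of a.e.\ non-negative densities to the a.e.\ non-negative function $f(x)g(y)$. The remaining point---that the positive cone one equips $U\ptp V$ with (the closed hull of such positive elementary tensors) is carried exactly onto the a.e.\ non-negative cone of $\Lps[1](X\times Y,\mu\times\nu)$---is where density of elementary tensors and closedness of the cones must be combined.
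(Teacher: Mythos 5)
Your proof is correct, but for the central identity it takes a genuinely different route from the paper. Both arguments begin the same way, invoking Radon--Nikodym (the paper cites Th.~\ref{Th:RN}, you cite \cref{ex:Band}) to replace the bands of measures by $\Lps[1]$-spaces; your explicit functoriality step for transporting these isomorphisms across $\ptp$ is left implicit in the paper but is a correct and worthwhile addition. Where you diverge is in proving $\Lps[1](X,\mu)\ptp \Lps[1](Y,\nu)\simeq \Lps[1](X\times Y,\mu\times\nu)$: the paper factors through vector-valued integration, using Th.~\ref{Th:pte:L1} to identify $\Lps[1](X,\mu)\ptp \Lps[1](Y,\nu)$ with the Lebesgue--Bochner space $\Lps[1](X,\Lps[1](Y,\nu),\mu)$ and then uncurrying $f_u\mapsto \phi_u(x,y)=f_u(x)(y)$, with Bochner's integrability theorem guaranteeing $\phi_u\in\Lps[1](X\times Y,\mu\times\nu)$; you instead give a direct, elementary argument via the canonical bilinear map, Tonelli, and approximation by sums of products of simple functions. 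The paper's route buys brevity by leaning on a theorem already stated in its appendix and lets the Bochner machinery absorb the measurability bookkeeping; your route is more self-contained, but note that the one point you compress --- ``surjectivity and the reverse norm estimate follow by approximating'' --- deserves care: dense range plus contractivity alone does not give surjectivity, so you must first establish the lower bound $\norm{u}_\pi\leq\norm{Ju}_1$ on tensors of simple functions (by refining the rectangles involved into a disjoint grid, where both norms can be computed exactly), which then gives closed range and hence surjectivity. Finally, your concern about the order structure is legitimate and actually goes beyond the paper: the paper's proof only exhibits an isometric isomorphism of Banach spaces, and checking that positive densities correspond to positive functions (so that one gets a lattice isomorphism) is a sensible supplement, not a correction.
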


The operator $\times:\Meas X\times\Meas Y\to \Meas(X\times Y)$ taking the product of measures is bilinear. Therefore there exists a unique map $\Meas X\ptp \Meas Y\to\Meas(X\times Y)$ mapping any tensor $\mu\ptp\nu$ to the product measure. In this sense, $\Meas{X}\ptp \Meas{Y}$ is the subspace of $\Meas{(X\times Y)}$ which is generated by taking linear combinations of product measures, and then closing under the projective norm.

\begin{theorem}\label{Th:pte:MM}
The projective tensor product $\Meas{X}\ptp \Meas{Y}$ is isometrically embedded in $\Meas{(X\times Y)}$. 
\end{theorem}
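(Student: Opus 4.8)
The plan is to exhibit the canonical map $J:\Meas{X}\ptp\Meas{Y}\to\Meas{(X\times Y)}$ and show it is norm-preserving. Since $\norm{\mu\times\nu}=\norm{\mu}\,\norm{\nu}$ in total variation, the product map $\times$ is bounded bilinear of norm $1$, so by the universal property of the projective tensor product there is a unique linear $J$ with $J(\mu\tp\nu)=\mu\times\nu$ and $\norm{J}\le 1$; concretely $\norm{J(x)}\le\norm{x}_\pi$ because $\norm{\sum_i\mu_i\times\nu_i}\le\sum_i\norm{\mu_i}\norm{\nu_i}$ for every representation of $x$. The whole difficulty is the reverse inequality $\norm{x}_\pi\le\norm{J(x)}$, and my strategy is to reduce it to the $\Lps[1]$ case already recorded in \cref{Th:pteL1L1}.

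First I would localise $x$ to a pair of principal bands. Any $x\in\Meas{X}\ptp\Meas{Y}$ admits an absolutely convergent representation $x=\sum_{i}\mu_i\tp\nu_i$ with $\sum_i\norm{\mu_i}\norm{\nu_i}<\infty$. Setting $\mu=\sum_i 2^{-i}\absv{\mu_i}/(1+\norm{\mu_i})$ and $\nu$ analogously yields finite positive measures with $\mu_i\ll\mu$ and $\nu_i\ll\nu$ for all $i$, so every $\mu_i$ lies in the band $(\Meas X)_\mu$ and every $\nu_i$ in $(\Meas Y)_\nu$. As these bands carry the restricted total-variation norm, the same series converges in $(\Meas X)_\mu\ptp(\Meas Y)_\nu$, and $x$ is the image of that element under the natural map into $\Meas{X}\ptp\Meas{Y}$.

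The key step is that this natural map is \emph{isometric}. Because $\Meas X$ and $\Meas Y$ are AL spaces they are order-complete by \cref{Th:ALcomplete}, so every band is a projection band; in particular there are positive band projections $P:\Meas X\to(\Meas X)_\mu$ and $Q:\Meas Y\to(\Meas Y)_\nu$ (concretely, the absolutely continuous part of the Lebesgue decomposition), which satisfy $\absv{P\rho}\le\absv{\rho}$ and are therefore contractions by axiom R1. By the metric mapping property of the projective tensor product, $P\ptp Q$ is a norm-$1$ projection of $\Meas{X}\ptp\Meas{Y}$ onto $(\Meas X)_\mu\ptp(\Meas Y)_\nu$. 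The inclusion $\iota$ of the band tensor product is itself a contraction, since any representation inside the bands is also a representation in the ambient space; composing with $P\ptp Q$ returns the identity, so $\norm{x}_{\text{band}}=\norm{(P\ptp Q)\iota x}\le\norm{\iota x}_\pi\le\norm{x}_{\text{band}}$, forcing the two projective norms to coincide.

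It then remains to recognise the band tensor product. By \cref{Th:pteL1L1} there is an isometric isomorphism $(\Meas X)_\mu\ptp(\Meas Y)_\nu\simeq(\Meas{(X\times Y)})_{\mu\times\nu}$, and since it sends each $\mu_i\tp\nu_i$ to the product measure $\mu_i\times\nu_i$ it agrees with $J$ on elementary tensors, hence everywhere by density and continuity. As the norm on the band $(\Meas{(X\times Y)})_{\mu\times\nu}$ is just the total variation inherited from $\Meas{(X\times Y)}$, chaining the three identifications yields $\norm{x}_\pi=\norm{J(x)}$. I expect the main obstacle to be precisely the isometry of the band inclusion: one must verify both that the band projections are genuinely contractive (this is where order-completeness of AL spaces enters) and that the metric mapping property $\norm{P\ptp Q}=\norm{P}\norm{Q}$ applies, and then check that the resulting band-level isomorphism is compatible with the globally defined $J$.
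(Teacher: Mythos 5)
Your proposal is correct, but it follows a genuinely different route from the paper. The paper's proof invokes Th.~\ref{Th:pte:M} (the characterisation $\Meas{X}\ptp V\simeq$ the space of $V$-valued measures with the Radon--Nikodym property) with $V=\Meas{Y}$: an element $u$ of the tensor product is identified with a vector measure $\mu_u\in\Meas{(X,\Meas{Y})}$, and the embedding is defined on rectangles by $J(\mu_u)(A\times B)=\mu_u(A)(B)$, after which the paper simply asserts $\norm{J}=1$. You instead define $J$ by the universal property of $\ptp$ applied to the product-measure bilinear map, and prove the reverse inequality $\norm{x}_\pi\le\norm{J(x)}$ by localisation: dominate all components of an absolutely convergent representation by finite measures $\mu,\nu$, use order-completeness of AL spaces (Th.~\ref{Th:ALcomplete}) to get contractive band projections onto $(\Meas X)_\mu$ and $(\Meas Y)_\nu$, deduce from the metric mapping property that the band tensor product sits isometrically inside $\Meas{X}\ptp\Meas{Y}$, and then identify it with $(\Meas{(X\times Y)})_{\mu\times\nu}$ via Th.~\ref{Th:pteL1L1}, checking compatibility with $J$ on elementary tensors. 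What your approach buys is a complete and essentially self-contained proof of the \emph{isometry} (not just contractivity), relying only on Th.~\ref{Th:pteL1L1} and standard Banach-lattice facts --- indeed the paper's closing claim that ``$\norm{J}=1$'' only addresses one of the two inequalities, so your argument fills a gap the paper leaves to the reader. What the paper's route buys is brevity and an explicit description of the image of $J$ in terms of vector measures with the Radon--Nikodym property, at the price of depending on the deeper Th.~\ref{Th:pte:M}. One point worth making explicit in your write-up: the isomorphism of Th.~\ref{Th:pteL1L1} is isometric and carries $\rho\tp\sigma$ to the product measure $\rho\times\sigma$ --- both facts are established in the paper's proof of that theorem (via Radon--Nikodym densities), and both are needed for your final chain of identifications.
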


\subsubsection{\uline{Tensor product of ordered Banach spaces}}\label{sec:tpRoBan}

We conclude this brief description of tensor products by examining the case of interest to us, namely \emph{regular ordered} Banach spaces. In the most important examples the construction is isomorphic as Banach spaces to the unordered case, and we will therefore not dwell too long on the theory of tensor product of ordered Banach spaces developed in \cite{fremlin1972tensor,fremlin1974tensor,wittstock1974ordered}. The main idea of the definition is to reflect the central role of \emph{positive} vectors in the theory of regular ordered Banach spaces, and in particular the fact that the positive cone is generating and determines the norm (axiom R1, R2 above). The same should hold for any ordered tensor product.

Given two ordered regular Banach spaces $U,V$, their tensor product $U\otimes V$ is equipped with the \emph{positive projective norm} $\norm{\cdot}_{\absv{\pi}}$ defined as
\begin{align*}
\norm{x}_{\absv{\pi}}=\inf\left\{\sum_{i=1}^n \norm{u_i}\norm{v_i} :  u_i\in U^+, v_i\in V^+, \right. \\
\left. -\sum_{i=1}^n u_i\otimes v_i\leq x\leq \sum_{i=1}^n u_i\otimes v_i\right\}
\end{align*}
Note the similarity with \eqref{eq:projnorm}, and the role played by positive vectors in this definition. As in the unordered case $U\otimes V$ is not complete for the positive projective norm, and we must therefore take its completion which we call the \emph{positive projective tensor} of $U$ and $V$ and denote by $U\pptp V$.

Since $\Lps[1]$-spaces and $\Meas(X,\salg)$-spaces are examples of AL-spaces, the following result shows that we can in practice often ignore the subtleties of the positive projective tensor product and rely on the descriptions of the ordinary projective tensor products.
\begin{theorem}[\cite{fremlin1974tensor}, Th. 2B]\label{Th:ppte:ALspaces}
If $E$ is an $AL$-space and $F$ is any regular ordered space, then $E\pptp F$ and $E\ptp F$ are isomorphic as Banach spaces.
\end{theorem}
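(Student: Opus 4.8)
The plan is to prove the stronger statement that the two norms actually \emph{coincide} on the algebraic tensor product $E\tp F$. Since $E\ptp F$ and $E\pptp F$ are by definition the completions of $E\tp F$ under $\norm{\cdot}_\pi$ and $\norm{\cdot}_{\absv{\pi}}$ respectively, equality of the norms yields an isometric isomorphism, a fortiori a Banach-space isomorphism. I would therefore establish the two inequalities $\norm{x}_{\absv{\pi}}\le\norm{x}_\pi$ and $\norm{x}_\pi\le\norm{x}_{\absv{\pi}}$ separately for every $x\in E\tp F$.

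The first inequality is routine and uses only that $E$ is a Banach lattice and $F$ is regular. Starting from any representation $x=\sum_i a_i\tp b_i$, I would split each $a_i=a_i^+-a_i^-$ using the lattice structure of $E$ and, invoking \ref{R2} for $F$, dominate each $b_i$ by a positive vector $c_i$ with $\norm{c_i}$ arbitrarily close to $\norm{b_i}$. Since tensoring by a positive vector is order-preserving, one obtains the positive sandwich $-\sum_i\absv{a_i}\tp c_i\le x\le \sum_i\absv{a_i}\tp c_i$ of cost $\sum_i\norm{\absv{a_i}}\norm{c_i}=\sum_i\norm{a_i}\norm{c_i}$, where $\norm{\absv{a_i}}=\norm{a_i}$ because the norm on $E$ is a lattice norm. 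Letting $\norm{c_i}\to\norm{b_i}$ and taking the infimum over representations gives $\norm{x}_{\absv{\pi}}\le\norm{x}_\pi$.

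The second inequality is the crux, and is exactly where the AL hypothesis is indispensable. Here I would pass to a concrete model: by Kakutani's representation theorem an AL space is isometrically isomorphic, as a Banach lattice, to some $\Lps[1](\Omega,\mu)$ \cite{aliprantis2006positive}, and the classical identity $\Lps[1](\Omega,\mu)\ptp F\simeq\Lps[1](\Omega,\mu;F)$ \cite{ryan2013introduction} identifies the projective tensor product with the Bochner space of $F$-valued integrable functions, the simple function $\sum_i u_i\tp v_i$ corresponding to $\omega\mapsto\sum_i u_i(\omega)v_i$ and its projective norm to $\int_\Omega\norm{x(\omega)}_F\,d\mu$. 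Under this identification every element of the cone generated by the $u\tp v$ with $u,v\ge 0$ is a function that is $\mu$-a.e.\@ pointwise positive in $F$, and this property survives norm limits; hence a positive sandwich $-w\le x\le w$ with $w=\sum_i u_i\tp v_i$ and $u_i,v_i\ge 0$ forces $-w(\omega)\le x(\omega)\le w(\omega)$ in $F$ for $\mu$-almost every $\omega$. Applying \ref{R1} in the regular space $F$ fibrewise gives $\norm{x(\omega)}_F\le\norm{w(\omega)}_F$ a.e., and integrating yields $\norm{x}_\pi=\int\norm{x(\omega)}_F\,d\mu\le\int\norm{w(\omega)}_F\,d\mu=\norm{w}_\pi\le\sum_i\norm{u_i}\norm{v_i}$. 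Taking the infimum over all positive sandwiches gives $\norm{x}_\pi\le\norm{x}_{\absv{\pi}}$.

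The main obstacle is this second inequality, and more precisely the two facts underpinning the fibrewise argument: that the AL norm is exactly what makes the projective norm equal to the integral $\int\norm{x(\omega)}_F\,d\mu$, so that fibrewise domination can be integrated, and that the tensor order refines to the a.e.\@ pointwise order in $F$. The additivity of the AL norm is used essentially twice, once to obtain the isometric $\Lps[1]$ model and once so that the fibrewise bounds from \ref{R1} recombine additively under the integral. A purely algebraic alternative, avoiding Kakutani, would be to prove directly that the projective norm is additive on the positive cone of $E\tp F$ when $E$ is an AL space and then write $x=\tfrac12[(w+x)-(w-x)]$ with $w\pm x\ge 0$; I expect this additivity on the positive cone to be the genuinely hard point either way.
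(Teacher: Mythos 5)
The paper does not prove this theorem at all: it is imported from \cite{fremlin1974tensor} (Th.~2B) without argument, so there is no internal proof to compare yours against, and I judge it on its own merits. Your proof is correct, and it in fact establishes more than the statement asks for: an \emph{isometric} isomorphism, not merely a Banach-space isomorphism. The first inequality is indeed routine. The crux, $\norm{x}_\pi\le\norm{x}_{\absv{\pi}}$, is also sound: in the projective-cone order a sandwich $-w\le x\le w$ means $w\pm x$ are finite sums of tensors $p\tp q$ with $p\in E^+$, $q\in F^+$, and under the identification $\Lps[1](\mu)\ptp F\simeq\Lps[1](\mu;F)$ of Th.~\ref{Th:pte:L1} such elements are $\mu$-a.e.\ pointwise positive; your fibrewise use of \ref{R1} followed by integration then gives $\norm{x}_\pi\le\norm{w}_{\Lps[1](\mu;F)}\le\sum_i\norm{u_i}\norm{v_i}$, and your observation that pointwise a.e.\ positivity survives $\Lps[1]$-limits (because $F^+$ is norm-closed) correctly covers the variant reading in which the order is taken with respect to the closed cone. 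Two pieces of bookkeeping should be made explicit to make this airtight. First, Kakutani's theorem produces $\Lps[1](\Omega,\mu)$ for a measure that need not be $\sigma$-finite, while the Bochner-space identification is stated for ($\sigma$-)finite measures; the standard repair is to note that any finite representation and any finite sandwich of $x$ involve finitely many elements of $\Lps[1](\mu)$, hence lie in the band generated by a single positive integrable function, which is $\Lps[1]$ of a $\sigma$-finite measure and is the range of a norm-one positive band projection $P$, and applying $P\tp\mathrm{id}_F$ to representations and sandwiches shows that neither norm changes when passing to this band. Second, the paper's definition of $\norm{\cdot}_{\absv{\pi}}$ leaves the order on $E\tp F$ unspecified; your argument needs it to be the projective-cone order (Fremlin's convention), and that should be said. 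Neither point is a gap. Your closing suggestion of a representation-free route --- proving additivity of $\norm{\cdot}_\pi$ on the projective cone when $E$ is an AL space --- is plausibly closer in spirit to Fremlin's own intrinsic argument, but since the paper only cites the result, no direct comparison is possible.
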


\subsection{\textbf{The closed monoidal structue of $\Roban$}}\label{subsec:ROBanCat}

\subsubsection{\underline{Tensor product of regular operators}} 
In \cref{sec:tpVec} and \cref{sec:tpBan} we saw how the tensor and projective tensor products can be used to linearize bilinear and bounded bilinear maps respectively. The positive projective tensor product fulfils the same role for \emph{positive} (and thus bounded by Prop. \ref{prop:regbounded}) bilinear maps:
there exists a universal positive bilinear map $U\times V\to U\pptp V$ satisfying the universal property of \eqref{diag:tensorUP} w.r.t positive bilinear maps \cite[2.7]{wittstock1974ordered}. This universal property of tensor products provides a definition of $\pptp$ as a bifunctor on $\Roban$. Let $f: U\to X, g: V\to Y$ be positive operators, then the map
\[
\pptp \circ~ (f\times g): U\times V\to X\times Y\to X\pptp Y
\]
is positive and bilinear, and thus there exists a unique positive operator $U\pptp V\to X\pptp Y$ which is denoted $f\pptp g$. This provides the definition of the bifunctor $\pptp$ on morphisms.

\subsubsection{\underline{The closed monoidal structure}}

As we saw in Th. \ref{Th:internalhom}, the category $\Roban$ has internal homs, and these interact correctly with positive projective tensor products.

\begin{theorem}[\cite{min1983exponential}]
For every regular ordered Banach space $U$, the tensoring and homming operations $-\pptp U$ and $[U,-]$ define functors $\Roban\to\Roban$ such that 
\[
-\pptp U \dashv [U,-]
\]
\end{theorem}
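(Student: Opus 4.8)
The plan is to prove the adjunction by exhibiting, for fixed $U$, a bijection between hom-sets
\[
\Roban(V \pptp U, W) \cong \Roban(V, [U,W])
\]
natural in $V$ and $W$. Before constructing it I would first confirm that both operations are genuine endofunctors of $\Roban$. That $-\pptp U$ is a functor is immediate, since $\pptp$ is a bifunctor. For $[U,-]$, a regular operator $g\colon W\to W'$ induces post-composition $f\mapsto g\circ f$ from $[U,W]$ to $[U,W']$; both are regular ordered Banach spaces by \cref{Th:internalhom}, the operators involved are automatically bounded by \cref{prop:regbounded}, and post-composition by a positive $g$ is order-preserving, so $f\mapsto g\circ f$ is regular and $[U,-]$ is well defined.

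The heart of the argument is currying. Writing $\beta\colon V\times U\to V\pptp U$ for the universal positive bilinear map, any operator $h\colon V\pptp U\to W$ yields a bilinear map $h\circ\beta$, whose curried form $\tilde h(v)(u)=h(v\tp u)$ defines a map $\tilde h\colon V\to[U,W]$; conversely any $k\colon V\to[U,W]$ gives a bilinear map $(v,u)\mapsto k(v)(u)$ which factors uniquely through a linear map $V\pptp U\to W$ by the universal property of $\pptp$. I would run this correspondence on \emph{positive} operators first, where it is cleanest: since the positive cone of $V\pptp U$ is generated by the positive tensors $v\tp u$ (by the very definition of the positive projective tensor), $h$ is positive iff $h(v\tp u)\ge 0$ for all positive $v,u$, iff each $\tilde h(v)$ is a positive operator for positive $v$, iff $\tilde h$ is positive. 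Because the cited universal property is phrased exactly for positive bilinear maps, currying and uncurrying are mutually inverse bijections on positive operators.

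To pass from positive to regular operators I would use that the positive cone is generating, a consequence of axiom R2 recorded earlier: every regular operator decomposes as $h=h_1-h_2$ with the $h_i$ positive, so currying extends linearly. Well-definedness (independence of the decomposition) and bijectivity both follow from the additivity of the positive correspondence together with the generating property, giving a bijection on all of $\Roban(V\pptp U,W)$. Naturality in $V$ and $W$ is then a routine diagram chase using functoriality of $\pptp$ and of $[U,-]$; the sharp form of the argument moreover upgrades this to an isomorphism of internal homs $[V\pptp U,W]\cong[V,[U,W]]$, which is what makes $\Roban$ monoidal \emph{closed}.

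The main obstacle is norm control under passage to the completion. Currying and uncurrying are transparent on the \emph{algebraic} tensor $V\otimes U$, but the adjunction concerns the \emph{completed} space $V\pptp U$; one must check that uncurrying a bounded regular $k\colon V\to[U,W]$ extends to a bounded operator on $V\pptp U$, and that currying a bounded regular $h$ lands in $[U,W]$ with finite regular norm. This forces a comparison between the positive projective norm on $V\pptp U$ — an infimum over positive tensor representations — and the regular norms on the two hom-objects, each itself an infimum over dominating positive operators. Establishing the matching estimate, so that a dominating $g$ for $h$ corresponds to a dominating operator for $\tilde h$ of the same norm (and, in its sharpest form, an isometry yielding the enriched isomorphism above), is the delicate computation at the core of the result, and is exactly where the full strength of regularity, axioms R1 and R2, is used.
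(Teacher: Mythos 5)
The paper never proves this theorem: it is imported wholesale from \cite{min1983exponential}, so there is no internal argument to compare yours against, and your proposal has to be judged against the standard proof in the literature. Judged that way, your skeleton is the right one and is essentially sound: functoriality of $[U,-]$ by post-composition together with a positive decomposition of $g$; the currying bijection on \emph{positive} operators via the universal property of $\pptp$ for positive bilinear maps (which the paper takes from \cite[2.7]{wittstock1974ordered}); and the extension to regular operators by additivity plus the generating cone, with naturality a routine chase. Two remarks, one minor and one substantive. Minor: your claim that $h\ge 0$ iff $h(v\tp u)\ge 0$ on positive elementary tensors uses that the cone of $V\pptp U$ is the \emph{closure} of the projective cone and that $h$ is bounded, which is harmless here since morphisms of $\Roban$ are regular and hence bounded. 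Substantive: your final paragraph misplaces the difficulty. For the adjunction as stated --- a bijection of hom-sets, not an isometry of internal homs --- the ``delicate norm estimate'' you defer is not needed at all: positive operators between regular ordered Banach spaces are automatically bounded (Prop.~\ref{prop:regbounded}), and the universal property is already phrased at the level of the completed space $V\pptp U$, so boundedness of both the curried and uncurried maps comes for free from positivity, with no comparison between the positive projective norm and the regular norm required. The genuinely nontrivial analytic input sits elsewhere, in the background facts that the closed projective cone makes $V\pptp U$ a regular ordered Banach space and that $[U,W]$ is one too (Th.~\ref{Th:internalhom}); these you, like the paper, correctly take as given from \cite{wittstock1974ordered,fremlin1974tensor,min1983exponential}. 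The isometric strengthening $[V\pptp U,W]\cong[V,[U,W]]$ is indeed where R1/R2 enter in force, but the statement does not claim it.
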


The positive projective tensor defines a symmetric monoidal structure on $\Roban$ with $\R$ as its unit -- since $U\tp \R\simeq U\simeq \R\tp U$ at the level of the underlying vector spaces --  and the obvious isomorphisms $U\pptp V\to V\pptp U$ inherited from the isomorphism $U\tp V\to V\tp U $ between the tensor product of the underlying vector spaces.
The category $\Roban$ is thus \emph{symmetric monoidal closed}.

\section{A higher-order language with conditioning}\label{sec:syntax}

\subsection{\textbf{A type system}}\label{sec:types}
We start by defining a type system for our language. Our aims are to (a) have enough types to write some realistic programs for example including multivariate normal or chi-squared distributions, (b) have higher-order types, (c) provide special types for Bayesian learning: \emph{Bayesian types}.

Our type grammar is given as follows:
\begin{align}
\Type::= & m \mid \tInt^n \mid \tReal^n \mid \tPosDef \mid \nonumber \\
& (\Type,\mu) \mid  \Type\otimes\Type \mid \Type \to \Type \mid \tMeas\Type \label{eq:types}
\end{align}
where $1\leq m,n\in\N$ and $\mu:\Type$. We will refer to $m$, $\tInt^n$, $\tReal^n,$ $\tPosDef$ as \emph{ground types}. As their name suggest they are to be regarded as the types of (possibly random) elements of finite sets, vectors of integers, vectors of reals and positive semi-definite matrices, that is to say \emph{covariance matrices}. We will write $\tInt^1$ and $\tReal^1$ as $\tInt$ and $\tReal$. This is by no means an exhaustive set of ground types, but sufficiently rich to consider some realistic probabilistic programs. The type $1\in \N$ will be referred to as the \emph{unit type} and denoted $\tUnit$ and the type $2\in \N$ will be referred to as the boolean type and denoted $\tBool$.

The type constructors are the following. First, given a term $\mu$ of type $\Type$, we can build the pointed type $(\Type,\mu)$. We will call these types \emph{Bayesian types} because $\mu:\Type$ will be interpreted as a \emph{prior}. Bayesian types will support conditioning and thus Bayesian learning. As we shall see, our Bayesian types also fulfil a role in the semantics of variable assignment. As is the tacit practise in Anglican, we will consider that assigning a (possibly random) value to a variable is equivalent to assigning a prior to the type of this variable. For example the program $\tt{x:=2.5}$ which assigns the value $2.5$ to the variable $\tt{x}$ can be understood as placing a (deterministic) prior on the reals, namely $\delta_{2.5}$. Similarly, the program $\tt{x:=sample(normal(0,1))}$ which assigns to $\tt{x}$ a value randomly sampled from the normal distribution $\mathcal{N}(0,1)$ with mean 0 and standard deviation 1 can be understood as setting the prior $\mathcal{N}(0,1)$ on the reals. In a slogan:
\begin{center}
\textit{Bayesian types = Assigned types}
\end{center}
Note however that this slogan is only valid for assignments \emph{without free variables}, indeed a prior cannot be parametric in some variables, it represents definite information. This caveat will be reflected in the type system.

We then have two binary type constructors: \emph{tensor types} and \emph{functions types} which together will support higher-order reasoning. Finally, we have a unary type constructor used to define higher-order probabilities.

We isolate the following two sub-grammars of types whose semantic properties will be essential to the typing of certain operations. First we define \emph{order-complete types} as the types generated by the grammar 
\begin{align}
\Type[S]&::=\Type[G]\mid  (\Type[G],\mu) \mid (\Type[G],\mu) \otimes(\Type[G],\mu)\qquad\Type[G]\text{ in ground types}\\
\Type&::=\Type[S]\mid \Type[S]\to\Type[S] \mid \tMeas\Type[S]\label{eq:completetypes}
\end{align}
Second, we define \emph{measure types} as the types generated by all the constructors of grammar \eqref{eq:types} \emph{except the function-type constructor}.

\begin{remark}
We could easily add product types to our type system since the category in which we interpret types ($\Roban$) is complete, but we feel that this would distract from the central role played by the tensor product. It would also introduce the possibility of copying, which from the perspective of randomness as a resource is problematic. This is why we have `hard-coded' the products which we do need (tuples of integers and reals) as ground types.
\end{remark}

\subsubsection*{\uline{Subtyping relation}} We will need to formalise the fact that a Bayesian type $(\Type,\mu)$ is a subtype of the type $\Type$. For this we introduce a subtyping relation denoted $<:$ freely generated by the rules

\begin{minipage}{.45\textwidth}
\footnotesize
\begin{tabular}{l l l}
\AxiomC{}
\UnaryInfC{$(\Type,\mu)<:\Type $}
\DisplayProof &

\AxiomC{$\Type[S]<:\Type[S']$}
\AxiomC{$\Type<:\Type'$}
\BinaryInfC{$\Type[S]\otimes\Type<:\Type[S']\otimes\Type'$}
\DisplayProof  & 

\AxiomC{$\Type[S']<:\Type[S]$}
\AxiomC{$\Type<:\Type'$}
\BinaryInfC{$\Type[S]\to \Type<:\Type[S']\to\Type'$}
\DisplayProof 
\end{tabular}
\end{minipage}

\vspace{4pt}
As will become clear when we define the semantics of types, we can also use the subtyping relation to add information about the absolute continuity of one built-in measure w.r.t another in the type system. For example, since a beta distribution is absolutely continuous w.r.t. to a normal distribution, if $e_1$ is a program constructing a beta distribution and $e_2$ is a program constructing a normal distribution, we could add the rule 
\footnotesize
\AxiomC{}
\UnaryInfC{$(\tReal,e_1)<:(\tReal,e_2) $}
\DisplayProof .
\normalsize

\subsubsection*{\uline{Contexts}} are maps $\tGamma:\N\to \tt{Types}$ -- the free algebra of all types generated by \eqref{eq:types}  -- which send cofinitely many integers to the unit type $\tt{unit}$. We will write $\supp(\Gamma)$ for the set $\{i\mid \Gamma(i)\neq \tUnit\}$ and use the traditional notation $\tGamma[i\tt{\mapsto \Type}]$
to denote the context mapping $i$ to $\Type$ and all $j\neq i$ to $\tGamma(j)$. To each context we can associate the finite tensor type $\bigotimes_{i=1}^n \tGamma(i)$ where $n=\sup\supp(\tGamma)$. When a context $\tGamma$ appears to the \emph{right} of a turnstile in the typing rules which follow we will implicitly perform this conversion from context to type. 

Our contexts are a dynamic version of the static context of \cite{K81c} which consists of a constant map on $\N$ to a single type. They are in some respects similar to the heap models of separation logic, and for notational clarity we will require similar operations on contexts as on heaps: a notion of compatibility, of union and of difference. Given two contexts $\tGamma_1,\tGamma_2$ we will say that they are \emph{compatible} if $\tGamma_1(i)=\tGamma_2(i)$ for all $i\in \supp(\tGamma_1)\cap\supp(\tGamma_2)$, and we will then write $\tGamma_i\pitchfork\tGamma_2$. For any two compatible contexts $\tGamma_1\pitchfork\tGamma_2$ we define the \emph{union context} $\tGamma_1\oplus\tGamma_2$ as the union of their graphs, which is a function by the compatibility assumption. We define the \emph{difference context} $\tGamma_1\ominus\tGamma_2$ as the map sending $i \mapsto \tGamma_1(i)$ if $i\notin\supp(\tGamma_2)$ and to $\tUnit$ otherwise. In particular $\tGamma_1\ominus\tGamma_2=\tGamma_1$ if the supports are disjoint.

\subsection{\textbf{Syntax}}

We define an ML-like language allowing imperative features like variable assignments, conditionals and while loops within a functional language.
 
\subsubsection{\uline{Expressions}} \hspace{1pt}

\begin{minipage}{.45\textwidth}
\footnotesize
\begin{align*}
e::=~ & n\in\N^k \mid r\in\R^k\mid  m\in\PosDef(n) \mid   & \text{Constants}\\
& \tt{op}(e,\ldots,e)\mid & \text{Built-in operations}\\
& x_i \mid &i\in\N,\text{ Variables }\\
& x_i:=e\mid &\text{Assignment}\\
& e ; e \mid &\text{Sequential composition}\\
& \tt{let~} x_i=e\tt{~in~}e & \text{Sequencing}\\
& \tt{fn ~}x_i~.~ e \mid &\lambda\text{-abstraction}\\
& e(e)\mid &\text{Function application}\\
& \tt{if}\quad e \quad \tt{then} \quad e\quad\tt{else}\quad e\mid &\text{Conditional}\\
& \tt{while}\quad e \quad \tt{do}\quad e \mid& \text{Iterations}\\
& \tt{sample}(e) \mid &\text{Sampling}\\
& \tt{sampler}(e) \mid &\hspace{-3em}\text{Packages a program as a sampler}\\
& \tt{observe}(e) & \text{Conditioning}
\end{align*}
\end{minipage}

\vspace{5pt}
Every built-in operation must come equipped with typing instruction which we will write as an $n+1$-tuple $(\Type[S_1],\ldots,\Type[S_n],\Type)$ where the first $n$ components are ground types specifying the input types and the last component is a ground type or measures over a ground type specifying the output type. For example the boolean connective $\texttt{or}$ would come with typing $(\tBool,\tBool,\tBool)$, the sine function $\texttt{sin}$ with typing $(\tReal,\tReal)$ and the function $\tt{normal}$ constructing a normal distribution would come with typing $(\tReal ,\tPosDef[1], \tMeas~\tReal)$, where the first input is the mean, the second is the standard deviation (a 1-dimensional covariance matrix, i.e. a positive real) and the output is a measure over the reals.

\subsubsection{\uline{Well-typed expressions}} 
The typing rules for our language are gathered in Fig. 1.  We will discuss these rules in detail when we define the denotational semantics of our language in \cref{sec:semantics}, but we can already make some observations. 

It is important to realize that memory-manipulating rules in effect have a sequent on the \emph{right} of the turnstile, formally represented by an integer-indexed tensor product type (see \cref{sec:types}), whilst the other rules just have a type. Syntactically and semantically however, we make no distinction between these two cases. A useful way to think about our system is as follows: a purely functional computation $\context e:\Type$ will consume a context $\tGamma$ and output a value of type $\Type$. A program with imperative features $\context e:\tt{\Delta}$ modifying an internal store $\tGamma$ to a new store $\tt{\Delta}$ should be thought of as consuming a context $\tGamma$ and outputting the totality of its internal state $\tt{\Delta}$.

The only way to explicitly create a Bayesian type is through a variable assignment without free variables: a prior must contain definite information, not information which is parametric in variables. Only `measure types' can form Bayesian types. 


The sequential composition rule looks daunting, but it is simply a version of the cut rule with a bit of bookkeeping to make sure contexts do not conflict with one another. 

Note finally that our \texttt{observe} statement applies to a term of type $\Type$, intuitively we observe a possibly random element of type $\Type$. This is slightly different from the syntax of \texttt{observe} in Anglican where a \emph{distribution} is observed. Semantically, the difference disappears since a possibly random element is modelled by a distribution.

\begin{figure*}[h!]\label{fig:typingrules}
\begin{minipage}{\textwidth}
\footnotesize

\begin{tabular}{ p{3.4cm}  l   l }
\hline
\\
\textbf{Constants:} &

\AxiomC{}
\RightLabel{$i\in n$}
\UnaryInfC{$\context[\emptyset] i:n$}
\DisplayProof & 

\AxiomC{}
\RightLabel{$n\in\N^k$}
\UnaryInfC{$\context[\emptyset] n:\tInt^k$}
\DisplayProof 

\\ \\

&

\AxiomC{}
\RightLabel{$r\in\R^k$}
\UnaryInfC{$\context[\emptyset] r:\tReal^k$}
\DisplayProof &

\AxiomC{}
\RightLabel{$M\in\PosDef(n)$}
\UnaryInfC{$\context[\emptyset] M:\tPosDef$}
\DisplayProof

\\ \\

\textbf{Variables and subtyping:} &

\AxiomC{}
\UnaryInfC{$[i\mapsto \Type] \vdash x_i: \Type$}
\DisplayProof &

\AxiomC{$\tt{\Gamma}[i\mapsto\Type[S]]\vdash e:\tt{\Delta}[j\mapsto\Type]$}
\RightLabel{$\Type[S]'<:\Type[S], \Type<:\Type'$}
\UnaryInfC{$\tt{\Gamma}[i\mapsto\Type[S]']\vdash e:\tt{\Delta}[j\mapsto\Type']$}
\DisplayProof 

\\ \\

& 
\AxiomC{$\context[\emptyset] e:\Type$}
\RightLabel{$\Type$ measure type}
\UnaryInfC{$\context[\emptyset] e: (\Type,e)$}
\DisplayProof
%
%
%
%

\\ \\

\textbf{Built-in operations:} &

\multicolumn{2}{l}{
\AxiomC{$\context[\Gamma_1] e_1:\tt{S_1}$}
\AxiomC{$\cdots$}
\AxiomC{$\context[\Gamma_n] e_n:\tt{S_n}$}
\RightLabel{$\tt{op}\text{ of type }(\tt{S_1,\ldots,S_n,T}), \supp(\tGamma_i)\cap \supp(\tGamma_j)=\emptyset, i\neq j$}
\TrinaryInfC{$\context[\Gamma_1,\ldots,\Gamma_n] \tt{op}(e_1,\ldots,e_n):\Type$}
\DisplayProof
}

\\ \\

\textbf{Assignment} &

\AxiomC{$\context e:\Type $}
\UnaryInfC{$\tt{\Gamma[i\mapsto\Type]}\vdash x_i:=e :\Type$}
\DisplayProof &

\AxiomC{$\context[\emptyset] e:\Type $}
\RightLabel{$\Type$ measure type}
\UnaryInfC{$\tt{[i\mapsto\Type]}\vdash x_i:=e :(\Type,e)$}
\DisplayProof 

\\ \\

\textbf{Sequencing:} &
\multicolumn{2}{l}{
\AxiomC{$\context e_1:\Type[S]$}
\AxiomC{$\tt{\Delta}[i\mapsto \Type[S]] \vdash e_2:\Type$}
\RightLabel{$\supp(\tGamma)\cap\supp(\tt{\Delta})=\emptyset$}
\BinaryInfC{$\context[\Gamma\oplus\Delta] \tt{let~}x_i=e_1\tt{~in~} e_2 :\Type$}
\DisplayProof
}

\\ \\

\textbf{Sequential composition:} & 

\multicolumn{2}{l}{
\AxiomC{$\context[\Gamma_1] e_1:\tt{\Delta_1}$}
\AxiomC{$\context[\Gamma_2] e_2:\tt{\Delta_2}$}
\RightLabel{$\tt{\Delta_1\pitchfork \Gamma_2,~\Gamma_1\pitchfork (\Gamma_2\ominus\Delta_1),~ \Delta_2\pitchfork (\Delta_1\ominus\Gamma_2)}$}
\BinaryInfC{$\context[\Gamma_1\oplus(\Gamma_2\ominus\Delta_1)] e_1;e_2 :\tt{(\Delta_1\ominus\Gamma_2)\oplus\Delta_2}$}
\DisplayProof 
}

\\ \\

\textbf{$\lambda$-abstraction and function application:} & 


\AxiomC{$\tt{\Gamma[i\mapsto S]}\vdash e:\Type$}
\RightLabel{$i\notin\supp(\tGamma)$}
\UnaryInfC{$\context\tt{fn}\hspace{2pt} x_i~.~e: \tt{S}\to \Type$}
\DisplayProof  &

\AxiomC{$\context e_1: \tt{S}$}
\AxiomC{$\context[\Delta] e_2 : \tt{S}\to\Type$}
\BinaryInfC{$\context[\Gamma,\Delta] e_2(e_1):\Type$}
\DisplayProof
 
\\ \\
\textbf{Imperative control flow:} & 

\multicolumn{2}{l}{
\AxiomC{$\context e_1:\tBool$}
\AxiomC{$\context e_2:\Type$}
\AxiomC{$\context e_3:\Type$}
\RightLabel{$\Gamma$ order-complete type}
\TrinaryInfC{$\context \tt{if} \enskip e_1 \enskip \tt{then}\enskip e_2\enskip\tt{else}\enskip e_3:\Type$}
\DisplayProof 
}

\\ \\

&

\AxiomC{$\context e_1:\tBool$}
\AxiomC{$\context e_2:\tt{\Gamma}$}
\RightLabel{$\tt{\Gamma}$ order-complete type}
\BinaryInfC{$\context\tt{while}\enskip e_1\enskip \tt{do}\enskip e_2:\tt{\Gamma}$}
\DisplayProof

\\ \\

\textbf{Probabilistic operations:}   & 

\AxiomC{$\context e: \Type$}
\RightLabel{$\Type$ measure type}
\UnaryInfC{$\context \tt{sampler}(e):\tMeas\Type$}
\DisplayProof  

& 
\AxiomC{$\context e:\tMeas \Type$}
\RightLabel{$\Type$ measure type}
\UnaryInfC{$\context\tt{sample}(e):\Type$} 
\DisplayProof 


 \\ \\

&  

\multicolumn{2}{l}{
\AxiomC{$[i\mapsto(\mathtt{S},\mu)]\vdash e: \Type$}
\RightLabel{$\Type[S],\Type$ measure type, $\Type[S]$ order-complete type}
\UnaryInfC{$[i\mapsto(\mathtt{S},\mu)]\vdash \tt{observe}(e): (\Type, e[x_i/ \mu])\to(\Type[S],\mu)$}
\DisplayProof
}
\\ \\
\hline
\end{tabular}
\end{minipage}
\caption{Typing rules}
\end{figure*}

\subsubsection{\uline{A simple example}}\label{sec:gaussianex}
It is not hard (but notationally cumbersome) to type-check the following simple Gaussian inference program against the inference rules of Fig. 1.
\begin{lstlisting}[frame=single,mathescape]
let x=sample(normal(0,1)) in
observe(sample(normal(x,1)))
\end{lstlisting}
In the empty context the program above evaluates to a function of type
\begin{align}
& \mathtt{(real,sample(normal(sample(normal(0,1)),1)))} \nonumber \\
& \to \mathtt{(real, sample(normal(0,1)))}\label{eq:gaussianinf}
\end{align}
which, as we will see in \cref{sec:semantics}, is what we want semantically.

\begin{comment}
\begin{itemize}
\item Properties of the type system?
\end{itemize}
\end{comment}

\section{Denotational semantics}\label{sec:semantics}
As the reader will have guessed we will now provide a denotational semantics for the language described in \cref{sec:syntax} in the category $\Roban$ of regular ordered Banach spaces.

\subsection{\textbf{Semantics of types}}

For \emph{ground types} we define
\begin{itemize}
\item $\sem{m}=\Meas\{1,\ldots,m\}$ where $\{1,\ldots,m\}$ is equipped with the discrete $\sigma$-algebra. Note that $\sem{m}\simeq \R^m$, and thus $\sem{\tUnit}\simeq\R$, the unit of the positive projective tensor.
\item $\sem{\tInt}=\Meas\N$, where $\N$ is equipped with the discrete $\sigma$-algebra
\item $\sem{\tReal}=\Meas\R$, where $\R$ is equipped with its usual Borel $\sigma$-algebra
\item $\sem{\tPosDef}=\Meas\PosDef(n)$, where $\PosDef(n)$ is the space of positive semi-definite  $n\times n$ matrices equipped with the Borel $\sigma$-algebra inherited from $\R^{n\times n}$
\end{itemize}

As expected, the tensor and function type constructors are interpreted by the monoidal closed structure of $\Roban$, i.e.
\begin{align*}
\sem{\Type[S]\otimes \Type}:=\sem{\Type[S]}\pptp\sem{\Type}\qquad\text{ and } \qquad\sem{\Type[S]\to\Type}:=\left[\sem{\Type[S]},\sem{\Type}\right]
\end{align*}
The higher-order probability type constructor $\tMeas$ is interpreted as follows. For any regular ordered Banach space $V$ we consider the underlying set together with the Borel $\sigma$-algebra induced by the norm. We then apply the functor $\Meas$ to this measurable space. This construction is functorial and we overload $\Meas$ to denote the resulting regular ordered Banach space by $\Meas V$. Using this convenient notation we define
\[
\sem{\tMeas\Type}:=\Meas\sem{\Type}
\]

For \emph{Bayesian types} note that the type system in Fig. 1 can only produce a Bayesian type $(\Type,\mu)$ if $\Type$ is a measure type and $\mu$ has no free variables, i.e. if $\context[\emptyset]\mu:\Type$ is derivable. We will therefore only need to provide a semantics to Bayesian types of this shape. Our semantics of Bayesian types is in some respect similar to that of \emph{pointed types} used in homotopy type theory \cite{licata2014eilenberg}. Indeed, at the type-theoretic level they are defined identically as a type together with a term inhabiting this type. However, the ordered vector space structure allows us to provide a semantics which is much richer than a space with a distinguished point. Given a measure type $\Type$ and a sequent of the type $\context[\emptyset]\mu:\Type$, we will see in \cref{sec:semanticsterms} that $\mu$ is interpreted as an operator $\sem{\mu}:\R\to\sem{\Type}$, which is uniquely determined by $\sem{\mu}(1)$. For notational clarity we will often simply write $\mu$ for the measure $\sem{\mu}(1)$. We define the denotation of the Bayesian type $(\Type,\mu)$ as the \emph{principal band in $\sem{\Type}$ (see \cref{sec:bands}) generated by the measure $\mu$} (i.e.\@$\sem{\mu}(1)$). Formally:
\begin{align}
\sem{(\Type,\mu)}=\sem{\Type}_{\mu}\label{eq:semBayesian}
\end{align}
For this semantics to be well-defined it is necessary that $\sem{\Type}$ be a Riesz space, since bands are defined using the lattice structure. This
is indeed the case:

\begin{theorem}\label{Th:meastypes}
The semantics of a measure type is a Banach lattice.
\end{theorem}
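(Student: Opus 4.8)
The plan is to proceed by structural induction on the grammar of measure types, and to prove the slightly stronger statement that the semantics of every measure type is in fact an $AL$-space (and hence, in particular, a Banach lattice). The strengthening pays for itself in the tensor case, where \cref{Th:ppte:ALspaces} can only be invoked once we know that one of the factors is an $AL$-space; carrying the $AL$ property through the induction supplies exactly this hypothesis.

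The base cases and the $\tMeas$-constructor require almost nothing. For the ground types $m$, $\tInt^n$, $\tReal^n$, $\tPosDef$ each denotation is of the form $\Meas X$ for a measurable space $X$, and such spaces were already observed above to be $AL$-spaces. The constructor $\tMeas$ is equally direct: $\sem{\tMeas\Type}=\Meas\sem{\Type}=\Meas V$, where $V$ is the underlying set of the regular ordered Banach space $\sem{\Type}$ equipped with its norm-Borel $\sigma$-algebra, so this is again $\Meas$ of a measurable space and hence an $AL$-space. Note that this case does not even use the inductive hypothesis on the lattice structure of $\sem{\Type}$: we only need $\sem{\Type}$ to be a well-defined regular ordered Banach space. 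For a Bayesian type $(\Type,\mu)$ the inductive hypothesis gives that $\sem{\Type}$ is a Banach lattice, hence a Riesz space, so the principal band $\sem{\Type}_\mu$ is well-defined; since every band in a Banach lattice is itself a Banach lattice (\cref{sec:bands}) it is a Banach lattice, and I would keep the $AL$ invariant by observing that the $AL$-norm identity passes to any sublattice and a band is a sublattice. The isomorphism $(\Meas X)_\mu\simeq\Lps[1](X,\mu)$ of \cref{ex:Band} is the prototypical instance.

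The real content is the tensor case $\sem{\Type[S]\otimes\Type}=\sem{\Type[S]}\pptp\sem{\Type}$. By the inductive hypothesis both factors are $AL$-spaces, so \cref{Th:ppte:ALspaces} identifies the positive projective tensor $\sem{\Type[S]}\pptp\sem{\Type}$ with the ordinary projective tensor $\sem{\Type[S]}\ptp\sem{\Type}$ \emph{as Banach spaces}. What remains is to upgrade this from a Banach-space statement to the claim that the positive projective tensor is a genuine Banach lattice (indeed an $AL$-space), and this is the step I expect to be the main obstacle: the paper's tensor results are stated either at the level of Banach spaces (\cref{Th:ppte:ALspaces}) or for the concrete principal-band case (\cref{Th:pteL1L1}), neither of which by itself transports the full Riesz/$AL$ structure. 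Here I would appeal directly to Fremlin's theory of tensor products of Banach lattices \cite{fremlin1972tensor,fremlin1974tensor}, in which the positive projective tensor product of two Banach lattices is constructed as a Banach lattice and the tensor of two $AL$-spaces is shown to be an $AL$-space; \cref{Th:pteL1L1}, giving $(\Meas X)_\mu\ptp(\Meas Y)_\nu\simeq(\Meas(X\times Y))_{\mu\times\nu}$, is then the concrete shadow of this preservation result. With all four cases in hand the induction closes, establishing that $\sem{\Type}$ is an $AL$-space, and a fortiori a Banach lattice, for every measure type $\Type$.
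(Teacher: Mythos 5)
Your proof is correct and follows essentially the same route as the paper's: structural induction on the type, with ground and $\tMeas$ types handled by $\Meas X$ being a Banach lattice, Bayesian types by the fact that bands in Banach lattices are Banach lattices, and the tensor case by Fremlin's theorem that the positive projective tensor product of two Banach lattices is again a Banach lattice --- which is exactly the fact the paper cites from \cite{fremlin1974tensor}, so the step you anticipated as the main obstacle is resolved the same way in both arguments. The AL-space strengthening you carry through the induction is sound but unnecessary: since Fremlin's lattice-preservation result applies to arbitrary Banach lattices (not just AL-spaces), the plain Banach-lattice invariant already closes the tensor case, and \cref{Th:ppte:ALspaces} need not be invoked at all.
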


The function type constructor is the only operation in the type system which forces us to leave the category of Banach lattices and enter the much larger category $\Roban$. As shown in \cite[Ex. 1.17]{aliprantis2006positive}, the space of regular operators between two Riesz spaces need not even be a lattice. The non-closure of Banach lattices under taking internal homs is one of the technical reasons for our use of `measure types' \footnote{Note that we could in principle extend \eqref{eq:semBayesian} to all types by considering subsets generated by a single element which exist in all regular ordered Banach spaces, for example the closure of ideals.}.

We introduced order-complete types in \cref{sec:types} because of a `dual' non-closure property: order-complete spaces are not closed under the positive projective tensor operation.  As shown in \cite[4C]{fremlin1974tensor} the product $\Lps[2](\left[0,1\right])\pptp\Lps[2](\left[0,1\right])$ is not order-complete, even though $\Lps[2](\left[0,1\right])$ is. Order-completeness will be important in the semantics of \texttt{while} loops.

\begin{theorem}\label{Th:ordercomplete}
The semantics of an order-complete type is an order-complete space.
\end{theorem}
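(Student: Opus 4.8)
The plan is to prove \cref{Th:ordercomplete} by structural induction over the grammar \eqref{eq:completetypes} that generates the order-complete types, handling its two levels $\Type[S]$ and $\Type$ in turn. The three ingredients I will rely on are \cref{Th:ALcomplete} (AL spaces are order-complete), the elementary fact that \emph{every band in an order-complete Banach lattice is again order-complete} (an order-bounded subset of the band has a supremum in the ambient space by order-completeness, and that supremum lies in the band by the band axiom), and — for function types — the classical Riesz–Kantorovich theorem that the regular operators into an order-complete Riesz space form an order-complete space.

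I first treat the base cases of the $\Type[S]$-grammar. Every ground type $\Type[G]$ is interpreted as a space $\Meas X$, which is an AL space and hence order-complete by \cref{Th:ALcomplete}. A Bayesian type $(\Type[G],\mu)$ over a ground type is interpreted as the principal band $\sem{\Type[G]}_{\mu}$; since $\sem{\Type[G]}$ is an order-complete Banach lattice, the band $\sem{(\Type[G],\mu)}$ is order-complete — equivalently, by \cref{ex:Band} it is isomorphic as a Banach lattice to the AL space $\Lps[1](X,\mu)$. For the tensor case $(\Type[G],\mu)\otimes(\Type[G'],\nu)$ over measurable spaces $X,Y$, each factor is (by \cref{ex:Band}) an $\Lps[1]$-space, hence an AL space, so \cref{Th:ppte:ALspaces} allows me to replace $\pptp$ by $\ptp$ and \cref{Th:pteL1L1} identifies the result with the band $(\Meas(X\times Y))_{\mu\times\nu}$, again an AL space and hence order-complete by \cref{Th:ALcomplete}.

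At the second level the case $\Type[S]$ is already covered. For a measure type $\tMeas\,\Type[S]$ the semantics is $\Meas\sem{\Type[S]}$, i.e.\ $\Meas$ applied to the measurable space underlying $\sem{\Type[S]}$; since $\Meas$ of \emph{any} measurable space is an AL space, this is order-complete by \cref{Th:ALcomplete} with no appeal to the induction hypothesis. The remaining, genuinely load-bearing, case is the function type $\Type[S]\to\Type[S']$, interpreted as the internal hom $[\sem{\Type[S]},\sem{\Type[S']}]$ of \cref{Th:internalhom}. Here I invoke the Riesz–Kantorovich theorem \cite{aliprantis2006positive}: when the codomain is order-complete, the regular operators (which, for an order-complete codomain, coincide with the order-bounded operators) form an order-complete Riesz space in the pointwise order. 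By the inductive treatment of the $\Type[S]$-grammar, $\sem{\Type[S']}$ is order-complete, so $[\sem{\Type[S]},\sem{\Type[S']}]$ is order-complete, as required.

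The main obstacle is concentrated in the function-type case: the grammar \eqref{eq:completetypes} has been deliberately restricted so that the codomain of every arrow is an order-complete $\Type[S]$-type, and without this the internal hom need not even be a lattice (cf.\ \cite[Ex. 1.17]{aliprantis2006positive}); the whole argument hinges on this restriction supplying the Riesz–Kantorovich hypothesis. A secondary, more technical point is that in the tensor case I must verify that the isomorphisms of \cref{Th:ppte:ALspaces} and \cref{Th:pteL1L1} are \emph{order} (lattice) isomorphisms and not merely Banach-space isomorphisms, since order-completeness is an order-theoretic property; this holds because the Radon–Nikodym identification is a lattice isomorphism and, on AL spaces, $\pptp$ and $\ptp$ carry the same positive cone, but it is the one step where this must be checked rather than merely quoted.
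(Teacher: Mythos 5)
Your proof is correct and takes essentially the same route as the paper's: structural induction over the grammar \eqref{eq:completetypes}, reducing the ground, Bayesian, and tensor cases to AL spaces via \cref{Th:ALcomplete}, \cref{ex:Band}, and \cref{Th:pteL1L1}, handling $\tMeas\,\Type[S]$ by applying $\Meas$, and settling the internal-hom case by the Riesz--Kantorovich fact that $[U,V]$ is order-complete whenever $V$ is (the paper cites the proof of \cite[Th.~1.18]{aliprantis2006positive} for exactly this). Your closing caveat---that the isomorphisms in \cref{Th:ppte:ALspaces} and \cref{Th:pteL1L1} must be checked to respect the order, not just the Banach-space structure---is a sound extra precaution that the paper leaves implicit.
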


\subsubsection*{\uline{Subtypes and contexts}}
The subtyping relation will simply be interpreted as subspace inclusion. For example the relation $(\Type,\mu)<:\Type$ is interpreted as the inclusion of the principal band $\sem{\Type}_{\sem{\mu}}\hookrightarrow\sem{\Type}$.
A context $\tGamma$ will be interpreted as the positive projective tensor
\[
\sem{\tGamma}=\bigpptp_{\hspace{-2ex}i}^{\sup \supp(\tGamma)} \sem{\tGamma(i)}.
\]
and  we put $\sem{\emptyset}:=\R$. A typing rule $\context e:\Type$ will be interpreted as a regular (in fact positive, see Th. \ref{Th:programs}) operator
$\sem{e}:\sem{\tGamma}\to\sem{\Type}$ .

\subsection{Semantics of well-formed expressions}\label{sec:semanticsterms}

Let us now turn to the semantics of terms. 

\subsubsection{\uline{Constants}} A constant $c\in G$ whose ground type $\Type[G]$ is interpreted as the space $\Meas G$ will be interpreted as the operator 
\[
\sem{c}:\sem{\emptyset}=\R\longrightarrow\sem{\Type[G]}=\Meas G,\quad \lambda\mapsto\lambda\delta_c
\]
\subsubsection{\uline{Built-in operations}} Recall that every built-in operation $\tt{op}$ comes with typing information $(\Type[G]_1,\ldots,\Type[G]_n,\Type)$ where each $\Type[G]_i$ is of ground type and $\Type$ is either of ground type or of type $\tMeas\Type[G]$. Each such operation is interpreted via a function
$f_{\tt{op}}: G_1\times \ldots\times G_n\to X$, with $X=G$ or $X=\Meas G$, as the unique regular operator which linearizes $\Meas f_{\tt{op}}\circ \times$ according to the universal property \eqref{diag:tensorUP} of $\pptp$:
\[
\xymatrix@C=8ex@R=3ex
{
\Meas G_1\times \ldots\times\Meas G_n\ar[d]_{\times}\ar[r]_{\pptp} & \sem{\Type[G]_1}\pptp\ldots\pptp\sem{\Type[G]_n}\ar@/^1pc/@{-->}[ddl]^{\sem{\tt{op}}} \\
\Meas(G_1\times \ldots\times G_n) \ar[d]_{\Meas f_{\tt{op}}}\\
\Meas X
}
\]
For example the boolean operator \texttt{or} of type $(\tBool,\tBool,\tBool)$  would be interpreted, via the function $f_{\tt{or}}: 2\times 2\to 2$ implementing the boolean join, as the linearisation of $\Meas f_{\tt{or}}\circ \times$ (which is bilinear). Similarly, the operation $\tt{normal}$ of type $(\tReal,\tPosDef[1],\tMeas\tReal)$ building a normal distributions would be interpreted, via the obvious function $f_{\tt{normal}}:\R\times\R^+\to\Meas\R$, as the linearisation of $\Meas f_{\tt{normal}}\circ\times$. Note that if the inputs are deterministic, i.e. a tensor $\delta_\mu \otimes\delta_\sigma$ for a mean $\mu$ and a standard deviation $\sigma$ (as would usually be the case), then $\sem{\tt{normal}}(\delta_\mu \otimes\delta_\sigma)$ outputs \emph{a Dirac delta} over the distribution $\mathcal{N}(\mu,\sigma)$. Note how we interpret the deterministic construction of a distribution over $X$ differently from sampling an element of $X$ according to this distribution: the former is a distribution over distributions, the latter just a distribution. 

\subsubsection{\uline{Variables and assignments}} A variable on its own acts like a variable declaration and introduces a context (see Fig. 1). Its semantics is simply given by the identity operator on the type of the variable, formally if $[i\mapsto\Type]\vdash x_i:\Type$ then $\sem{x_i}=\mathrm{Id}_{\sem{\Type}}$. 
In order to define the semantics of variable assignment we need the following result.\footnote{As a consequence of this theorem, the semantics of all our types are \emph{Archimedean} ordered vector spaces.}

\begin{theorem}\label{Th:poslinfunc}
The denotation of any type $\Type$ admits a strictly positive functional $\phi_{\sem{\Type}}$.
\end{theorem}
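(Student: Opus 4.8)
The plan is to proceed by structural induction on the type $\Type$, constructing at each stage a linear functional $\phi_{\sem{\Type}}:\sem{\Type}\to\R$ that is positive and moreover \emph{strictly} positive, i.e.\ $\phi_{\sem{\Type}}(v)>0$ whenever $v\ge 0$ and $v\neq 0$. For the base cases, every ground type and every higher-order probability type $\tMeas\Type$ denotes a space of the form $\Meas X$ (for $\tMeas\Type$ one takes $X=\sem{\Type}$ with its norm-Borel $\sigma$-algebra). On such a space I would take the \emph{total-mass} functional $\phi(\nu)=\nu(X)=\int_X\one\,d\nu$. It is linear, positive, and bounded (since $\absv{\phi(\nu)}\le\norm{\nu}$), and it is strictly positive because a nonzero positive measure has strictly positive total mass. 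For the $AL$-spaces among these (all the $\Meas X$), this functional is exactly the norm restricted to the positive cone, which is convenient for the tensor step.

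The inductive steps for Bayesian types and for tensors of measure types are then straightforward. For a Bayesian type, $\sem{(\Type,\mu)}=\sem{\Type}_\mu$ is a band, hence a sub-Riesz space of the Banach lattice $\sem{\Type}$ (Theorem~\ref{Th:meastypes}) carrying the inherited order; restricting the functional $\phi_{\sem{\Type}}$ supplied by the induction hypothesis keeps it strictly positive, since $w>0$ in the band implies $w>0$ in $\sem{\Type}$. For a tensor of two measure types, Theorems~\ref{Th:pteL1L1} and~\ref{Th:pte:MM} embed $\sem{\Type[S]}\pptp\sem{\Type}$ isometrically, and as a Riesz subspace, into a single space $\Meas(X\times Y)$, and the total-mass functional there restricts on elementary tensors to $\phi_{\sem{\Type[S]}}\otimes\phi_{\sem{\Type}}$; strict positivity is then inherited from the ambient $AL$-space. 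In general $\phi_{\sem{\Type[S]}}\otimes\phi_{\sem{\Type}}$ is the linearisation through~\eqref{diag:tensorUP} of the positive bilinear map $(u,v)\mapsto\phi_{\sem{\Type[S]}}(u)\,\phi_{\sem{\Type}}(v)$, hence a bounded positive functional on the positive projective tensor, and strict positivity reduces (up to taking the closure of the cone) to the fact that the positive cone is generated by the elementary tensors $u\otimes v$ with $u,v\ge 0$, on which the functional vanishes only when $u=0$ or $v=0$.

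The main obstacle is the \textbf{function-type case} $\sem{\Type[S]\to\Type}=[\sem{\Type[S]},\sem{\Type}]$, where one must produce a strictly positive functional on a space of regular operators that is typically not even a lattice. The natural candidate is an evaluation functional $\Phi(T)=\phi_{\sem{\Type}}(T(u_0))$ at a fixed positive $u_0\in\sem{\Type[S]}$: this is always positive, and it is strictly positive precisely when $u_0$ is a \emph{quasi-interior} point of the domain, i.e.\ the ideal it generates is dense, for then $T\ge 0$ together with $T(u_0)=0$ forces $T(v)=0$ for every $v$ with $0\le v\le n\,u_0$, and hence, by boundedness (Proposition~\ref{prop:regbounded}) and density, $T=0$. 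When the domain is separable one can instead sum evaluations over a countable dense sequence $(u_n)$ of the positive unit ball with summable weights, $\Phi(T)=\sum_n 2^{-n}\phi_{\sem{\Type}}(T(u_n))$, again obtaining a bounded strictly positive functional.

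The delicate point, which I expect to carry the real weight of the argument, is that the continuous ground spaces such as $\Meas\R$ are neither separable nor possessed of a quasi-interior point, so neither recipe applies verbatim to function types whose domain is built from them. Handling these domains — by selecting a test element, or more generally a test distribution on the positive unit ball, adapted to the particular measure spaces admitted by our grammar, and exploiting the $AL$/Köthe-dual structure inherited from the base cases — is therefore the crux of the proof, and is where I would expect to spend most of the effort.
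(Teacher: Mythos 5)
Your base, Bayesian, and tensor cases coincide with the paper's own proof (total-mass functionals on the spaces $\Meas X$, restriction of the functional to a band, linearisation of the positive bilinear map $(u,v)\mapsto\phi(u)\psi(v)$), so the comparison turns entirely on the internal-hom case --- and there your proposal has a genuine gap, one you candidly flag yourself. Evaluation $T\mapsto\psi(T(u_0))$ is strictly positive only if $u_0$ is quasi-interior, and the weighted sum $\sum_n 2^{-n}\psi(T(u_n))$ needs a countable dense subset of the positive unit ball; but the domains actually produced by the grammar, e.g.\ $\sem{\tReal}=\Meas\R$, have neither property. Indeed $\{\delta_x\}_{x\in\R}$ is an uncountable set of points at pairwise total-variation distance $2$, so $\Meas\R$ is not separable; and for any candidate $\mu_0\geq 0$ the closure of the ideal it generates is the band $(\Meas\R)_{\mu_0}$ of $\mu_0$-absolutely continuous measures (Example \ref{ex:Band}), which lies at total-variation distance at least $1$ from every $\delta_x$ with $\mu_0(\{x\})=0$, so no quasi-interior point exists. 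Hence neither of your recipes yields a strictly positive functional on $[\Meas\R,\sem{\Type}]$, and as it stands your argument proves the theorem only for types not involving $\to$ --- it stops exactly at the case that forces the paper out of the category of Banach lattices.

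The idea you are missing is to use the inductively given functionals on \emph{both} the domain and the codomain to build a norm-like quantity, rather than a point evaluation: for a positive operator $T$ the paper sets
\[
\chi(T)=\sup\left\{\psi(T(s)) : s\in\sem{\Type[S]}^+\setminus\{0\},\ \phi(s)\leq 1\right\},
\]
the supremum over the whole positive $\phi$-unit ball (mimicking the operator norm, with the norm on the domain replaced by $\phi$). This needs neither separability nor quasi-interior points: the index set is nonempty because $\phi$ is strictly positive; $\chi(T)<\infty$ follows from boundedness of regular operators (Proposition \ref{prop:regbounded}); and if $T\geq 0$, $T\neq 0$, then since the cone of $\sem{\Type[S]}$ is generating there is $a>0$ with $T(a)\neq 0$, and rescaling $a$ by $1/\phi(a)$ gives $\chi(T)\geq\psi(T(a))/\phi(a)>0$ by strict positivity of $\psi$. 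The paper then extends $\chi$ to all regular operators by writing $T=T^+-T^-$ and setting $\chi(T)=\chi(T^+)-\chi(T^-)$. (Be aware that this extension step is itself delicate: the additivity of $\chi$ on positive operators, which is what would make the extension well defined, is asserted rather than argued in the paper, and $[\sem{\Type[S]},\sem{\Type}]$ need not be a lattice, so the decomposition $T=T^+-T^-$ is not canonical. But the supremum construction is the structural idea that both of your recipes lack, and it is what lets the induction pass through function types over non-separable domains.)
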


The strictly positive functional $\phi_{\sem{\Type}}$ can be thought of as a generalisation to all types of the functional on measures which consists in evaluating the mass of the whole space, i.e. of $ev_X: \Meas X\to\R, \mu\mapsto\mu(X)$. With this notion in place we can provide a semantics to assignments. Given a sequent $\context e:\Type$, let $\phi_{\sem{\Type}}$ be the strictly positive functional on $\sem{\Type}$ constructed in Th. \ref{Th:poslinfunc} and let us write $\sem{\tGamma[i\mapsto\Type]}$ as $\sem{\tGamma_1}\pptp\sem{\Type}\pptp\sem{\tGamma_2}$. We now define the multilinear map 
\begin{align*}
&\sem{\tGamma_1}\times\sem{\Type}\times\sem{\tGamma_2}\longrightarrow\sem{\Type}\\
&(\gamma_1,t,\gamma_2)\mapsto\begin{cases}
\phi_{\sem{\Type}}(t) \sem{e}(\gamma_1\pptp t\pptp\gamma_2) &\text{if }\tGamma(i)=\Type\\
\phi_{\sem{\Type}}(t) \sem{e}(\gamma_1\pptp\gamma_2) &\text{else }
\end{cases}
\end{align*}
This defines the unique linearizing operator\footnote{In fact a \emph{nuclear operator} \cite{abramsky1999nuclear}.}
\[
\sem{x_i:=e}:\sem{\tGamma[i\mapsto\Type]}\longrightarrow \sem{\Type}
\]

In the case where the context $\tGamma$ is empty, the premise of the typing rule for assignments is interpreted as an operator $\sem{e}: \R\to\sem{\Type}$ and we can therefore strengthen the definition of $\sem{x_i:=e}$ as follows:
\[
\sem{x_i:=e}: \sem{\Type}\to\sem{\Type}_{\sem{e}(1)},\quad t\mapsto \phi_{\sem{\Type}}(t)\sem{e}(1)
\]
In the empty context, the general rule for variable assignment is a consequence of the rule creating Bayesian types since $(\Type,e)<:\Type$, i.e.\@ there is no disagreement between the two rules.

As a simple example, it is easy to type-check the program \texttt{x:=3.5} and see by unravelling the definition that it is interpreted as the operator $\Meas \R\to (\Meas\R)_{\delta_{3.5}}, \mu\mapsto \mu(\R)\delta_{3.5}$. In particular any probability distribution gets mapped to $\delta_{3.5}$. This is the semantics of assignment of \cite{K81c}.

\subsubsection{\uline{Sequencing and sequential composition}} These are conceptually straightforward as they essentially implement some form of function composition. The only difficulty resides in the bookkeeping of contexts which is a bit cumbersome. 

Given $\sem{e_1}:\sem{\tGamma}\to\sem{\Type[S]}$ and $\sem{e_2}: \sem{\tt{\Delta} [i\mapsto\Type[S]]}\to \sem{\Type}$  we define
$N=\sup(\supp(\tGamma)\cup\supp(\tt{\Delta}))$, we assume w.l.o.g. that $i\notin\supp(\tGamma)$ and define the semantics of $\tt{let}~x_i=e_1~\tt{in}~e_2$ as the unique operator which linearizes the multilinear map
\begin{align*}
&\prod_{i=1}^N \sem{(\tGamma\oplus\tt{\Delta})(i)} \to \sem{\Type}\\
& (x_1,\ldots,x_N) \mapsto e_2\left(\bigotimes_{i\in \supp(\tt{\Delta})}x_i,e_1\left(\bigotimes_{i\in\supp(\tGamma)}x_i\right)\right)
\end{align*}
The disjointness condition on the contexts $\tGamma$ and $\tt{\Delta}$ implements the resource-awareness of the system: $e_1$ consumes the context $\tGamma$, so no part of it can be re-used in the computation $e_2$.  

Sequential composition is just a generalisation of sequencing where $e_1$ outputs a store and the `cut' can take place over more than one type. As a simple example consider the program $\tt{x_1:=3.5~;~x_2:=7.3}$. Using the sequential composition rule we can derive the following typing-checking proof
\begin{center}
\footnotesize
\AxiomC{$\emptyset\vdash 3.5:\tReal$}
\UnaryInfC{$[1\mapsto \tReal]\vdash \tt{x_1:=3.5}:(\tReal,3.5)$}
\AxiomC{$\emptyset\vdash 7.3:\tReal$}
\UnaryInfC{$[2\mapsto \tReal]\vdash \tt{x_2:=7.3}:(\tReal,7.3)$}
\BinaryInfC{$[1\mapsto \tReal,2\mapsto \tReal]\vdash\tt{x_1:=3.5~;~x_2:=7.3}:(\tReal,3.5)\otimes (\tReal,7.3) $} 
\DisplayProof
\end{center}
The semantics of the program is the operator defined by
\begin{align*}
&\Meas\R\pptp\Meas\R\to (\Meas\R)_{\delta_{3.5}}\pptp(\Meas\R)_{\delta_{7.3}},\\ &\mu\otimes \nu\mapsto \mu(X)\delta_{3.5}\otimes\nu(X)\delta_{7.3}
\end{align*}

\subsubsection{\uline{$\lambda$-abstraction and function application}} These are interpreted exactly as expected in a monoidal closed category, namely via the adjunction $-\pptp\sem{\Type[S]} \dashv [\sem{\Type[S]},-]$ and ordinary function application.
\begin{remark}
While the denotation of $\lambda$-abstraction is immediately given by the monoidal closed structure of $\Roban$, the following point is worth making. Assume a context of ground types only. In the system of \cite{K81c}, such a context is of the shape $\Meas(X_1\times \ldots\times X_n)$, i.e. \emph{any} joint distribution over the variables can be considered as an input to the program. If we were Cartesian closed, a context would be of the shape $\Meas X_1\times \ldots\times\Meas X_n$, i.e. \emph{only product distributions} would be considered as potential inputs to the program. Our semantics lies somewhere in between these two possibilities since a context is of the shape $\Meas X_1\pptp\ldots\pptp\Meas X_n$. This means that not all joint probabilities can be $\lambda$-abstracted on, only those which live in the tensor product (i.e. limits of Cauchy sequences of linear combinations of product measures). Put differently, we can only $\lambda$-abstract if the probabilistic state of the machine is prepared (to use a quantum analogy) to a distribution in the positive projective tensor product.
\end{remark}
\subsubsection{\uline{Conditionals and $\tt{while}$ loops}} We provide the semantics of conditionals, the semantics of $\tt{while}$ loops then follows exactly as in \cite{K81c}: by first writing the fixpoint equation in terms of conditionals and then solving it using the side condition that $\sem{\Gamma}$ is order-complete.

Given a boolean test $\context e:\tBool$ interpreted as an operator $\sem{e}:\sem{\tGamma}\to\Meas 2$, the order-completeness of $\sem{\Gamma}$ allows us to define the maps
\begin{align*}
& T_e:\sem{\tGamma}^+\hspace{-1ex}\to\sem{\tGamma}^+, \gamma \mapsto \bigwedge\{0\leq \gamma'\leq \gamma:\sem{e}(\gamma')(1)=\sem{e}(\gamma)(1)\}\\
& F_e:\sem{\tGamma}^+\hspace{-1ex}\to\sem{\tGamma}^+, \gamma \mapsto \bigwedge\{0\leq \gamma'\leq \gamma:\sem{e}(\gamma')(0)=\sem{e}(\gamma)(0)\}
\end{align*}

\begin{proposition}\label{prop:TeFe}
The maps $T_e$ and $F_e$ are additive and linear over $\R^+$.
\end{proposition}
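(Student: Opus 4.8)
The plan is to work with the two positive linear functionals $\phi_1,\phi_0:\sem{\tGamma}\to\R$ defined by $\phi_b(\gamma)=\sem{e}(\gamma)(b)$ for $b\in\{0,1\}$. These are positive because $\sem{e}$ is a positive operator and evaluation at a point of $\Meas 2$ is a positive functional. The argument for $T_e$ uses only $\phi_1$ and the argument for $F_e$ is word-for-word identical with $\phi_0$, so I would treat $T_e$ and abbreviate $\phi=\phi_1$. For $\gamma\in\sem{\tGamma}^+$ write $A_\gamma=\{\gamma':0\leq\gamma'\leq\gamma,\ \phi(\gamma')=\phi(\gamma)\}$, so that $T_e(\gamma)=\bigwedge A_\gamma$. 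Each $A_\gamma$ is nonempty (it contains $\gamma$) and order-bounded (it lies in the interval $[0,\gamma]$), so by the standing order-completeness of $\sem{\tGamma}$ (\ref{Th:ordercomplete}) the infimum exists and dominates $0$; this is what makes the definition meaningful.

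Homogeneity over $\R^+$ is a direct rescaling: for $\lambda>0$ the map $\gamma'\mapsto\lambda\gamma'$ is an order isomorphism carrying $A_\gamma$ bijectively onto $A_{\lambda\gamma}$, whence $T_e(\lambda\gamma)=\bigwedge A_{\lambda\gamma}=\lambda\bigwedge A_\gamma=\lambda T_e(\gamma)$, and $T_e(0)=0$ settles the case $\lambda=0$. For additivity I would prove the two inequalities separately. The inequality $T_e(\gamma_1+\gamma_2)\leq T_e(\gamma_1)+T_e(\gamma_2)$ follows because $A_{\gamma_1}+A_{\gamma_2}\subseteq A_{\gamma_1+\gamma_2}$ (a sum of admissible witnesses is admissible, using linearity of $\phi$), together with the standard fact that in a Riesz space the infimum distributes over sums, $\bigwedge(S_1+S_2)=\bigwedge S_1+\bigwedge S_2$ for order-bounded-below sets, itself a consequence of translation-invariance of the order.

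The reverse inequality is the crux and is where the lattice structure is essential: it suffices to show that $T_e(\gamma_1)+T_e(\gamma_2)$ is a lower bound of $A_{\gamma_1+\gamma_2}$. Given $\gamma'\in A_{\gamma_1+\gamma_2}$, the Riesz decomposition property (available since $\sem{\tGamma}$ is a Banach lattice by \ref{Th:meastypes}) lets me split $\gamma'=\gamma_1'+\gamma_2'$ with $0\leq\gamma_i'\leq\gamma_i$. Then $\phi(\gamma_1')+\phi(\gamma_2')=\phi(\gamma')=\phi(\gamma_1)+\phi(\gamma_2)$, while positivity of $\phi$ forces $\phi(\gamma_i')\leq\phi(\gamma_i)$ for each $i$; two reals each no larger than a target whose sum equals the sum of the targets must individually attain the targets, so $\phi(\gamma_i')=\phi(\gamma_i)$ and hence $\gamma_i'\in A_{\gamma_i}$. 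Therefore $\gamma_i'\geq T_e(\gamma_i)$ and $\gamma'=\gamma_1'+\gamma_2'\geq T_e(\gamma_1)+T_e(\gamma_2)$, as required.

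The only genuine obstacle is this reverse inequality; the two ingredients that make it go through are the Riesz decomposition property and the fact that a positive functional cannot increase along $\leq$, which together pin each piece $\gamma_i'$ to the admissible set $A_{\gamma_i}$. Note that the proof deliberately avoids claiming that $\phi$ preserves the infimum defining $T_e(\gamma)$ (it need not, as $\phi$ is only positive, not order-continuous); this subtlety is sidestepped entirely. The identical argument with $\phi_0$ in place of $\phi_1$ yields additivity and $\R^+$-linearity of $F_e$.
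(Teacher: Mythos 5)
Your proof is correct and takes essentially the same route as the paper: the paper's own (one-line) argument likewise reduces additivity to the Riesz decomposition property of $\sem{\tGamma}$ plus linearity of $\sem{e}$, and your two inequalities—subadditivity via $A_{\gamma_1}+A_{\gamma_2}\subseteq A_{\gamma_1+\gamma_2}$ and additivity of infima, superadditivity via the decomposition $\gamma'=\gamma_1'+\gamma_2'$ pinned to $A_{\gamma_i}$ by positivity of $\phi$—are exactly the details it leaves as ``follows easily,'' with homogeneity dismissed there as posing ``no problem.'' One small repair: the lattice structure of $\sem{\tGamma}$ (needed for Riesz decomposition) should be justified from order-completeness of the context (Th.~\ref{Th:ordercomplete}) together with the generating cone of a regular ordered space, as the paper does, rather than from Th.~\ref{Th:meastypes}, since an order-complete type may be a function type, which is not a measure type.
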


Since regular ordered Banach spaces have a generating cone, we can uniquely extend $T_e$ and $F_e$ to the entire space $\sem{\tGamma}$ and define the semantics of the conditional $\context \tt{if} \enskip e_1 \enskip \tt{then}\enskip e_2\enskip\tt{else}\enskip e_3:\Type$ as the operator
\[
\sem{\tGamma}\to\sem{\Type}, \gamma\mapsto \sem{e_2}\circ T_{e_1}(\gamma)+\sem{e_3}\circ F_{e_1}(\gamma)
\]

To see why this definition makes sense we will briefly show that it recovers the semantics of \cite{K81c}. In \cite{K81c}, $\sem{\tGamma}$ is a measure space $\Meas X$ and $\sem{e_1}:\Meas X\to\Meas 2$ is of the shape $\Meas b$ for a measurable map $b: X\to 2$ which specifies a measurable subset $B$ of $X$. We claim that $T_e: \Meas X\to\Meas X$ sends a probability measure $\mu$ to the measure $\mu_B$ defined by $\mu_B(A)=\mu(A\cap B)$, exactly as the operator $e_B$ of \cite[3.3.4]{K81c}. By unravelling the definition we want to show that
\[
\mu_B=\bigwedge\{0\leq \nu\leq \mu : \nu(B)=\mu(B)\}
\] 
Note first that $\mu_B$ belongs to the set above, so it remains to show that it is its minimal element. Let $\nu$ also belong to this set, and let $A$ be a measurable set. We decompose $A$ as $A=(A\cap B)\uplus(A\cap B^c)$. By definition
\[
\mu_B(A\cap B^c)=0\leq \nu(A\cap B^c)\text{ since }0\leq \nu.
\]
Moreover we have
\[
\mu_B(A\cap B)=\mu(A\cap B)=\nu(A\cap B).
\]
For if we had $\nu(A\cap B)<\mu(A\cap B)$, then in order to keep $\mu(B)=\nu(B)$ we would need $\nu(A^c\cap B)>\mu(A^c\cap B)$, a contradiction with $\nu\leq \mu$. Thus $\mu_B\leq \nu$ as claimed and the semantics of $\tt{if} \enskip e_1 \enskip \tt{then}\enskip e_2\enskip\tt{else}\enskip e_3:\Type$ becomes the operator
\[
\mu\mapsto \sem{e_2}(\mu_B) + \sem{e_3}(\mu_{B^c})
\]
exactly as in \cite{K81c}.

\subsubsection{\uline{$\tt{sampler}$ and $\tt{sample}$}} are given a semantics which can be understood as generalisations of the unit and co-unit of the Giry monad \cite{giry1982categorical} respectively. First we need the following easy result.

\begin{theorem}\label{Th:meastypesmeas}
The semantics of every measure type is isometrically and monotonically embedded in a space of measures $\Meas X$.
\end{theorem}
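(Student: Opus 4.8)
The plan is to prove, by structural induction on the grammar of measure types, the stronger statement that $\sem{\Type}$ is an \emph{AL-space} (not merely a Banach lattice, as Theorem \ref{Th:meastypes} already guarantees). This strengthening pays off at the end through a uniform embedding step: by Kakutani's representation theorem for AL-spaces \cite{aliprantis2006positive}, every AL-space is isometrically lattice-isomorphic to some $\Lps[1](\Omega,m)$, and by the ordered Radon--Nikodym theorem (Example \ref{ex:Band}) $\Lps[1](\Omega,m)\simeq(\Meas\Omega)_m$ is a principal band in $\Meas\Omega$. A band carries the restricted norm and order, so its inclusion into $\Meas\Omega$ is isometric and a lattice homomorphism, hence monotone. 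Composing the Kakutani isomorphism, the Radon--Nikodym isomorphism and this band inclusion produces the desired isometric, monotone embedding $\sem{\Type}\hookrightarrow\Meas\Omega$. It therefore suffices to establish the AL property for every measure type.

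The base case and two of the inductive cases are immediate. For a ground type $\sem{\Type[G]}=\Meas X$ and for a probability type $\sem{\tMeas\Type}=\Meas\sem{\Type}$, the space is literally of the form $\Meas Y$, which is recorded in the excerpt as an AL-space (so the $\tMeas$ case does not even use the inductive hypothesis). For a Bayesian type, $\sem{(\Type,\mu)}=\sem{\Type}_\mu$ is the principal band generated by $\mu$ in the AL-space $\sem{\Type}$; since the identity $\norm{u+v}=\norm{u}+\norm{v}$ holds for all positive $u,v$ of the ambient space and the positive elements of a band are positive there with the same norm, the band is again an AL-space.

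The only case with genuine content is the tensor type $\sem{\Type[S]\otimes\Type}=\sem{\Type[S]}\pptp\sem{\Type}$, where the inductive hypothesis makes both factors AL-spaces. Using Kakutani and Example \ref{ex:Band} as above, I represent the two factors as bands $(\Meas X_1)_{\mu_1}$ and $(\Meas X_2)_{\mu_2}$ in measure spaces. Since an AL factor is present, Theorem \ref{Th:ppte:ALspaces} identifies the positive projective tensor $\pptp$ with the ordinary projective tensor $\ptp$, and Theorem \ref{Th:pteL1L1} then gives $(\Meas X_1)_{\mu_1}\ptp(\Meas X_2)_{\mu_2}\simeq(\Meas(X_1\times X_2))_{\mu_1\times\mu_2}$, a band in a measure space and hence an AL-space. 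This closes the induction.

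I expect the tensor step to be the main obstacle, for two reasons. First, Theorem \ref{Th:ppte:ALspaces} is phrased as an isomorphism of \emph{Banach} spaces, so I must check that the canonical comparison map also preserves the order -- it does, because both $\pptp$ and $\ptp$ are quotients of the same free construction on $U\times V$ and the comparison map is the identity on elementary tensors, carrying positive cone to positive cone -- which is what lets the (order-sensitive) AL identity transfer across the isomorphism. Second, invoking Kakutani together with Example \ref{ex:Band} tacitly requires the representing measure to be localizable enough for the ordered Radon--Nikodym theorem; this is automatic for the standard Borel ground spaces occurring in our types, and in the abstract step one may instead appeal directly to the classical isometric lattice isomorphism $\Lps[1](m_1)\ptp\Lps[1](m_2)\simeq\Lps[1](m_1\times m_2)$ to keep the argument self-contained.
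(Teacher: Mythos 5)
Your overall strategy differs genuinely from the paper's. The paper keeps the induction invariant exactly as stated---an isometric monotone embedding into some $\Meas X$---and settles the tensor case by combining Theorem \ref{Th:ppte:ALspaces} with Theorem \ref{Th:pte:MM}, the vector-measure embedding $\Meas X\ptp\Meas Y\hookrightarrow\Meas(X\times Y)$; Kakutani's representation theorem never enters. Your strengthened AL invariant does buy two real things: it makes the appeal to Theorem \ref{Th:ppte:ALspaces} legitimate (that theorem needs one factor to be an AL-space, a fact the paper uses but never isolates; it follows from the embedding invariant, since an isometric monotone embedding into $\Meas X$ pulls the identity $\norm{u+v}=\norm{u}+\norm{v}$ back to the positive cone, but you make it explicit), and it lets you compute the tensor product of the \emph{whole} represented spaces rather than of subspaces of $\Meas X$ and $\Meas Y$, thereby sidestepping the fact that the projective norm is not injective (the paper's implicit step $\sem{\Type[S]}\ptp\sem{\Type}\hookrightarrow\Meas X\ptp\Meas Y$ is isometric only because bands are norm-one complemented, which goes unremarked).

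The gap is where you lean on Kakutani. Kakutani represents an AL-space as $\Lps[1](\Omega,m)$ over an \emph{abstract} measure space, and $m$ cannot in general be taken finite, nor even $\sigma$-finite: that would require a weak order unit, and $\sem{\tReal}=\Meas\R$ has none (a finite Borel measure $\mu$ on $\R$ has only countably many atoms, so some $\delta_x$ fails $\delta_x\ll\mu$). So already for the ground type $\tReal$, your reduction of the tensor factors to bands $(\Meas X_1)_{\mu_1}$ and $(\Meas X_2)_{\mu_2}$ with $\mu_i\in\Meas X_i$ is unavailable, and both Example \ref{ex:Band} and Theorem \ref{Th:pteL1L1}---each stated for finite measures---are being invoked outside their hypotheses. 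Your own caveat misdiagnoses the problem: the representing space $\Omega$ produced by Kakutani is not the ground space, so standard-Borel-ness of the latter is irrelevant, and this is the typical case rather than a corner case. The proof is repairable, but the repair is exactly what is missing: for the final embedding, drop Example \ref{ex:Band} and use the map $f\mapsto(B\mapsto\int_B f\,dm)$, which is an isometric lattice embedding of $\Lps[1](\Omega,m)$ into $\Meas\Omega$ for arbitrary $m$; for the tensor case, either prove $\Lps[1](m_1)\ptp\Lps[1](m_2)\simeq\Lps[1](m_1\times m_2)$ in the non-$\sigma$-finite setting (e.g., take the Kakutani representation as an $\ell^1$-direct sum of $\Lps[1]$-spaces of finite measures and use that $\ptp$ commutes with $\ell^1$-sums), or revert to the paper's Theorem \ref{Th:pte:MM}, which needs no finiteness assumption at all.
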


We can now define the semantics of $\tt{sampler}$. Suppose we have $\sem{e}:\sem{\tGamma}\to\sem{\Type}$ and, by Th. \ref{Th:meastypesmeas}, that $\sem{\Type}$ is isometrically and monotonically embedded in the space $\Meas X$. Now consider the map  $\eta: X\to\Meas X, x\mapsto \delta_x$ (which is \emph{not} an operator) and define denotation of $\tt{sampler}(e)$ as the positive operator
\[
\sem{\tt{sampler}(e)}:\sem{\tGamma}\to\Meas\sem{\Type}, \gamma \mapsto\Meas\eta(\gamma).
\] 

The semantics of \texttt{sample} works in the opposite direction. Suppose we have $\sem{e}:\sem{\tGamma}\to\Meas \sem{\Type}$ with $\sem{\Type}$ isometrically and monotonically embedded in $\Meas X$, then each element of $\Meas \sem{\Type}$ is also an element of $\Meas\Meas X$. We can define a map
\begin{align*}
m_X: ~ &\Meas\Meas X\to\Meas X,\\ 
&\rho\mapsto \lambda B. \int_{B^+(\Meas X)} ev_B(\mu) ~d\rho
\end{align*}
where we recall that $B^+(\Meas X)$ is the positive unit ball of the space $\Meas X$. This map clearly defines a positive operator, and in the case  where $\rho$ is supported by the set of probability distributions, i.e. the shell $\{\mu\in(\Meas X)^+:\norm{\mu}=1\}$ of the positive unit ball, $m_X$ coincides with the multiplication of the Giry monad. We are now ready to define
\[
\sem{\tt{sample}(e)}:\sem{\tGamma}\to\sem{\Type}, \gamma\mapsto m_X(\gamma).
\]
We can now interpret the type of the small Gaussian inference program of \cref{sec:gaussianex}. In defining the semantics of built-in operations we saw that the semantics of $\tt{\context[\emptyset]normal(0,1):\tMeas\tReal}$ is the linear map $\R\to\Meas\Meas\R$ mapping $1$ to the Dirac delta over the normal distribution $\mathcal{N}(0,1)$. It follows that
\[
\sem{\tt{sample(normal(0,1))}}(1)=\mathcal{N}(0,1)
\]
and by unravelling the definition we similarly find that
\begin{align*}
&\sem{\tt{sample(normal(sample(normal(0,1)),1)}}(1)=\\
&\lambda B. \int_{x\in\R}\mathcal{N}(x,1)(B)~d\mathcal{N}(0,1)=\mathcal{N}(0,\sqrt{2})
\end{align*}
which is the pushforward of $\mathcal{N}(0,1)$ by the kernel $\lambda x.~\mathcal{N}(x,1)$ described in \eqref{eq:kernelpushforward}. It follows that the output of the Gaussian inference program is interpreted as a linear operator
\[
(\Meas\R)_{\mathcal{N}(0,\sqrt{2})}\longrightarrow (\Meas\R)_{\mathcal{N}(0,1)}.
\]
We now turn to the semantics of the observe statement, which will show us what this operator actually is.

\subsubsection{\uline{Semantics of \texttt{observe}}}\label{sec:semobserve} 
Assume that we have $\sem{e}: \sem{(\Type[S],\mu)}\to \sem{\Type}$ with $\Type[S],\Type$ of measure type, i.e. we can type $\sem{e}$ as a regular operator $(\Meas X)_\mu \to \Meas Y$.

We now make the assumption, which we justify in Th. \ref{Th:programs} below, that $\sem{e}$ is positive. Under this assumption if $\nu\leq K\mu$, then $\sem{e}(\nu)\leq K\sem{e}(\mu)$, i.e. $\sem{e}$ restricts to an operator
\[
\sem{e}: (\Meas X)_{\mu}^{UB}\to(\Meas Y)_{\sem{e}(\mu)}^{UB}
\]
where $(\Meas X)_{\mu}^{UB}$ is the set of measures uniformly bounded by a multiple of $\mu$ (see \cite{ampba}).

The semantics of $\tt{observe}(e)$ is then fundamentally contained in the K\"{o}the dual operator
\[
\sem{e}\kd: \left((\Meas Y)_{\sem{e}(\mu)}^{UB}\right)\kd \to \left((\Meas X)_{\mu}^{UB}\right)\kd.
\]
It is not hard to check by using the Riesz Representation and Functional Representations natural transformations ($\RR$ and $\FR$ in Diagram \eqref{diag:summary}) that $\left((\Meas Y)_\nu^{UB}\right)\kd\simeq (\Lps(Y,\nu))\kd\simeq (\Meas Y)_\nu$, and thus $\sem{e}\kd$ can, modulo these isomorphisms, be typed as an operator
\begin{equation}\label{eq:observeSem}
\sem{e}\kd: (\Meas Y)_{\sem{e}(\mu)} \to (\Meas X)_\mu
\end{equation}
which is what the typing rule for \texttt{observe} requires. 

To illustrate how this semantics really implements the Bayesian inversion described in \cref{sec:BayesianInverse}, let us again consider our simple Gaussian inference program. The underlying Bayesian model is given by the probability kernel $\mathcal{N}(-,1):\R\to\mMeas \R$
and the prior $\mathcal{N}(0,1)$ on $\R$. Together these define a $\Krn$-arrow $(\R,\mathcal{N}(0,1))\to(\R,\mathcal{N}(0,\sqrt{2}))$ which is implemented by the program
\[
[\tt{x}\mapsto (\tReal,\tt{normal(0,1)})]\vdash\tt{sample(normal(x,1))}:\tReal
\]
whose denotation is the positive operator
\[
\Meas_{-}(\mathcal{N}(-,1)):(\Meas\R)_{\mathcal{N}(0,1)}\to \Meas \R.
\] 
Using the same argument as above, we can restrict this operator as follows 
\[
\Meas_{-}(\mathcal{N}(-,1)):(\Meas\R)^{UB}_{\mathcal{N}(0,1)}\to (\Meas\R)^{UB}_{\mathcal{N}(0,\sqrt{2})}.
\]
As stated above, all the information about the semantics of $\tt{observe(sample(normal(x,1)))}$ is contained in the K\"{o}the dual of this operator, which, through the Riesz Representation and Functional Representations natural transformation, can be typed modulo isomorphism as
\[
(\Meas_{-}(\mathcal{N}(-,1)))\kd: (\Meas\R)_{\mathcal{N}(0,\sqrt{2})}\to (\Meas\R)_{\mathcal{N}(0,1)}.
\]
Using the other half of diagram \eqref{diag:summary}, that is to say the Radon-Nikodym and Measure Representation natural transformations ($\RN$ and $\MR$ in diagram \ref{diag:summary}), this operator is equal, modulo isomorphism, to the operator
\[
\Meas_{-}(\mathcal{N}(-,1)\dg): (\Meas\R)_{\mathcal{N}(0,\sqrt{2})}\to (\Meas\R)_{\mathcal{N}(0,1)}.
\]
Here the Bayesian inverse of our original probability kernel appears explicitly, showing that our semantics indeed captures the notion of Bayesian inverse.

There is one final subtlety which we need to account for. Given $\sem{e}:\sem{(\Type[S],\mu)}\to\sem{\Type}$, the typing rule for \texttt{observe} in fact makes the whole semantics described above parametric in a choice of measure absolutely continuous w.r.t. the prior $\mu$ (see Fig. 1). This is a simple technicality: morally and practically the parameter will always be set to the prior itself, in which case we get as output of the program the K\"{o}the dual described by \eqref{eq:observeSem}. Mathematically however, we can choose as input any $\nu\ll\mu$; the output operator is then defined by the following tortuous journey (similar to constructions in \cite{ampba})
\[
\xymatrix
{
(\Meas Y)_{\sem{e}(\mu)}\ar[r]^{\simeq} & \left((\Meas Y)_{\sem{e}(\mu)}^{UB}\right)\kd\ar[r]^{i\kd} & \left((\Meas Y)_{\sem{e}(\nu)}^{UB}\right)\kd\ar[d]_{\sem{e}\kd} \\
(\Meas X)_{\mu} & (\Meas X)_{\nu}\ar[l]_{j} & \left((\Meas X)_{\nu}^{UB}\right)\kd \ar[l]_{\simeq}
}
\]
where $i,j$ are the obvious inclusions.

\begin{remark}
The semantics of \texttt{observe} via the K\"{o}the dual is more general than a semantics in terms of Bayesian inversion/disintegration. Nothing prevents the introduction of ground types which stand for measurable spaces in which disintegrations do not exist. However, the K\"{o}the dual will still exist. Thus our semantics is free of some of the `pointful' technicalities surrounding the existence of disintegrations, and follow the `pointless' perspective advocated in \cite{2017:FoSSaCS}. Similarly, we do not have to worry about the ambiguity cause by the fact that disintegrations are only defined up to a null set: the K\"{o}the dual of an operator between regular ordered Banach spaces exists completely unambiguously. 
\end{remark}

\subsection{Some properties of the semantics}
The development of the semantics in the previous section has built-in soundness:
\begin{theorem}
The semantics is sound w.r.t. to the typing rules of Fig. 1.
\end{theorem}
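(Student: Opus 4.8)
The plan is to argue by structural induction on the derivation of a typing judgment $\context e:\Type$, showing in each case that the semantic clause of \cref{sec:semanticsterms} is well-defined and produces a regular --- in fact positive --- operator $\sem{e}:\sem{\tGamma}\to\sem{\Type}$ of exactly the domain and codomain dictated by the rule. The induction rests on two structural facts established earlier: that the denotation of every type is a regular ordered Banach space, and more precisely a Banach lattice whenever the type is a measure type (Th. \ref{Th:meastypes}) and an order-complete space whenever the type is order-complete (Th. \ref{Th:ordercomplete}). Throughout, positivity of the denotations of subterms is available from Th. \ref{Th:programs}, so each inductive step only needs to verify that the prescribed construction lands in $\Roban$ with the declared typing.

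The structural and functional rules are routine. For constants the map $\lambda\mapsto\lambda\delta_c$ is manifestly a positive operator $\R\to\Meas G$; for variables it is the identity; and subtyping is interpreted by the inclusion of a principal band into its ambient Banach lattice, which is a regular operator. Built-in operations, assignment, sequencing and sequential composition are all obtained as the unique linearization of an explicit positive multilinear map through the universal property of $\pptp$, so in each case it suffices to check that this multilinear map is positive and bounded; the disjoint-support side conditions guarantee that the contexts factor as the required tensor products, and the strictly positive functional of Th. \ref{Th:poslinfunc} makes the assignment clause well-defined. Finally, $\lambda$-abstraction and application are the transpose and evaluation of the adjunction $-\pptp\sem{\Type[S]}\dashv[\sem{\Type[S]},-]$, so they yield morphisms of $\Roban$, the required object structure on the homset being supplied by Th. \ref{Th:internalhom}.

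For control flow, the maps $T_{e}$ and $F_{e}$ are well-defined precisely because the side condition forces $\sem{\tGamma}$ to be order-complete (Th. \ref{Th:ordercomplete}), so the relevant infima exist; Prop. \ref{prop:TeFe} shows they are additive and linear over $\R^+$, hence extend uniquely to positive operators over the generating cone, and the conditional is read off as their combination. The \texttt{while} clause then follows from the conditional case by writing and solving the associated fixpoint equation in the order-complete space, exactly as in \cite{K81c}. For the probabilistic primitives, Th. \ref{Th:meastypesmeas} provides the isometric, monotone embedding $\sem{\Type}\hookrightarrow\Meas X$ that makes $\Meas\eta$ and the averaging map $m_X$ meaningful, and both are readily checked to be positive operators of the stated types.

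The main obstacle is the \texttt{observe} rule, where well-definedness is genuinely nontrivial. Here one must first invoke positivity of $\sem{e}$ (Th. \ref{Th:programs}) to restrict it to the uniformly bounded bands $(\Meas X)^{UB}_\mu\to(\Meas Y)^{UB}_{\sem{e}(\mu)}$, and then apply the K\"othe-dual functor, which exists and is an order-continuous regular operator by Th. \ref{Th:Kothe}. The delicate point is the retyping of this dual: one must verify the isomorphism $((\Meas Y)^{UB}_\nu)\kd\simeq(\Meas Y)_\nu$ by composing the Riesz and Functional Representation natural isomorphisms of Diagram \eqref{diag:summary}, and then compose with the inclusions $i,j$ arising from $\nu\ll\mu$, so that the resulting operator has exactly the type $(\Meas Y)_{\sem{e}(\mu)}\to(\Meas X)_\mu$ required by \eqref{eq:observeSem}. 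Checking that these isomorphisms are natural and compose as claimed, and that the construction is independent of the arbitrary parameter $\nu$ up to the stated isomorphisms, is where the bulk of the bookkeeping lies and is the hardest part of the argument.
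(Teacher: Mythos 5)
Your proposal is correct and matches the paper's approach: the paper offers no separate proof, observing only that soundness is ``built-in'' because each semantic clause of \cref{sec:semanticsterms} was constructed to realize exactly the typing rule it interprets, which is precisely the clause-by-clause induction you spell out. The one refinement worth making explicit is that your appeal to Th.~\ref{Th:programs} for positivity of subterm denotations (needed in the \texttt{observe} case) should be phrased as a \emph{simultaneous} induction hypothesis rather than a citation, since Th.~\ref{Th:programs} is itself proved by induction on the same derivations and presupposes the semantics is well-defined --- the same forward reference the paper makes when it says the positivity assumption is ``justified in Th.~\ref{Th:programs} below.''
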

More importantly, we can extend \cite[Th. 3.3.8]{K81c} by a straightforward induction and show that:
\begin{theorem}\label{Th:programs}
The semantics of any program is a positive operator of norm $\leq 1$.
\end{theorem}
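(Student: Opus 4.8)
The plan is to proceed by \emph{structural induction} on the derivation of $\context e:\Type$ following the typing rules of Fig. 1, maintaining throughout the joint invariant that $\sem{e}:\sem{\tGamma}\to\sem{\Type}$ is \emph{positive} (it maps $\sem{\tGamma}^+$ into $\sem{\Type}^+$) and satisfies $\norm{\sem{e}}_r\le 1$. Since on positive operators the regular norm and the operator norm coincide, and since both positivity and a norm bound of $1$ are preserved by composition (operator norm is submultiplicative) and by the tensoring bifunctor $\pptp$ (whose norm is a positive crossnorm, so $\norm{f\pptp g}\le\norm{f}\norm{g}$), almost all of the work reduces to checking the generators and the two genuinely new phenomena: the control-flow constructs and the functional $\phi_{\sem{\Type}}$ appearing in assignments.

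The base and purely functional cases are routine. A constant $\sem{c}:\lambda\mapsto\lambda\delta_c$ is positive with $\norm{\delta_c}=1$, and a variable is an identity operator. The subtyping rule pre- and post-composes $\sem{e}$ with band inclusions $\sem{\Type}\hookrightarrow\sem{\Type'}$, which are positive isometric embeddings, so the invariant survives. Built-in operations are linearizations of $\Meas f_{\tt{op}}\circ\times$: the pushforward $\Meas f_{\tt{op}}$ is a positive contraction and the product-of-measures map is positive with $\norm{\mu\times\nu}=\norm{\mu}\norm{\nu}$, so the universal property of $\pptp$ for positive bilinear maps delivers a positive operator of norm $\le 1$. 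Sequencing, sequential composition and function application are assembled from compositions, tensors and the evaluation map of the closed structure; evaluation is positive and contractive since $\norm{f(u)}\le\norm{f}_r\norm{u}$, and $\lambda$-abstraction is its adjoint transpose, which is norm-non-increasing by Th. \ref{Th:internalhom}. Assignments are positive because $\phi_{\sem{\Type}}$ of Th. \ref{Th:poslinfunc} is a positive functional and $\sem{e}$ is positive by induction; the norm bound follows once $\phi_{\sem{\Type}}$ is normalized so as to generalize $ev_X$ (which has norm $1$ on the positive cone), whence on positive inputs $\sem{x_i:=e}$ is a rescaling by $\phi_{\sem{\Type}}(t)\le\norm{t}$ composed with the contraction $\sem{e}$.

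The delicate cases are the imperative control-flow constructs. For the conditional I would first record that the operators $T_{e_1},F_{e_1}$ of Prop. \ref{prop:TeFe} map the positive cone into itself with $T_{e_1}(\gamma)\le\gamma$ and $F_{e_1}(\gamma)\le\gamma$ (so each is a contraction by \ref{R1}), and that they split $\gamma$, i.e. $T_{e_1}(\gamma)+F_{e_1}(\gamma)\le\gamma$ with the two summands essentially disjoint (in the measure model, the mutually singular pieces $\mu_B,\mu_{B^c}$ whose sum is $\mu$). Granting this, for positive $\gamma$ one estimates
\[
\norm{\sem{e_2}T_{e_1}\gamma+\sem{e_3}F_{e_1}\gamma}\le\norm{T_{e_1}\gamma}+\norm{F_{e_1}\gamma}=\norm{T_{e_1}\gamma+F_{e_1}\gamma}\le\norm{\gamma},
\]
using positivity of $\sem{e_2},\sem{e_3}$ with the induction hypothesis, additivity of the norm on positive vectors (the \eqref{eq:AL} property, valid in the measure models), and \ref{R1}. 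The \texttt{while} loop is then treated exactly as in \cite{K81c}: its denotation is written as the least fixpoint of the conditional-based functional and solved using the side condition that $\sem{\tGamma}$ is order-complete (Th. \ref{Th:ordercomplete}); the iterates form an increasing, order-bounded net of positive contractions, and $\sigma$-order continuity guarantees their supremum is again a positive operator of norm $\le 1$.

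The main obstacle I anticipate is precisely this control-flow step: establishing the splitting inequality $T_{e_1}(\gamma)+F_{e_1}(\gamma)\le\gamma$ from the infimum definitions of $T_{e_1},F_{e_1}$, together with the norm additivity on the resulting positive pieces, in a \emph{general} order-complete regular ordered Banach space rather than only in an AL-space; and, for \texttt{while}, verifying that the fixpoint remains a \emph{contraction} and not merely norm-bounded, which is where order-completeness and $\sigma$-order continuity do the real work. Everything else amounts to the bookkeeping of contexts and routine submultiplicativity and crossnorm estimates.
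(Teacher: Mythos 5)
Your induction scheme and invariant (positivity together with norm at most $1$) are exactly those of the paper, and the cases you do treat --- constants, variables, subtyping, built-in operations, assignment, sequencing, sequential composition, $\lambda$-abstraction and application, conditionals and \texttt{while} --- are handled in essentially the paper's way (your analysis of the conditional is in fact more careful than the paper's one-line appeal to monotonicity of the norm, and the splitting/additivity issue you flag there is a legitimate subtlety). But there is a genuine gap: you never treat the probabilistic operations, and in particular \texttt{observe}, which is a typing rule of Fig.~1 and the centerpiece of the language. Its denotation is not built from compositions, tensors, evaluations or transposes, so it is not covered by any of the general preservation principles your argument reduces things to: $\sem{\texttt{observe}(e)}$ is defined as the K\"{o}the dual $\sem{e}\kd$ of the restriction of $\sem{e}$ to uniformly bounded measures, composed on both sides with the isomorphisms $\left((\Meas Y)_{\nu}^{UB}\right)\kd\simeq(\Meas Y)_{\nu}$. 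The missing step is precisely the paper's closing observation: the K\"{o}the dual of a positive operator is positive (if $T\geq 0$ and $\phi\geq 0$ then $\phi\circ T\geq 0$), dualization does not increase the norm (the dual of an operator of norm $\leq 1$ has norm $\leq 1$), and the identifications above are isometric order isomorphisms, so the invariant passes through.

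The same omission applies, less critically, to \texttt{sampler} and \texttt{sample}: these are new base cases of the induction, not consequences of the closure properties (composition, $\pptp$, adjoint transposition) on which your reduction relies. One must check that $\Meas\eta$, being a pushforward along a measurable map, is positive of norm $1$, and that $m_X$ is positive and contractive because the integral defining it ranges over the positive unit ball $B^+(\Meas X)$, on which $ev_B(\mu)\leq\norm{\mu}\leq 1$. These verifications are short, but without them --- and above all without the \texttt{observe} case --- the induction over the derivation trees of Fig.~1 is incomplete.
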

However another result of \cite{K81c}, namely that the denotation of a program is entirely determined by its action on point masses \cite[Th. 6.1]{K81c} does \emph{not} hold any more. The reason is interesting and is worth a few words. It is immediate from the type system (Fig. 1) and the denotation of Bayesian types that the domain of the semantics of an \texttt{observe} statement may \emph{not contain any point masses at all}. For example in the case of the Gaussian inference program of \ref{sec:gaussianex}, this domain is $(\Meas\R)_{\mathcal{N}(0,1)}$ which contains no point masses at all.

\subsection{Comparison with semantics \`{a} la Scott.}\label{sec:Scott}

There are interesting parallels to be drawn between our semantics and the Scott-Strachey semantics in terms of domains. 

Looking at ground types first, it is worth noting that just like the \emph{flat domain} functor turns a set (of integers for example) into a valid semantic object (a domain), so the functor $\Meas$ turns a measurable set into a valid semantic object (a regular ordered Banach space). Similarly, just like the flat domain functor can turn a partial map between sets into a total map between domains, so the functor $\Meas$ turns a partial measurable map into a linear operator. Partiality is encoded by the presence of the bottom element in the case of domain, and by the possibility to lose mass (i.e. get subdistributions) in the case of spaces of measures.

We do not know yet if the semantics of every program in our language is $\sigma$-order continuous, which would be the equivalent of Scott-continuous in our setting. The fundamental difference however is that our semantic category is not Cartesian closed, but monoidal closed.

\section*{Acknowledgment}
The authors would like to thank Ilias Garnier for bringing \cite{min1983exponential} to their attention.

%
%
\bibliographystyle{IEEEtran}
\bibliography{bib/dahlqvist-kozen}

\newpage

\section*{Appendix}
\subsection{Proofs}

\begin{proof}[Proof of Proposition \ref{prop:regbounded}]
By the triangle inequality it is enough to reason about positive operators. The proof is by contradiction. Suppose that $f: U\to V$ is positive but not bounded, then we can find a sequence $\{x_n\}$ in $U$ with $\norm{x_n}=1$ such that $\norm{f(x_n)}\geq n^3$. By the axioms R1 and R2 and the positivity of $f$ we can assume w.l.o.g. that $x_n\geq 0$. Since $\sum_{n=1}^\infty \frac{\norm{x_n}}{n^2}$ converges, we have by completeness of Banach spaces that $\sum_{n=1}^\infty \frac{x_n}{n^2}$ converges to an element $x$ of $U$. Since every $x_n\geq 0$, it follows that $0\leq \frac{x_n}{n^2}\leq x$. Since positive operators are automatically monotone it follows from axiom R1 that:
\[
n=\frac{n^3}{n^2}\leq \norm{f\left(\frac{x_n}{n^2}\right)}\leq \norm{f(x)}
\]
for every $n\in\N$, a contradiction. Thus $f$ is (norm-) bounded and therefore continuous.
\end{proof}

\begin{proof}[Sketch of the proof of Theorem \ref{Th:Kothe}]
The proof of \cite[Th. 20.2, Cor. 20.3]{zaanen2012introduction} also holds when the domain space is a regular ordered space because the only property of the domain being used is that the positive cone is generating. Thus the set $V^\sim$
 of regular functionals on $V$ forms an order-complete Riesz space. Similarly, the proof of \cite[Th. 4.74]{aliprantis2006positive} holds when the domain space is a regular ordered space. Thus $V^\sim$ forms a Banach space as well. From \cite[Th. 22.2]{zaanen2012introduction} we know that the set $V\kd$ is a band in $V^\sim$. Now assume that $v_n$ is a Cauchy sequence in $V\kd$. We know that it converges to a $v\in V^\sim$. By \cite[Th. 15.6]{zaanen2012introduction} every norm convergent  sequence in a Banach lattice has a subsequence converging in order. Let $u_n$ be this sequence, i.e. $u_n$ converges in order to $v$. Since $V^\sim$ is order complete we can further assume by the $\liminf$ construction that $u_n$ is increasing, i.e. $u_n\uparrow v$. But since $V\kd$ is band, it must follow that $v\in V\kd$ as desired.
\end{proof}

\begin{proof}[Proof of Theorem \ref{Th:pteL1L1}]
By Th. \ref{Th:RN} $(\Meas X)_\mu\simeq \Lps[1](X,\mu)$ and $(\Meas Y)_\nu\simeq \Lps[1](Y,\nu)$. For each $u\in \Lps[1](X,\mu)\ptp\Lps[1](Y,\nu)$, let $f_u\in\Lps[1](X,\Lps[1](Y,\nu),\mu)$ be the associated Bochner $\mu$-integrable function given by Th \ref{Th:pte:L1}. This map in turns defines $\phi_u: X\times Y\to \R$ by $\phi_u(x,y)=f_u(x)(y)$. It is easy to see that $\phi_u\in \Lps[1](X\times Y,\mu\times \nu)$:
\[
\int_{X\times Y} \phi(x,y)~d(\mu\times \nu)=\int_X \int_Y f_u(x)(y) ~d\nu~d\mu
\]
By Bochner's integrability theorem \cite[11.44]{aliprantis} $f_u$ is Bochner $\mu$-integrable iff $\norm{f_u}_1$ is Lebesgue integrable which shows that the integral above is finite. The fact that the operation $u\mapsto f_u\mapsto\phi_u$ is an isometry is obvious, and it has an inverse which associates to $\phi\in\Lps[1](X\times Y,\mu\times \nu)$ the Bochner $\mu$-integrable function $f_\phi: X\to\Lps[1](Y,\nu) , x\mapsto \phi(x,\cdot)$.
\end{proof}

\begin{proof}[Proof of Theorem \ref{Th:pte:MM}]
By Th. \ref{Th:pte:M} $\Meas{X}\ptp\Meas{Y}$ can be embedded in $\Meas{(X,\Meas{Y})}$.
This allows us to define an embedding $J: \Meas{X}\ptp\Meas{Y}\to \Meas{(X\times Y)}$ as follows: for each $u\in \Meas{X}\ptp\Meas{Y}$, let $\mu_u$ denote the corresponding vector-valued  measure in $ \Meas{(X,\Meas{Y})}$, we can then define a measure on $X\times Y$ by putting for any rectangle $A\times B$ of $X\times Y$
\[
J(\mu_u)(A\times B):= \mu_u(A)(B)
\]
It is easy to check that $\norm{J}=1$.
\end{proof}

\begin{proof}[Proof of Theorem \ref{Th:meastypes}]
By induction on the structure of the type. For ground type, it follows from the fact that their denotations are spaces of the shape $\Meas X$ which is always a Banach lattice (see \cref{sec:banlattices}). The same argument holds in the inductive case if the outermost constructor is $\tMeas$. For tensor product it follows from the fact that the positive projective tensor product of two Banach lattice is a Banach lattice \cite{fremlin1974tensor}. Finally, any band in a Banach lattice is a Banach lattice (see \cref{sec:bands}).
\end{proof}

\begin{proof}[Proof of Theorem \ref{Th:ordercomplete}]
By induction on the structure of the type. We start with the first layer of the grammar \eqref{eq:completetypes}. Ground types are interpreted as spaces of the shape $\Meas X$ which are AL spaces, and thus order-complete by Th. \ref{Th:ALcomplete}. Bayesian types $(\Type[G],\mu)$ built from ground types are of the shape $\Meas X_\mu$, and thus isomorphic to an $\Lps[1]$-space as was shown in Ex. \ref{ex:Band}. These are AL spaces, and are thus order-complete. The case of positive projective tensors of the shape $(\Type[G],\mu)\otimes (\Type[G],\mu)$ then follows from Th. \ref{Th:pteL1L1}.
Finally, for the outer layer of the syntax, it is clear that whatever $\Type[S]$ is, since $\sem{\tMeas \Type[S]}=\Meas\sem{\Type[S]}$, we get an order-complete space. The case of internal homs follows from the well-known result from the theory of Riesz space which states that if $V$ is an order-complete space, so is $[U,V]$ (for a net $\{T_\alpha\}$ in $[U,V]$, the join is defined at each $u\in U^+$ as $\bigvee_\alpha T_\alpha(u)$ by order-completeness of $V$, this is map is linear and extends to an operator $U\to V$ by the fact that regular ordered spaces have a generating cone, for more details see the proof of \cite[Th. 1.18]{aliprantis2006positive}).
\end{proof}

\begin{proof}[Proof of Theorem \ref{Th:poslinfunc}]
By induction on the structure of the types. For the base case note that all ground types are interpreted as spaces of the shape $\Meas G$, on which the evaluation function $ev_G: \Meas G\to\R, \mu\mapsto \mu(X)$ is defined and is a strictly positive functional. The same clearly holds for types of the shape $\tMeas\Type$. Suppose now that $\sem{\Type}$ has a strictly positive functional, then for any $\mu:\Type$, since $\sem{(\Type,\mu)}$ is a subspace of $\sem{\Type}$ is clearly inherits this strictly positive functional. Now assume both $\sem{\Type[S]}$ and $\sem{\Type}$ have strictly positive functionals $\phi$ and $\psi$ respectively. Then the map $\sem{\Type[S]}\times \sem{\Type}\to\R$ defined by $(t,s)\mapsto \phi(t)\phi(s)$ is bilinear and strictly positive. It follows that there exists a strictly positive functional $\phi\pptp\psi$ on $\sem{\Type[S]}\pptp\sem{\Type}=\sem{\Type[S]\otimes\Type}$. For internal homs we proceed as follows: since $[\sem{\Type[S]},\sem{\Type}]$ is generated by positive operators, let us first consider a strictly positive operator $T$ and define 
\[
\chi(T)=\sup\{\psi(T(s))\mid s\in \sem{\Type[S]}^+\setminus\{0\}, \phi(s)\leq 1\}
\]
(note how this definition mimics the definition of the operator norm). Since $\phi$ is strictly positive, the set over which the supremum is taken is not empty, and since $T$ and $\psi$ are strictly positive, $0<\chi(T)<\infty$ \cite[p. 545]{davies1968structure}. Moreover, since the supremum of a set of sums equals the sum of the suprema, $\chi(T+S)=\chi(T)+\chi(S)$, and similarly since scalar multiplication distributes over suprema $\chi(\lambda T)=\lambda\chi(T)$, it follows that $\chi$ is linear on strictly positive operators. We can then extend $\chi$ to all regular operators by putting $\chi(T):=\chi(T^+)-\chi(T^-)$.
\end{proof}

\begin{proof}[Proof of Proposition \ref{prop:TeFe}]
To see that it is additive, note first that since $\sem{\tGamma}$ is order-complete it is in particular a Riesz space, and it therefore has the Riesz decomposition property \cite[8.9]{aliprantis}, which means that if $0\leq z\leq x+y$ there exist $0\leq z_1,z_2$ such that $z_1+z_2=z$ and $z_1\leq x, z_2\leq y$. From this and the linearity of $\sem{e}$ it now follows easily that $T_e$ and $F_e$ are additive. The linearity poses no problem.
\end{proof}

\begin{proof}[Proof of Theorem \ref{Th:meastypesmeas}]
By induction on the structure of the type, with the base case being tautological. Similarly, the case of Bayesian types and types of the shape $\tMeas\Type$ are also tautological. The case of tensor product types follows from Th. \ref{Th:pte:MM} and Th. \ref{Th:ppte:ALspaces} (the monotonicity of the embedding is obvious).
\end{proof}

\begin{proof}[Proof of Theorem \ref{Th:programs}]
The proof is by induction on the derivation tree of the program (see Fig. 1). The result holds trivially for constants, variables and subtyping. For built-in operations it follows from the fact that the pushforward operation has norm 1. For assignments, one can easily show that the norm of the strictly positive functional built in Th. \ref{Th:poslinfunc} and used in defining the semantics has norm 1 and assignment therefore also has norm 1. The case of sequencing and sequential composition follows from the fact that the composition of operators of norm $\leq 1$ has norm $\leq 1$. For $\lambda$-abstraction, the result follows from the fact that the universal bilinear map has norm 1, and the result is trivial for function application. For conditionals it follows from the definition of $T_e$ and $F_e$ and the fact that the norm is monotone in a regular ordered Banach space. For \texttt{while} loops the result follows from \cite{K81c}. Finally, for the observe statement the result follows from the fact that the dual of an operator of norm $\leq 1$ has norm $\leq 1$.
\end{proof}

\subsection{Background material on measure theory and Lebesgue integration.}

\subsubsection{\uline{Measures}}\label{sec:meas}
A \emph{measurable space} is a set $X$ equipped with a collection $\salg$ of subsets---called the \emph{measurable subsets}---which can intuitively be understood as the observable parts of $X$, and are therefore also referred to as \emph{events} in the probabilistic literature. The collection $\salg$ of measurable sets must contain the empty set and be closed under complementation and countable union. It follows from the de Morgan laws that $\salg$ is also closed under countable intersection. Any collection of subsets satisfying these closure properties is called a \emph{$\sigma$-algebra}.\footnote{The ``$\sigma$'' in $\sigma$-algebra refers to ``countable unions'' in the same way as $F_\sigma$ sets are countable unions of closed sets in descriptive set theory, with $\sigma$ standing for the German \textit{Summe}, union.}
A $\sigma$-algebra is thus an $\omega$-complete Boolean algebra of sets. An important example of a measurable space is the set $\R$ together with its \emph{Borel $\sigma$-algebra}, the smallest $\sigma$-algebra containing the open subsets of $\R$ with the usual topology. In general, the \emph{Borel sets} of a topological space are the smallest $\sigma$-algebra containing the open sets.

A \emph{signed measure} on a $\sigma$-algebra $\salg$ over $X$ is a map $\mu:\salg\to[-\infty,+\infty]$ associating to every event $B$ a ``weight'' $\mu(B)$ satisfying (i) $\mu(\emptyset)=0$; (ii) $\mu$ can assume at most one of the values $-\infty$ and $+\infty$; and (iii) $\mu$ is \emph{$\sigma$-additive}, that is, $\mu(\cup_{i=1}^\infty A_i)=\sum_{i=1}^\infty \mu(A_i)$ for any countable pairwise disjoint collection $(A_i)_{i\in\N}$ of measurable sets. A signed measure is called a \emph{measure} if it assumes values in $[0,\infty]$, a \emph{finite signed measure} if it assumes values in $(-\infty,\infty)$ (equivalently, if $\absv{\mu(X)}<\infty)$, and a \emph{probability measure} if it is a measure and $\mu(X)=1$.

The study of signed measures can largely be reduced to the study of measures as the following result shows.
\begin{theorem}[Hahn-Jordan decomposition {\cite[\S5.6.1]{dudley2002real}}]\label{Th:HahnJordan}
Every signed measure $\mu$ has a unique decomposition as a difference $\mu=\mu^+-\mu^-$ of two measures, at least one of which is finite.
\end{theorem}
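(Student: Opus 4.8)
The plan is to prove the statement in two stages: first a \emph{Hahn decomposition} of the underlying space into a positive and a negative part, then the \emph{Jordan decomposition} of $\mu$ read off from that partition, followed by a uniqueness argument. Without loss of generality I would assume that $\mu$ never takes the value $+\infty$ (otherwise replace $\mu$ by $-\mu$), so that every $\mu(A)$ lies in $[-\infty,+\infty)$. Call a measurable set $P$ \emph{positive} if $\mu(B)\geq 0$ for every measurable $B\subseteq P$, and \emph{negative} dually.

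For the Hahn decomposition I would set $m=\sup\{\mu(P):P\text{ positive}\}$, choose positive sets $P_n$ with $\mu(P_n)\to m$, and take $P=\bigcup_n P_n$. A countable union of positive sets is positive, and since $\mu(P\setminus P_n)\geq 0$ one gets $\mu(P)\geq\mu(P_n)$ for all $n$, whence $\mu(P)=m$; because $\mu$ omits $+\infty$ this is a genuine value of $\mu$ and so $m<\infty$. It then remains to show that $N=X\setminus P$ is negative.

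The technical heart, and the step I expect to be the main obstacle, is the sub-lemma that any measurable $A$ with $\mu(A)>0$ contains a positive set $B$ with $\mu(B)>0$. I would prove this by a greedy-removal argument: inductively let $n_k$ be the least positive integer for which the remainder $A\setminus\bigcup_{j<k}A_j$ contains a subset $A_k$ of measure $<-1/n_k$, and set $B=A\setminus\bigcup_k A_k$. Since the $A_k$ all have negative measure, $\mu(B)\geq\mu(A)>0$; and convergence of $\sum_k\mu(A_k)$, which follows from finiteness of $\mu(A)$, forces $n_k\to\infty$, so $B$ can contain no subset of strictly negative measure and is therefore positive. Granting this, if $N$ failed to be negative it would contain a set of positive measure, hence a positive subset $B$ with $\mu(B)>0$; then $P\cup B$ would be positive with $\mu(P\cup B)=m+\mu(B)>m$, contradicting maximality.

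Finally I would define $\mu^+(E)=\mu(E\cap P)$ and $\mu^-(E)=-\mu(E\cap N)$. Both are nonnegative and $\sigma$-additive, hence measures; additivity over $E=(E\cap P)\uplus(E\cap N)$ gives $\mu=\mu^+-\mu^-$; and $\mu^+(X)=\mu(P)=m<\infty$, so $\mu^+$ is finite. For uniqueness I would note that the constructed measures are \emph{mutually singular} ($\mu^+$ is carried by $P$, $\mu^-$ by $N$); given any other decomposition into mutually singular measures $\mu=\nu^+-\nu^-$, testing both decompositions on measurable subsets of the respective carrier sets forces $\nu^\pm=\mu^\pm$, so the singular decomposition is unique.
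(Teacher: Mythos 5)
The paper itself gives no proof of this statement: it appears in the background appendix as a quoted textbook result, cited to Dudley \S 5.6.1, so there is no in-paper argument to compare against. Your proposal is the classical Hahn-then-Jordan proof, which is precisely the route the cited reference (and essentially every textbook) takes, and it is correct: the greedy-removal sub-lemma (with the implicit understanding that if at some stage no set of negative measure remains, the remainder is already the desired positive set), the maximality argument showing $N = X \setminus P$ is negative, and the definitions $\mu^{+}(E)=\mu(E\cap P)$, $\mu^{-}(E)=-\mu(E\cap N)$ are all sound, and finiteness of $\mu^{+}$ comes out of $\mu^{+}(X)=m<\infty$ exactly as you say. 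One point worth making explicit: uniqueness as \emph{literally} stated in the theorem is false, since for any finite measure $\nu$ the pair $(\mu^{+}+\nu,\ \mu^{-}+\nu)$ is another decomposition into two measures at least one of which is finite. Your restriction of the uniqueness claim to \emph{mutually singular} decompositions is the correct reading --- it is the hypothesis under which Dudley states the result --- and your carrier-set argument (evaluating both decompositions on subsets of $P$, $N$ and of the other decomposition's carriers) does pin down $\nu^{\pm}=\mu^{\pm}$. So the only defect lies in the paper's loose paraphrase of the cited statement, not in your argument.
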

\noindent The \emph{total variation measure} of a signed measure is defined as
\begin{align*}
\absv{\mu}(A)=\sup\left\{\sum_{n=1}^\infty \absv{\mu(A_n)}: \{A_n\}\subseteq\salg\text{ a partition of }A\right\}
\end{align*}
and the \emph{total variation} of $\mu$ is then defined as $\absv{\mu}(X)$.
If $\mu$ is a measure, then $\norm{\mu}=\mu(X)$. From Th. \ref{Th:HahnJordan} it we have
\[
\norm{\mu}=\sup_{A\in\salg} \absv{\mu(A)} + \absv{\mu(X\setminus A)}
\]
The last measure-theoretical definition we need  is the following: if   $\mu,\nu$ are measures on a $\sigma$-algebra $\salg$, $\mu$ is said to be \emph{absolutely continuous} w.r.t.\ $\nu$, notation $\mu\ll \nu$, if $\mu(B)=0$ whenever $\absv{\nu}(B)=0$.

A map $f: (X,\salg)\to (Y,\mathcal{G})$ between measurable spaces is called $\emph{measurable}$ if for every measurable subset $U\in\mathcal{G}$, $f\inv(U)\in\salg$. Given a such a measurable map and a measure $\mu$ on $X$, one defines the \emph{pushforward measure} $f_*(\mu)$ on $(Y,\mathcal{G})$ by $f_*(\mu)(B)=\mu(f\inv(B))$.

\subsubsection{\uline{Lebesgue integration} \cite[Ch. 4]{dudley2002real}}\label{sec:lebint} Given a measurable space $(X,\salg)$, a \emph{simple function} is any real-valued function $f: X\to\R$ expressible as a sum of the form
\begin{equation}\label{eq:simplefct}
f=\sum_{i=1}^n \alpha_i \one_{B_i}
\end{equation}
where $\alpha_i\in\R$, $B_i\in\salg$, and $\one_{B_i}$ is the characteristic function of $B_i$. A simple function can be equivalently expressed in many ways,  depending on the choice of $n$ and the $\alpha_i$ and $B_i$.
Given a signed measure $\mu$ on $\salg$, the \emph{integral of a simple function \eqref{eq:simplefct} w.r.t. to $\mu$} is given by 
\begin{equation}\label{eq:intsimplefct}
\int f~d\mu=\sum_{i=1}^n \alpha_i \mu(B_i)
\end{equation}
It can be shown that \eqref{eq:intsimplefct} is independent of the specific representation \eqref{eq:simplefct}. Note that \eqref{eq:intsimplefct} is linear as a function of $\mu$. It thus follows from Theorem \ref{Th:HahnJordan} that integration w.r.t. to signed measures is uniquely determined by integration w.r.t. to measures.
Similarly, \eqref{eq:intsimplefct} is \emph{linear} and \emph{positive} w.r.t. simple functions (positive simple functions have positive integrals). Since any simple function can be expressed as the difference of two positive simple functions, it follows that integrating simple maps w.r.t. to signed measures is completely determined by integrating positive simple maps w.r.t. to measures.
In this spirit, given a positive measurable function $f:X\to \R^+$, we define
\[
\int f~d\mu=\sup\left\{\int g~d\mu \mid 0\leq g\leq f, \hspace{1ex} g\text{ simple}\right\}
\]
For a general function $f: X\to \R$, we define $f^+(x)=\max(f(x),0)$ and $f^-(x)=-\min(f(x),0)$. Clearly $f^+, f^-$ are positive and $f=f^+-f^-$.
We now define the \emph{integral of $f$ w.r.t. to a signed measure $\mu$} as
\begin{align*}
\cint f~d\mu&=\cint f^+~d\mu - \cint f^-~d\mu\\
&=\cint f^+~d\mu^+-\cint f^+~d\mu^--\cint f^-~d\mu^-+\cint f^-~d\mu^-
\end{align*}

We shall return to the order-theoretic property of integration in \cref{subsec:OrdBan}. We conclude this summary of Lebesgue integration with one of the most important theorems in measure theory.
\begin{theorem}[Radon-Nikodym {\cite[Th 5.5.4]{dudley2002real}}]\label{Th:RN}
Let $(X,\salg)$ be a measurable space, $\mu$ a finite measure\footnote{The result holds more generally for any dominating $\sigma$-finite measure. We will only need finite signed measures in this paper.} on $X$, and $\nu$ a finite signed measure on $\salg$ such that $\nu\ll\mu$. Then there exists a measurable function $f: X\to \R$, unique up to a $\mu$-nullset, such that
\[
\nu(B)= \int \one_B.f~d\mu = \int_B f~d\mu 
\]
\end{theorem}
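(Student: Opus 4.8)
This final statement is the Radon–Nikodym theorem, so the plan is to reconstruct one of its standard proofs; I would use von Neumann's Hilbert-space argument, which is the cleanest and meshes well with the functional-analytic spirit of the paper. The first move is to reduce to the case of two finite \emph{measures}. Since $\nu$ is a finite signed measure, Theorem \ref{Th:HahnJordan} gives $\nu=\nu^+-\nu^-$ with $\nu^+,\nu^-$ finite measures, and $\nu\ll\mu$ forces $\nu^+,\nu^-\ll\mu$; a density for $\nu$ is then the difference of densities for $\nu^+$ and $\nu^-$, so I may assume $\nu\geq 0$.

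Next I would set $\lambda=\mu+\nu$, a finite measure with $\mu,\nu\ll\lambda$. The functional $L(g)=\int g~d\nu$ is bounded on the Hilbert space $\Lps[2](X,\lambda)$, since by Cauchy--Schwarz $\absv{L(g)}\leq\int\absv{g}~d\lambda\leq\norm{g}_2\,\lambda(X)^{1/2}$. By the Riesz representation theorem for Hilbert spaces there is $h\in\Lps[2](X,\lambda)$ with $\int g~d\nu=\int gh~d\lambda$ for all $g$, which after splitting $\lambda$ rearranges to the key identity $\int g(1-h)~d\nu=\int gh~d\mu$.

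The crux of the argument --- and the step I expect to be most delicate --- is pinning down the range of $h$. Testing the identity against $g=\one_A$ for $A=\{h\geq 1\}$ gives $\int_A(1-h)~d\nu=\int_A h~d\mu$, whose left side is $\leq 0$ and whose right side is $\geq\mu(A)\geq 0$; hence $\mu(A)=0$ and, by absolute continuity, $\lambda(A)=0$. A symmetric test against $g=\one_{\{h<0\}}$ shows $h\geq 0$ $\lambda$-a.e., so $0\leq h<1$ $\lambda$-a.e. I would then define $f=h/(1-h)$ and recover $\nu(B)=\int_B f~d\mu$ by substituting $g=\one_B\sum_{k=0}^n h^k$ into the key identity, which telescopes to $\int_B(1-h^{n+1})~d\nu=\int_B\sum_{k=1}^{n+1}h^k~d\mu$, and letting $n\to\infty$ --- dominated convergence on the left since $h^{n+1}\to 0$, monotone convergence on the right.

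Uniqueness up to a $\mu$-nullset is the standard argument: if $f_1,f_2$ both represent $\nu$ then $\int_B(f_1-f_2)~d\mu=0$ for every $B$, and taking $B=\{f_1>f_2\}$ and $B=\{f_1<f_2\}$ forces $f_1=f_2$ $\mu$-a.e. An alternative I would keep in reserve is the exhaustion proof --- maximizing $\int g~d\mu$ over all measurable $g\geq 0$ with $\int_B g~d\mu\leq\nu(B)$ for every $B$, then using a Hahn decomposition of the residual measure $\nu-f\mu$ to show it vanishes --- which avoids Hilbert-space machinery but concentrates the main difficulty in that Hahn-decomposition step instead of in the range analysis of $h$.
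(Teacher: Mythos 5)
Your proof is correct, but there is no proof in the paper to compare it against: Theorem \ref{Th:RN} appears in the appendix as background material and is simply quoted from Dudley (Th.~5.5.4), so the paper's ``proof'' is a citation. Your reconstruction is the standard von Neumann argument and every step checks out: the Hahn--Jordan reduction to $\nu\geq 0$ (with $\nu\ll\mu$ indeed passing to $\nu^{+},\nu^{-}$ via the Hahn decomposition), the boundedness of $g\mapsto\int g~d\nu$ on $\Lps[2](X,\lambda)$ for $\lambda=\mu+\nu$, the key identity $\int g(1-h)~d\nu=\int gh~d\mu$ from the Riesz representation, the localization $0\leq h<1$ $\lambda$-a.e. --- which is exactly the point where $\nu\ll\mu$ enters; without it the set $\{h=1\}$ carries a singular part and the same computation yields the Lebesgue decomposition instead --- the geometric-series/telescoping passage to $f=h/(1-h)$ with dominated convergence on the $\nu$-side and monotone convergence on the $\mu$-side, and the standard uniqueness argument. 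Two cosmetic points only: $h$ is an $\Lps[2]$-class, so one should fix a representative and redefine $f$ (say, as $0$) on the $\lambda$-null set $\{h<0\}\cup\{h\geq 1\}$ so that $f:X\to\R$ is everywhere defined; and the finiteness of $\int f~d\mu$ (hence that $f$ is a legitimate density of a \emph{finite} measure) follows by taking $B=X$ in the conclusion. Your fallback exhaustion proof would also work, but the Hilbert-space route you chose is both complete as written and closest in spirit to the functional-analytic machinery the paper is built on.
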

The function $f$ is called the \emph{Radon-Nikodym} derivative of $\mu$ w.r.t. $\nu$ and is denoted $\frac{d\nu}{d\mu}$. It was shown in \cite{MFPS2018} that the Radon-Nikodym derivative defines a natural transformation.

\subsection{Supplementary material on projective tensor products}

\subsubsection{\uline{Definition of a Banach space}} We only consider \emph{real} vector spaces in this paper. We therefore simply say `vector space' with the understanding that the scalar field is $\R$. A \emph{Banach space} is a vector space $V$ equipped with a norm $\norm{\cdot}: V\to[0,\infty)$ such that $V$ is complete for the metric induced by the norm, i.e.\ such that every Cauchy sequence in $V$ has a limit in $V$. The vector spaces $\R^n$ equipped with the usual Euclidean norm are Banach space. 

\subsubsection{\uline{Bochner integration}} One can generalise the ideas behind the Lebesgue integral to give a definition of the integral of a function taking its value in an arbitrary Banach space $V$. As in the case of the Lebesgue integral, we start with simple functions. Given a measurable space $(X,\salg)$, we generalise \eqref{eq:simplefct} and say that a function $f: X\to V$ is \emph{simple} if is expressible a sum

\[
f=\sum_{i=1}^n v_i\one_{B_i}
\]
where $v_i\in V, B_i\in\salg$. We define the \emph{Bochner integral} of this simple function as
\[
\int f~d\mu=\sum_{i=1}^n v_i\mu(B_i)
\]
that is to say the weighted average of the vectors $v_i$ with the weights $\mu(B_i)$. We will always assume that a Banach space comes equipped with its Borel $\sigma$-algebra. To extend Bochner integration to arbitrary measurable functions $X\to V$ is impossible in general since $V$, unlike $\R$, may have a very high dimension. Intuitively, it is in general not possible to approximate a function $f: X\to V$ with simple ones if the dimensionality of $V$ is too high. We thus restrict the definition of the Bochner integral to a limited class of measurable functions taking values in a separable subspace. Formally, a function $f: X\to V$ is \emph{$\mu$-essentially separately valued} if there exists $E\in\salg$ and a separable subspace $Y\subseteq V$ such that $\mu(E^c)=0$ and $f(E)\subseteq Y$. We can now state:
\begin{theorem}[Pettis Measurability Theorem {\cite[Prop 2.15]{ryan2013introduction}}]
Let $(X,\salg)$ be a measurable space, let $\mu$ be a finite measure on $\salg$, and let $f: X\to V$ be a function taking values in a Banach space. Then $f$ is Borel measurable and $\mu$-essentially separately valued iff there exists a sequence $(f_n)$ of simple functions converging $\mu$-a.e. to $f$, in which case $f$ is said to be $\mu$-measurable.
\end{theorem}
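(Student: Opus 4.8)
The statement is a biconditional, so the plan is to establish the two implications separately; essentially all of the content lies in the forward direction (Borel measurable $+$ $\mu$-essentially separably valued $\Rightarrow$ a.e.\ limit of simple functions), while the converse is a routine closure argument. I would not invoke anything beyond the definitions recalled just above together with finiteness of $\mu$ and the Borel--Cantelli lemma.

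For the converse, suppose $f_n\to f$ $\mu$-a.e.\ with each $f_n$ simple. Each simple function is Borel measurable, being a finite sum of vectors scaled by indicators of measurable sets, so the pointwise limit is Borel measurable on the full-measure set where the $f_n$ converge; after modifying $f$ on the remaining null set this exhibits $f$ as (a.e.\ equal to) a Borel map. Moreover the countably many values taken by all the $f_n$ together span a separable closed subspace $Y\subseteq V$, and on the convergence set $f$ takes values in $Y$, so $f$ is $\mu$-essentially separably valued.

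For the forward direction I would first discard a null set so that $f$ takes values in a separable closed subspace $Y$, and fix a countable dense subset $\{y_k\}_{k\in\N}$ of $Y$. The first step is a nearest-centre rounding at scale $1/n$: since $x\mapsto\norm{f(x)-y_k}$ is measurable (the norm is continuous and $f$ is Borel), the sets $A_{n,k}=\set{x}{\norm{f(x)-y_k}<1/n}\setminus\bigcup_{j<k}A_{n,j}$ are measurable, and they partition $X$ because $\{y_k\}$ is dense in $Y$. Setting $g_n(x)=y_k$ for $x\in A_{n,k}$ gives $\norm{g_n(x)-f(x)}<1/n$ for every $x$, hence $g_n\to f$ pointwise; the only defect is that each $g_n$ is merely countably-valued rather than a genuine simple function.

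The crux, and the step I expect to be the main obstacle, is upgrading the $g_n$ to finitely-valued simple functions without destroying a.e.\ convergence. Here finiteness of $\mu$ is essential: since $\sum_k\mu(A_{n,k})=\mu(X)<\infty$, I can choose $N_n$ with $\mu\bigl(\bigcup_{k>N_n}A_{n,k}\bigr)<2^{-n}$ and define the simple function $f_n$ to equal $g_n$ on $\bigcup_{k\le N_n}A_{n,k}$ and $0$ on the small leftover. Because $\sum_n\mu\bigl(\bigcup_{k>N_n}A_{n,k}\bigr)<\infty$, the Borel--Cantelli lemma shows that $\mu$-almost every $x$ lies in the truncated tail for only finitely many $n$; for such $x$ one has $f_n(x)=g_n(x)$ eventually, and since $g_n(x)\to f(x)$ I conclude $f_n\to f$ $\mu$-a.e., which is exactly the required approximation by simple functions.
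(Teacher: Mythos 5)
Your proof is correct, but there is nothing in the paper to compare it against: the paper does not prove this theorem. It appears in the appendix as background material on Bochner integration and is simply quoted from Ryan's book (cited as Prop.\ 2.15 there), so the relevant comparison is with that cited source, and your argument is essentially the standard one given in such references. The converse via closure of Borel maps (into a metric space) under pointwise limits together with separability of the closed span of the countably many values of the $f_n$, and the forward direction via nearest-centre rounding over a countable dense subset of the essential separable range, followed by truncation to finitely many cells (using $\mu(X)<\infty$) and Borel--Cantelli to recover $\mu$-a.e.\ convergence, is exactly how this result is usually established; the alternative of choosing tails of measure $<1/n$ and extracting an a.e.-convergent subsequence from convergence in measure would also work, but your $2^{-n}$ choice is cleaner.

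One remark worth keeping: as literally stated, the ``if'' direction asserts that $f$ itself is Borel measurable, which can fail when the measure space is not complete, since $f$ is unconstrained on the exceptional null set where the $f_n$ do not converge. Your parenthetical ``(a.e.\ equal to) a Borel map'' is the honest conclusion there; this imprecision sits in the statement as quoted (a standard textbook gloss, harmless for the paper's use of the theorem), not in your argument.
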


A $\mu$-measurable function $f: X\to V$ will be called \emph{Bochner $\mu$-integrable} if the sequence of simple functions $(f_n)$ converging $\mu$-a.e. to $f$ also satisfies
\[
\lim_{n\to\infty} \int \norm{f-f_n}~d\mu =0
\] 
where the integral is the ordinary Lebesgue integral. The \emph{Bochner integral} of $f$ is the defined as the vector
\[
\int f~d\mu=\lim_{n\to\infty} \int f_n~d\mu
\]
Following the example of Lebesgue integration, we define \emph{Lebesgue-Bochner space} $\Lps[1](X,V,\mu)$ as the Banach space of (equivalence classes) of Bochner $\mu$-integrable functions $f: X\to V$ equipped with the norm
\[
\norm{f}_1=\int \norm{f}_V~d\mu
\]

\begin{theorem}[\cite{ryan2013introduction} Ex. 2.19]\label{Th:pte:L1}
$\Lps[1](X,\mu)\ptp V\simeq \Lps[1](X,V,\mu)$ the space of Bochner $\mu$-integrable maps $f: X\to V$.
\end{theorem}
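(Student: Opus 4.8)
The plan is to exhibit the natural linearising map, show it is norm-non-increasing (the easy half of the isometry), and then establish the reverse norm inequality on the algebraic tensor product; density and completeness then finish the job.

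First I would define the bilinear map $\beta:\Lps[1](X,\mu)\times V\to\Lps[1](X,V,\mu)$ sending $(g,v)$ to the function $x\mapsto g(x)v$. This is well-defined: the function is $\mu$-measurable (its range lies in the separable line $\R v$) and $\int\norm{g(x)v}_V\,d\mu=\norm{g}_1\norm{v}$, so it is Bochner $\mu$-integrable with $\Lps[1]$-norm $\norm{g}_1\norm{v}$. Hence $\beta$ is bounded bilinear of norm $1$, and by the universal property of $\ptp$ it linearises to a bounded operator $J:\Lps[1](X,\mu)\ptp V\to\Lps[1](X,V,\mu)$ with $\norm{J}\le 1$. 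For a finite tensor $u=\sum_i g_i\tp v_i$ the triangle inequality gives $\norm{J(u)}_1=\int\norm{\sum_i g_i(x)v_i}\,d\mu\le\sum_i\norm{g_i}_1\norm{v_i}$, and taking the infimum over representations yields $\norm{J(u)}_1\le\norm{u}_\pi$, which is one half of the claimed isometry.

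The heart of the argument is the reverse inequality $\norm{u}_\pi\le\norm{J(u)}_1$ for $u$ in the algebraic tensor product. Given $u=\sum_{i=1}^n g_i\tp v_i$, I would approximate each scalar coefficient $g_i$ in $\Lps[1](X,\mu)$ by a simple function $\tilde g_i$ supported on a common finite measurable partition $\{A_j\}$ (simple functions are dense in $\Lps[1]$, and one passes to a common refinement). Writing $\tilde g_i=\sum_j c_{ij}\one_{A_j}$ and regrouping, $\tilde u:=\sum_i\tilde g_i\tp v_i=\sum_j \one_{A_j}\tp w_j$ with $w_j=\sum_i c_{ij}v_i$. The crucial point is that the error is controlled \emph{in the projective norm}: $\norm{u-\tilde u}_\pi\le\sum_i\norm{g_i-\tilde g_i}_1\norm{v_i}$, which can be made arbitrarily small. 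Because the $A_j$ are disjoint, the displayed representation of $\tilde u$ gives $\norm{\tilde u}_\pi\le\sum_j\mu(A_j)\norm{w_j}=\norm{J(\tilde u)}_1$, and $\norm{J(\tilde u)}_1\le\norm{J(u)}_1+\norm{\tilde u-u}_\pi$ by the first half. Combining, $\norm{u}_\pi\le\norm{\tilde u}_\pi+\norm{u-\tilde u}_\pi\le\norm{J(u)}_1+2\norm{u-\tilde u}_\pi$, and letting the approximation improve gives $\norm{u}_\pi\le\norm{J(u)}_1$.

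With the equality $\norm{u}_\pi=\norm{J(u)}_1$ established on the dense algebraic tensor product, $J$ extends to an isometry on all of $\Lps[1](X,\mu)\ptp V$; in particular it is injective with closed range. Since the range contains every simple function $\sum_j\one_{A_j}w_j$, and these are dense in $\Lps[1](X,V,\mu)$ by the very definition of Bochner integrability, the range is all of $\Lps[1](X,V,\mu)$, so $J$ is the desired isometric isomorphism. I expect the main obstacle to be the reverse inequality: approximating the vector-valued function $J(u)$ directly only yields convergence in the weaker Bochner norm, so the decisive move is to approximate the scalar coefficients $g_i$ instead and reorganise into a disjointly supported simple tensor, for which the projective and Bochner norms coincide.
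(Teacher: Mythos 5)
The paper itself offers no proof of this statement: it is imported as background with a citation to Ryan (Ex.\ 2.19), so the only meaningful comparison is with the standard textbook argument --- which your proof essentially reconstructs, correctly. The linearisation $J$ of the bilinear map $(g,v)\mapsto g(\cdot)v$ with $\norm{J}\le 1$ gives the easy inequality; the decisive step, replacing the scalar coefficients $g_i$ by simple functions over a common finite partition and regrouping into a disjointly supported tensor $\sum_j \one_{A_j}\tp w_j$, is exactly where the two norms are forced to agree, since disjointness gives $\norm{\tilde u}_\pi\le\sum_j\mu(A_j)\norm{w_j}=\norm{J(\tilde u)}_1$ while $\norm{J}\le 1$ gives the opposite control; your $2\varepsilon$ bookkeeping closing the gap on the algebraic tensor product is sound, and the extension to the completion plus the surjectivity argument (integrable vector-valued simple functions lie in the closed range of the isometry and are dense in $\Lps[1](X,V,\mu)$ by the very definition of Bochner integrability) finishes it. Two small points are worth making explicit if you write this up: the approximating simple functions in $\Lps[1](X,\mu)$ should be taken with finite-measure supports so that the common refinement consists of sets of finite measure (only relevant when $\mu(X)=\infty$; in this paper's applications $\mu$ is finite), and the identity $\norm{u}_\pi=\norm{J(u)}_1$ passes to the completion because both sides are continuous along Cauchy sequences.
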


\subsubsection{\uline{Vector-valued measures}}
One can also characterise projective tensor products of the shape $\Meas(X)\ptp V$. For this we need a couple of definitions which generalise the notion of measure to Banach spaces. Given a measurable space $(X,\salg)$ and a Banach space $V$, a $V$-valued \emph{vector measure} on $\salg$ is a $\sigma$-additive map $\mu:\salg\to V$, where $\sigma$-additivity means that if $A_i$ is a countable sequence of pairwise disjoint measurable sets, then the series $\sum_i \mu(A_i)$ converges to $\mu(\bigcup_i A_i)$. The  \emph{variation norm} of a $V$-valued measure is defined exactly like in the scalar-valued case:
 \[
\norm{\mu}_1=\sup \set{\sum_{A\in\BB}\norm{\mu(A)}}{\text{$\BB$ is a measurable partition of $X$}}
\]
where the norm $\norm{\mu(A)}$ is of course taken in $V$. 

Vector-valued measures and Bochner integration are related by a Radon-Nikodym type construction which can briefly be described as follows. Given a scalar measure $\lambda\in\Meas(X,\salg)$ and a Bochner $\lambda$-integrable function $f: X\to V$, we can use $f$ as a `density' and define a $V$-valued measure $\mu$ in the obvious way by
\begin{equation}\label{eq:vectorDensity}
\mu(A)=\int_A f~d\lambda.
\end{equation}
The notion of absolute continuity for vector-valued measures is defined as follows: a $V$-valued measure $\mu$ on $(X,\salg)$ is absolutely continuous w.r.t. to a scalar measure $\lambda\in\Meas(X,\salg)$ is $\lambda(E)=0$ implies $\mu(E)=0$.\footnote{Note that $\mu(E)=0$ is the zero of $V$ whilst $\lambda(E)=0\in \R$.} It is now natural to ask whether the Radon-Nikodym Theorem generalises to vector-valued measures and Bochner integrable densities. The answer is usually negative (see \cite[Ex. 5.13]{ryan2013introduction} for an example), and we therefore start from \eqref{eq:vectorDensity} to justify the following \emph{definition}: a $V$-valued measure $\mu$ on $(X,\salg)$ has the \emph{Radon-Nikodym property} if it has bounded variation and if for every finite scalar measure $\lambda\in\Meas(X,\salg)$ there exists a Bochner $\lambda$-integrable function $f: X\to V$ such that \eqref{eq:vectorDensity} holds. We can now state a characterisation of tensor products of the shape $\Meas X\ptp V$.

\begin{theorem}[\cite{ryan2013introduction} Th. 5.22]\label{Th:pte:M}
$\Meas{X}\ptp Y$ is isometrically isomorphic to the Banach space of $Y$-valued measures with the Radon-Nikodym property together with the variation norm.
\end{theorem}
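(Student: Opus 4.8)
The plan is to exhibit the isometric isomorphism explicitly through a map $\Phi$ from $\Meas X\ptp Y$ to $Y$-valued measures, and to reduce the whole statement to the Lebesgue--Bochner case already settled in Theorem~\ref{Th:pte:L1}. First I would define $\Phi$ on elementary tensors by $\Phi(\nu\otimes y)(A)=\nu(A)\,y$ and extend by linearity and continuity: for $u=\sum_i \nu_i\otimes y_i$ with $\sum_i\norm{\nu_i}\norm{y_i}<\infty$, set $\Phi(u)(A)=\sum_i \nu_i(A)\,y_i$, the series converging absolutely in $Y$ because $\sum_i\absv{\nu_i(A)}\norm{y_i}\le\sum_i\norm{\nu_i}\norm{y_i}$. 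A direct estimate over measurable partitions $\BB$ of $X$ then gives $\norm{\Phi(u)}_1=\sup_{\BB}\sum_{A\in\BB}\norm{\sum_i\nu_i(A)y_i}\le\sum_i\norm{y_i}\absv{\nu_i}(X)=\sum_i\norm{\nu_i}\norm{y_i}$, and taking the infimum over representations shows $\Phi$ is well defined and norm-decreasing, i.e. $\norm{\Phi(u)}_1\le\norm{u}_\pi$.

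Next I would identify the image of $\Phi$ with the RNP measures and, along the way, obtain injectivity. For any $u$ as above, choose a finite measure $\lambda$ dominating all the $\nu_i$ (a suitable weighted sum of the $\absv{\nu_i}$); writing $\nu_i=g_i\lambda$ with $g_i\in\Lps[1](X,\lambda)$ yields $\Phi(u)(A)=\int_A f\,d\lambda$ with $f=\sum_i g_i y_i\in\Lps[1](X,Y,\lambda)$ (the sum converges since $\sum_i\norm{g_i}_1\norm{y_i}=\sum_i\norm{\nu_i}\norm{y_i}<\infty$), so $\Phi(u)$ has bounded variation and the Radon--Nikodym property. This same realization gives injectivity: under the isometry of Theorem~\ref{Th:pte:L1}, $u$ is the image of $f$ under $D_\lambda\otimes\mathrm{id}$, where $D_\lambda\colon\Lps[1](X,\lambda)\to\Meas X$, $g\mapsto g\lambda$, is the isometric inclusion onto $(\Meas X)_\lambda$ of Example~\ref{ex:Band}; if $\Phi(u)=0$ then $\int_A f\,d\lambda=0$ for all $A$, hence $f=0$ $\lambda$-a.e., hence $u=0$.

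For surjectivity and the reverse norm inequality I would start from an arbitrary $Y$-valued measure $\mu$ with the Radon--Nikodym property and set $\lambda=\absv{\mu}$, its finite variation. By the RNP there is a Bochner $\lambda$-integrable density $f$ with $\mu(A)=\int_A f\,d\lambda$, and the standard fact that the derivative of a vector measure with respect to its own variation has unit norm gives $\norm{f(x)}_Y=1$ for $\lambda$-almost every $x$. Via Theorem~\ref{Th:pte:L1}, $f$ corresponds to $\tilde f\in\Lps[1](X,\lambda)\ptp Y$ with $\norm{\tilde f}_\pi=\norm{f}_{\Lps[1](X,Y,\lambda)}=\int_X\norm{f}_Y\,d\lambda=\lambda(X)=\norm{\mu}_1$. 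Putting $u=(D_\lambda\otimes\mathrm{id})(\tilde f)$ gives $\Phi(u)=\mu$ and, since $D_\lambda\otimes\mathrm{id}$ has operator norm $\le 1$, also $\norm{u}_\pi\le\norm{\mu}_1=\norm{\Phi(u)}_1$. Combining this with the norm-decreasing bound of the first step, and using injectivity to identify $u$ with the unique preimage $\Phi^{-1}(\mu)$, yields $\norm{u}_\pi=\norm{\Phi(u)}_1$ for every $u$, so $\Phi$ is the desired surjective isometry.

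I expect the main obstacle to be the reverse inequality $\norm{u}_\pi\le\norm{\Phi(u)}_1$, and specifically the measure-theoretic lemma that the Radon--Nikodym derivative of $\mu$ with respect to $\absv{\mu}$ has norm one almost everywhere: this is exactly what converts the variation norm into the $\Lps[1]$-norm of the density and thereby makes the projective-tensor computation tight. The other point requiring care is that the plain projective tensor norm is not injective, so one cannot transport the isometric embedding $(\Meas X)_\lambda\hookrightarrow\Meas X$ directly after tensoring with $Y$; the argument above avoids this by computing $\norm{u}_\pi$ through the norm-one operator $D_\lambda\otimes\mathrm{id}$ rather than asserting that it is itself isometric.
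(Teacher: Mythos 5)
The paper offers no proof of this statement: it is imported verbatim from Ryan's book \cite{ryan2013introduction} (Th.~5.22), so there is no internal argument to compare against. Your proposal is a correct and essentially self-contained reconstruction of the textbook result, and it proceeds the way such results are proved in the literature: the map $\Phi(\nu\otimes y)(A)=\nu(A)\,y$ is the linearization of a bounded bilinear map into the Banach space of $Y$-valued measures with the variation norm, hence well defined and norm-decreasing; every $u\in\Meas X\ptp Y$ factors through $\Lps[1](X,\lambda)\ptp Y\simeq \Lps[1](X,Y,\lambda)$ (Theorem~\ref{Th:pte:L1}) for a finite $\lambda$ dominating the components of an absolutely convergent representation of $u$, which simultaneously puts $\Phi(u)$ in the Radon--Nikodym class and yields injectivity; and surjectivity together with the reverse norm inequality come from taking $\lambda=\absv{\mu}$ and transporting the density back through the norm-one operator $D_\lambda\otimes\mathrm{id}$. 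Your closing remarks correctly isolate the two genuinely delicate points: the unit-norm-density lemma, and the failure of injectivity of the projective norm, which you rightly sidestep by never claiming $D_\lambda\otimes\mathrm{id}$ is isometric.

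Two points deserve tightening, neither of which breaks the structure. First, the paper's definition of the Radon--Nikodym property quantifies over \emph{every} finite scalar measure $\lambda$; read literally this is vacuous (take $\lambda$ singular to $\mu$), and it must be read with the proviso $\mu\ll\lambda$. Under that reading, your second step only exhibits a density with respect to one particular dominating $\lambda$; to conclude that $\Phi(u)$ has the full property you need the standard upgrade: if $\mu=f\,d\lambda$ and $\mu\ll\lambda'$, Lebesgue-decompose $\lambda$ against $\lambda'$, discard the singular part (which carries no $\mu$-mass since $\mu\ll\lambda'$), and apply the scalar Radon--Nikodym theorem and the chain rule to get a Bochner $\lambda'$-integrable density. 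Second, the ``unit norm'' fact you invoke is exactly the identity $\absv{f\,d\lambda}(A)=\int_A\norm{f}_Y\,d\lambda$ for Bochner densities; this is where the bounded-variation hypothesis does its work and is what legitimizes $\norm{\tilde f}_\pi=\norm{f}_{\Lps[1](X,Y,\lambda)}=\norm{\mu}_1$, so it should be proved or cited explicitly rather than only named. With those two lemmas supplied, your chain of inequalities, and the use of injectivity to transfer the bound $\norm{u}_\pi\le\norm{\Phi(u)}_1$ from the constructed preimage to an arbitrary $u$, is sound.
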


\end{document}